
\documentclass{lmcs}
\usepackage[utf8]{inputenc}
\pdfoutput=1

\usepackage{lastpage}
\lmcsdoi{18}{2}{5}
\lmcsheading{}{\pageref{LastPage}}{}{}%
{Mar.~04,~2020}{Apr.~26,~2022}{}

\usepackage{microtype}


\usepackage{hyperref}
\theoremstyle{plain} 

\usepackage[autostyle]{csquotes}

\usepackage{bm}
\usepackage{amsmath}
\usepackage{url}
\usepackage{amsthm}
\usepackage{color}
\usepackage{graphicx}

\usepackage{amssymb}
\usepackage{cmll}
\usepackage{stmaryrd}
\usepackage{proof}
\usepackage{mathtools}
\usepackage[export]{adjustbox}
\usepackage{xspace}

\usepackage{subcaption}

\theoremstyle{plain}









\newtheorem*{theorem*}{Theorem}

%


\newcommand{\C}{\mathcal C}
\newcommand{\leftstar}{~{}^*\hspace{-2pt}}

\newcommand{\leftclub}{~{}_{\clubsuit}\hspace{-2pt}}

\newcommand{\mmu}{\mu}

\newcommand{\reddr}{\todo{\redd_{\heartsuit}}}
\newcommand{\redds}{\blue{\redd_{\clubsuit}}}

\newcommand{\st}{\mid}


\newcommand{\Norms}{\wwLim}



\newcommand{\Q}{\mathbf {Q}}

\newcommand{\PPb}{\PP_{\norma}}


\newcommand{\sth}{~\mid~}



\newcommand{\m}{\mathtt m}
\newcommand{\mm}{\mathtt m}

\newcommand{\mr}{\mathtt r}
\newcommand{\mrho}{\mathtt r}
\newcommand{\ms}{\mathtt s}

\newcommand{\tm}{\mathtt t}

\newcommand{\tmt}{\mathtt t}
\newcommand{\tms}{\mathtt s}
\newcommand{\tmu}{\mathtt u}
\newcommand{\tmr}{\mathtt r}

\renewcommand{\implies}{\Rightarrow}
\newcommand{\dsum}{+}%

\newcommand{\PLambda}{\Lambda_\oplus}
\newcommand{\Val}{\mathcal{V}}

\newcommand{\MeanTime}{\texttt{ETime}}
\newcommand{\seq}[1]{\langle#1_n\rangle_{n\in\Nat}}

\renewcommand{\flat}{\texttt{flat}}

\newcommand{\mset}[1]{[#1]}
\newcommand{\iI}{i\in I}

\newcommand{\kK}{k\in K}

\newcommand{\lam}{\lambda}


\newcommand{\midd}{\mid}

\newcommand{\ie}{\emph{i.e.}\xspace}
\newcommand{\eg}{\emph{e.g.}\xspace}
\newcommand{\ih}{\emph{i.h.}\xspace}

\newcommand{\sem}[1]{[\![#1]\!]}
\newcommand{\den}[1]{[\![#1]\!]}

\newcommand{\mdist}[1]{[#1]}

\renewcommand{\iff}{\emph{iff}\xspace}


\usepackage{xcolor}

\newcommand{\todo}[1]{{ \color{red}{#1}}}

\newcommand{\blue}[1]{{\color{blue}{#1}}}

\newcommand{\RED}[1]{{\color{red}{#1}}}



\renewcommand{\AA}{C}

\newcommand{\PP}{\mathbf{P}}
\newcommand{\QQ}{\mathbf{Q}}


\newcommand{\zero}{\bm 0}
\newcommand{\PARS}{\textbf{\textsf{pars}}\xspace}
\newcommand{\pars}{\PARS}
\newcommand{\AST}{\texttt{AST}\xspace}
\newcommand{\WN}{\texttt{WN}\xspace}
\newcommand{\SN}{\texttt{SN}\xspace}
\newcommand{\UN}{\texttt{UN}\xspace}

\newcommand{\atlim}{\mathtt{\infty}}

\newcommand{\WCRlim}{$\mathtt{WCR^{\atlim}}$\xspace}
\newcommand{\UNlim}{$\mathtt{UN^{\atlim}}$\xspace}

\newcommand{\WNlim}{$\mathtt{WN^{\infty}}$\xspace}
\newcommand{\SNlim}{$\mathtt{SN^{\infty}}$\xspace}

\newcommand{\LimP}{$\mathtt{LIM}$\xspace}

\newcommand{\Lim}{\mathtt{Lim}}

\newcommand{\tolim}{\xRightarrow{\infty}}

\newcommand{\pr}{\norma}

\newcommand{\pLim}{\Lim_{\pr}}
\newcommand{\tolimp}{\tolim{}_{\mkern-12mu\pr}}

\newcommand{\toinf}{\red\conv{\obs}}
\newcommand{\tocpo}{\red\conv{\obs}}



\newcommand{\DST}[1]{\mathsf{Dst}(#1)}
\newcommand{\DSTNF}{\mathsf{Dst}(\NFA)}
\newcommand{\DSTF}[1]{\mathsf{Dst^{F}}(#1)}

\newcommand{\MDST}[1]{\mathsf{m}#1}
\newcommand{\MA}{\MDST{A}}

\newcommand{\supp}[1]{\mathtt{Supp}(#1)}

\newcommand{\A}{\mathcal{A}}

\newcommand{\red}{\rightarrow}

\newcommand{\len}{\texttt{length}}

\newcommand{\redd}{\mathrel{\rightrightarrows}}
\newcommand{\Red}{\redd }

\newcommand{\N}{\scalebox{.6}[1.0]{NF}}

\newcommand{\nnorm}[1]{\norm{\nf{#1}}}

\newcommand{\snf}{{\scriptscriptstyle{\textsf {NF}}}}

\newcommand{\nf}[1]{#1^{\nfsym}}
\newcommand{\nfsym}{ \scriptscriptstyle{\textsf {NF}}  }

\newcommand{\Term}[1]{\N_{#1}}

\renewcommand{\S}{\mathcal{S}}
\newcommand{\T}{\mathcal{T}}

\newcommand{\muflat}{{ {\m^{\mathsf{dst}}}}}


\newcommand{\norm}[1]{\|#1\|}

\newcommand{\norma}{\scriptscriptstyle{\|\|  } }  

\newcommand{\rel}[2]{\w{#1} \geq  \w{#2}}
\newcommand{\relnf}[2]{#1 \geq_{\snf} #2}
\newcommand{\relnorm}[2]{#1 \geq_{\norma} #2}

\newcommand{\eqflat}[2]{#1 =_{\flat} #2}
\newcommand{\eqnf}[2]{#1 =_{\snf} #2}
\newcommand{\eqnorm}[2]{ #1 =_{\norma} #2}

\newcommand{\LEB}{local $\obs$-RD\xspace}
\newcommand{\EB}{$\obs$-RD\xspace}

\newcommand{\wRD}{$\obs$-RD\xspace}

\newcommand{\LD}{\obs\mbox{-LB}}
\newcommand{\LB}{\obs\mbox{-LB}}
\newcommand{\B}{\obs\mbox{-better}}



\newcommand{\Nat}{\mathbb{N}}
\newcommand{\Real}{\mathbb{R}}
\newcommand{\Set}{\mathbb{S}}

\newcommand{\true}{{\texttt T}}
\newcommand{\false}{{\texttt F}}

\renewcommand{\obs}{\mathtt{obs}}

\newcommand{\cpo}[1]{\bm #1}

\newcommand{\cpor}{{\cpo a}}
\newcommand{\cpos}{{\cpo b}}
\newcommand{\cpou}{{\cpo c}}

\mathchardef\mhyphen="2D 
\newcommand{\xLim}[1]{\Lim_{#1}}
\newcommand{\wLim}{\xLim{\obs}}
\newcommand{\wwLim}{\xLim{\weight}}

\renewcommand{\conv}[1]{\mkern-2mu{}^\infty_{#1}~}

\newcommand{\tolimO}[1]{#1 \conv{\obs}}
\newcommand{\tolimww}[1]{#1 \conv{\weight}}



\newcommand{\ex}{\textsc{s}}

\newcommand{\ered}{\uset{\ex~}{\red}}

\newcommand{\w}[1]{ \obs(#1)}
\newcommand{\weight}{\mathtt {w}}
\newcommand{\ww}[1]{\weight(#1)}

\newcommand{\redr}{\todo{\red_{\heartsuit}}}
\newcommand{\reds}{\blue{\red_{\clubsuit}}}

\newcommand{\NFA}{\N_{\A}}

\newcommand{\refsec}[1]{Section~\ref{sec:#1}}
\newcommand{\reflem}[1]{Lemma~\ref{l:#1}}
\newcommand{\refex}[1]{Ex.~\ref{ex:#1}}
\newcommand{\refthm}[1]{Thm.~\ref{thm:#1}}

\newcommand{\refprop}[1]{Prop.~\ref{prop:#1}}
\newcommand{\refdef}[1]{Def.~\ref{def:#1}}


\newcommand{\head}{\mathsf{h}}
\newcommand{\weak}{\mathsf{w} }

\newcommand{\hred}{\uset{\head}{\red}}
\newcommand{\wred}{\uset{\weak}{\red}}

\makeatletter
\newcommand{\uset}[3][0ex]{%
	\mathrel{\mathop{#3}\limits_{
			\vbox to#1{\kern-6\ex@
				\hbox{$\scriptstyle#2$}\vss}}}}
\makeatother

\usepackage[normalem]{ulem}

\usepackage{forest}


\newcommand{\ropen}[1]{{[#1[}} 
\newcommand{\lopen}[1]{{]#1]}} 

%
\begin{document}

\title[Probabilistic  Rewriting and Asymptotic Behaviour]{Probabilistic  Rewriting and Asymptotic Behaviour:\texorpdfstring{\\}{}  on   Termination and Unique Normal Forms}

\author[C.~Faggian]{Claudia Faggian}	
\address{IRIF, CNRS, Universit\'e de Paris, F-75013 Paris, France}	


\begin{abstract}

	 While a mature body of work supports the study of rewriting systems,  abstract tools for Probabilistic Rewriting are still limited.
	 In this paper we study  the question of  \emph{uniqueness of the result} (unique limit distribution),  and develop a set of  proof techniques	  to analyze and compare \emph{reduction strategies}. The goal is to have  tools to support the \emph{operational} analysis of  \emph{probabilistic} calculi (such as probabilistic lambda-calculi) where evaluation
	 allows for different  reduction choices (hence different reduction paths).

\end{abstract}

\maketitle

\section{Introduction}
\emph{Rewriting Theory}~\cite{Terese03} is a foundational theory of computing. Its
impact extends to both  the theoretical side of computer science,  and  the  development of programming languages. A clear example of both aspects is the paradigmatic term rewriting system, $\lambda$-calculus, which is also the  foundation of functional programming.
\emph{Abstract Rewriting Systems (ARS)} are the general theory which  captures the common substratum of rewriting theory,
independently of the particular structure of the objects. It studies
 properties of terms transformations, such as normalization, termination,  unique normal form,
and the relations among them. Such results are a powerful set of tools which can be used when we study the computational and operational properties  of any calculus or  programming language.
Furthermore, the theory  provides tools to study and compare strategies, which become extremely important when a system \emph{may} have reductions leading to a normal form, but  \emph{not necessarily}. Here we need  to know: is there a  strategy which is guaranteed to lead to a normal form, if any exists (\emph{normalizing} strategies)?
{Which strategies  diverge if at all possible (\emph{perpetual}  strategies)? }

\emph{Probabilistic Computation}  models  uncertainty.
Probabilistic forms of automata~\cite{Rabin63}, Turing
machines~\cite{Santos69}, and the $\lambda$-calculus~\cite{Saheb-Djahromi78} exist since long.
The pervasive role it is assuming in areas as diverse as robotics, machine learning, natural language processing, has stimulated the research on  probabilistic programming languages, including functional languages~\cite{KollerMP97, RamseyP02,ParkPT05} whose development is increasingly active.
 A typical programming language supports at least discrete distributions by  providing  a probabilistic construct which models sampling from a distribution. This is also the most concrete way to endow the $\lambda$-calculus with probabilistic choice~\cite{DiPierroHW05,LagoZ12,EhrhardPT11}.
Within the vast   research on models of probabilistic systems,  we wish to  mention  that
probabilistic rewriting is the explicit base of PMaude~\cite{AghaMS06}, a  language for specifying
probabilistic concurrent systems.

\emph{Probabilistic Rewriting.}
Somehow surprisingly, while a large and mature body of work supports the study of rewriting systems---even infinitary  ones~\cite{DershowitzKP91,KennawayKSV95}---work on the abstract theory of \emph{probabilistic} rewriting systems is still sparse. The notion of \emph{Probabilistic} Abstract Reduction Systems (PARS) has been introduced by Bournez and Kirchner  in~\cite{BournezK02}, and then extended in~\cite{BournezG05} to account for non-determinism. Recent work~\cite{popl,DiazMartinez17,Kirkeby, Avanzini} shows an increased  research interest.
 The key element in \emph{probabilistic} rewriting is that even when
  the probability that a term leads to a normal form is $1$ (\emph{almost sure termination}, \AST), that degree of certitude
   is typically not reached in any finite number of steps, but it appears as a limit. Think of a rewrite  rule (as in Fig.~\ref{fig:AST}) which rewrites  $c$ to either the value $\true$ or $c$, with equal probability $1/2$. We write this as  $c\red\{c^{1/2},\true^{1/2}\}$.
    After $n$ steps, $c$ reduces to $\true$ with probability
   $\frac{1}{2} + \frac{1}{2^2} + \cdots + \frac{1}{2^n}$. 
   {Only at the limit} this computation terminates with probability $1$.

The most well-developed  literature on PARS  is   concerned with methods to prove almost sure termination, see e.g.~\cite{ BournezG05, FioritiH15, Huang0CG19, Avanzini} (this interest matches the fact that there is a  growing body of methods to establish \AST~\cite{AgrawalC018,FuC19, KaminskiKMO18, McIverMKK18, LagoFR21}).
However, considering rewrite rules subject to probabilities opens  numerous other questions, which   motivate our  investigation.

We study  a rewrite relation  which describes  the global evolution of a probabilistic system, for example  a probabilistic program $P$.
The \emph{result} of the computation  is  a probability distribution  $\beta$ over all the possible output  of $P$.
The intuition  (see~\cite{KollerMP97}) is that the program $P$ is executed, and  random choices are made by sampling. This process  defines a distribution $\beta$ over the various outputs that the program can produce. We  write this $P\tolim \beta$.

What happens if the
\emph{evaluation} of a term $P$ is  \emph{not deterministic}, in the sense that different reduction choices are available? Remember that non-determinism arises naturally in the $\lambda$-calculus, because  a term may have several redexes. This aspect has practical  relevance to programming.  
Together with the fact that the result of a terminating computation is unique, (independently from the evaluation choices),
it  is   key to the inherent parallelism of functional programs (see \eg~\cite{Marlow}).

Assume program  $P$ generates  a  distribution over  booleans
$ \{\true^\frac{1}{16},  \false^\frac{15}{16}\}$;
it is
  desirable  that   the distribution which is computed is unique: it \emph{only depends on the ``input'' (the problem)}, not on the way the computational steps are performed.

When assuming non-deterministic evaluation, several  questions on  PARS arise naturally.
For example: (1.) when---and in which sense---is the result unique?  (naively, if $P\tolim \alpha$ and $P\tolim  \beta$, is $\alpha = \beta$?)
	(2.) Do all rewrite sequences from the same term have the same probability to reach a result?
		 (3.) If not, does there exist a strategy to find a result with greatest probability?

Such questions are relevant  to the theory and  to the practice of computing.
We believe that to study them, we can advantageously  adapt  techniques from  Rewrite Theory.
However, we \emph{cannot assume that standard properties of  ARSs hold for PARSs.} The game-changer  is that termination  appears as a \emph{limit}.  In Section~\ref{proof_techniques} we show that a well-known ARSs property,
 Newman's  Lemma, does not hold for PARSs.
 This is not surprising;  indeed, Newman's  Lemma  is known not to hold in general for infinitary rewriting~\cite{Kennaway92,KlopV05}. Still, our  counter-example points out that moving from ARS to PARS is non-trivial.
There are
 two main issues: we need to find the \emph{right formulation} and the \emph{right proof technique}. It seems then  especially important to have a collection of proof methods which apply well to PARS\@.

\paragraph{Content and contributions.}
Probability is concerned with \emph{asymptotic} behaviour:  what
happens not after a finite number $ n $ of steps, but \emph{when $ n $ tends to
infinity}. In this paper we focus  on   the  asymptotic behaviour of rewrite sequences \emph{with respect to normal forms}---normal form being the most standard notion of result in rewriting.
We  study  computational  properties such as (1.), (2.), (3.) above.
We do so with the point of view of  ARSs, aiming for  properties
which \emph{hold  independently} of the specific nature of
the rewritten objects; the purpose is to have   tools which apply to any probabilistic rewriting system.

\subparagraph{PARS\@.}
After  motivating and introducing our  formalism for PARSs (Section~\ref{sec:pars} and~\ref{sec:formalism}),
in Section~\ref{sec:asymptotic} we formalize  the notion of  limit distribution, and of well-defined result.
Since  in a PARS each term   has different possible
reduction sequences (with each sequence leading to a possibly different limit distribution), to each term  is naturally associated \emph{a set of  limit distributions}.
To study when  a PARS  has a well-defined result is the main focus of the paper.

Recall  a  property which is crucial to the computational interpretation   of a system such as the $\lam$-calculus: if a term has a normal
form, it is unique---meaning that the result of the computation is \emph{well-defined}.
With this in mind, we
investigate in the probabilistic setting an analogue of the ARS notions  of  \emph{Unique Normal Form (\UN)}, and the possibility or necessity to reach a result:  \emph{Normalization (\WN),  Termination (\SN)}.  We  provide methods and {criteria} to establish these properties, and we {uncover  relations} between them.
Specific contributions  are the following.

\begin{itemize}
	\item We  propose  an analogue of \UN\  for PARS\@.   The question was already studied in~\cite{DiazMartinez17} for PARS which are almost surely terminating, but the solution there does not extend to the  general case.

	\item  We investigate the classical ARS method to prove \UN\ via   \emph{confluence}; we uncover that  subtle aspects appear when dealing with a notion
	of result as  a limit. We do prove   an analogue of ``confluence implies \UN'' for PARS---however  the proof is not simply an adaptation of the standard techniques, due to the fact that  the set of limit distributions is---in general---infinite, and it is not guaranteed to have maximal elements (think of $ \ropen{0,1} $ which has a sup, but not a max). 

\end{itemize}

\subparagraph{Asymptotic rewriting: QARS\@.}
To better understand the \emph{asymptotic} behaviour of computation,  in Section~\ref{sec:QARS} we introduce the  setting of  \emph{Quantitative Abstract Rewrite System} (QARS). While motivated from the analysis of probabilistic rewriting, QARSs
abstract  from the probabilistic structure. This allows us
 to capture   the essence of the  arguments, and to
 separate the properties which really depend on  probability (and its specific properties) from those which are only concerned with the fact that  results are limits.

QARS are  a natural refinement of the notion of Abstract Rewrite Systems with Information content (ARSI), introduced by Ariola and Blom~\cite{AriolaBlom02}. There, to the ARS   is
associated  a partial order that expresses the \emph{information content} of the elements. We adopt the same view.
ARSI however have a notion of limit which is  tailored   to  infinite normal forms in the sense of   B{\"o}hm trees~\cite{Barendregt} and Levy-Longo trees~\cite{LevyPhD}. With QARS, we simply move from partial orders (and a specific definition of limit), to \emph{$\omega$-complete partial orders}---this is enough  to capture also probabilistic  computation. 

First, we study the properties of  limits.
Then, we   provide \emph{a  set of proof techniques} to support the \emph{asymptotic}  analysis of  \emph{reduction strategies}.
To do so, we  extend to our setting a method which was introduced for ARSs  by  Van Oostrom~\cite{Oostrom07}, and which   is  based on Newman's property of Random Descent (RD)~\cite{Newman,Oostrom07,OostromT16}  (see Section~\ref{background}). The Random Descent method turns out to be well-suited to asymptotic and probabilistic rewriting, providing
a  useful \emph{family of tools}.
In analogy to their counterpart in~\cite{Oostrom07}, we   generalize in a quantitative way the  notions of Random Descent  (which becomes \EB) and of  being \emph{better}  (which become $ \B $);
both properties are here parametric with respect to the information content which we wish to observe.

A significant   technical feature (inherited from~\cite{Oostrom07}) is that both notions of \EB\ and $\B$ come with a characterization via a {\emph{local
		condition}}, in the sense that only single steps from an object---rather than all possible  sequences of steps---need to be examined.

\subparagraph{Probabilistic rewriting: tools and applications.}
In Sections~\ref{sec:RD_pars} and~\ref{sec:comparing_pars} we specialize the Random Descent techniques to PARS\@.
\begin{itemize}
	\item  \wRD  entails that  all rewrite sequences from a term lead to the \emph{same result}, in the \emph{same expected number of steps} (the average of number of steps, weighted w.r.t.\ probability).
	\item $ \B $   offers   a method to compare strategies (``strategy $\S$ is always better than strategy $\T$'')
	w.r.t.\ the \emph{probability} of reaching a result and the \emph{expected time} to reach a result.
	It provides a sufficient criterion to establish that a strategy is \emph{normalizing} (resp. \emph{perpetual}) \ie\ the strategy is guaranteed to lead to a result with maximal (resp.\ minimal) probability.

\end{itemize}

\noindent
To illustrate their use, we apply these methods to     a   probabilistic $\lambda$-calculus---Weak  Call-by-Value $\lambda$-calculus---which is discussed in Section~\ref{sec:weak}.
A larger  example of application   to probabilistic $\lambda$-calculi  is~\cite{FaggianRonchi}, whose developments  rely also  on the abstract results presented here;   we illustrate  this   in  \refsec{large}.

\begin{rem}[On the term \emph{Random} Descent] Please note that in~\cite{Newman}, the term \emph{Random} refers to  non-determinism (in the choice of the redex), \emph{not to  randomized}  choice.
\end{rem}

\paragraph{Journal vs conference version.} This paper is the journal version of~\cite{Faggian19}. The content has been considerably  extended. In particular, we develop the setting of  QARS (Section~(\ref{sec:QARS})), which  formalizes the notion of asymptotic rewriting, and does not appear in~\cite{Faggian19}. This allows us to
separate the properties which really depend on  probability from those which are concerned with results as limits, cleaning the arguments from unnecessary structure. The study of limits in both probabilistic and non-probabilistic setting is unified to a more general theory.
The results obtained for QARS can be transferred to ARS and PARS alike, but also to other frameworks where reduction is asymptotic.

\subsection{Motivations and Background}
\subsubsection{Probabilistic $\lambda$-calculus, non-deterministic evaluation,\\ and (non-)Unique Result}\label{sec:motivation} 
Rewrite theory provides numerous tools to study uniqueness of normal forms, as well as techniques to
study and compare strategies.
This is not the case in the probabilistic setting. Perhaps a reason is that
when extending the $\lambda$-calculus with a choice operator, confluence is lost, as was observed early~\cite{deLiguoroP95};
we illustrate  it in Example~\ref{ex:motivation} and~\ref{ex:motivation2}, which is adapted from~\cite{deLiguoroP95, LagoZ12}.
The {way to deal with this issue}  in probabilistic $\lambda$-calculi  (e.g.~\cite{DiPierroHW05,LagoZ12, EhrhardPT11}) has been to  \emph{fix
	a deterministic reduction  strategy}, typically ``leftmost-outermost''.
To fix a deterministic strategy  is not satisfactory,
neither for the   theory nor the   practice of computing. To understand why this matters, recall for example that confluence of the $\lambda$-calculus is what makes
functional programs  inherently parallel:
every sub-expression can be evaluated in parallel,
still, we can reason on a program using a deterministic sequential model,
because the result of the computation is independent of  the evaluation order (we refer to~\cite{Marlow}, and to  Harper's text ``Parallelism is not Concurrency'' for discussion on  \emph{deterministic} parallelism, and how it differs from concurrency).
Let us see what happens in the probabilistic case.
\newcommand{\xor}{\mathtt{~XOR~}}
\newcommand{\two}{\frac{1}{2}}
\begin{exa}[Confluence failure]\label{ex:motivation}
	Let us consider  the  untyped $\lambda$-calculus extended with a binary operator $\oplus$ which models probabilistic choice.
	Here $\oplus$ is  just flipping a fair coin:
	$M\oplus N$  reduces to either $M$ or $N$ with equal probability $1/2$; we write this
	as $M\oplus N \red \{M^{\frac{1}{2}},N^{\frac{1}{2}}\}$.

	Consider the term $PQ$, where $P=(\lam x. x)(\lam x.x \xor x)$ and $Q=(\true \oplus \false)$; here $\xor$ is the  standard constructs for the exclusive $\mathtt{OR}$,
	$\true$ and $\false$ are terms which encode the booleans. 
	\begin{itemize}
		\item If we evaluate $P$ and $Q$ independently, from $P$ we obtain $\lam x.(x \xor x)$, while from $Q$ we have either $\true$ or $\false$, with equal probability $1/2$.
		By composing the partial results, we obtain $\{(\true \xor \true)^{\two}, (\false \xor \false)^{\two} \}$, and therefore   $\{\false^1\}$.

		\item If we evaluate $PQ$ sequentially, in a standard leftmost-outermost fashion, $PQ$ reduces to $(\lam x.x \xor x)Q$ which reduces to
		$(\true \oplus \false) \xor (\true\oplus \false)$ and eventually  to $\{\true^\two, \false^{\two}\}$.
	\end{itemize}
\end{exa}
\begin{exa}\label{ex:motivation2}
	The situation  becomes even more complex if we examine also the possibility of diverging; try the same experiment  on the term $PR$, with $P$ as above, and  $R=(\true\oplus \false)\oplus \Delta\Delta$ (where $\Delta=\lam x.xx$). Proceeding as before,
	we now obtain either $\{\false^{\frac{1}{2}}\}$ or
	$\{\true^{\frac{1}{8}},\false^{\frac{1}{8}}\}$.
\end{exa}

We do not need to loose the  features of $\lambda$-calculus in the \emph{probabilistic} setting.
In fact, while some care is needed,
determinism of the evaluation \emph{can be relaxed} without giving up uniqueness of the result: the calculus we introduce in Section~\ref{sec:weak} is an example (we relax determinism to Random Descent); we fully develop this direction in further  work~\cite{FaggianRonchi}.
To be able to do so,  we \emph{need abstract
	tools and proof techniques} to analyze \emph{probabilistic} rewriting.
The same need for theoretical tools holds, more in general, whenever we desire to have a probabilistic  language which
allows for \emph{deterministic parallel reduction}. 

In this paper we focus  on \emph{uniqueness of the result}, rather than  confluence. While important, confluence is a sufficient but  {not necessary}  property to have uniqueness of normal forms.

\subsubsection{Other key notions}\label{sec:lambdaintro}\label{background}

 \paragraph{Confluence is  not enough.}
Key to non-deterministic evaluation strategies is that, despite the fact that there are many ways of evaluating
a term,  \emph{all choices eventually yield the same result}.
To this aim, confluence is not enough.
The reduction of a term that has a normal form may still produce  diverging computations, which yield  \emph{no result } (think of $\beta$-reduction in usual $\lam$-calculus, reducing the term $(\lam x.z)(\Delta\Delta)$).
What we really want for a non-deterministic evaluation strategy 
is that all  reduction sequences from the same   $\tm$ have \emph{the same behaviour}: if  $\tm$ has a  normal form, then
\emph{all} reduction sequences from $\tm$ eventually reach it (uniform normalization);
ideally, all should do so
in the \emph{same number of steps}. {This latter  property  is known as
Random Descent~\cite{Newman, Oostrom07,OostromT16}, and it is often  guaranteed in the literature of $\lambda$-calculus via a diamond-like property.  We will lift these notions  to the probabilistic and asymptotic setting.}

\paragraph{Random Descent.}
Newman's  Random Descent   (RD)~\cite{Newman} is an ARS property which
guarantees that  normalization suffices to establish
both termination and uniqueness of normal forms. Precisely,
if an ARS has random descent, paths  to a normal form do not need  to be unique, but they have \emph{unique length}.
In its essence:
\emph{if a normal form exists, all rewrite sequences  lead to it, and all  have the \emph{same length}\footnote{Or, in   Newman's original terminology: the end-form is reached by \emph{random descent}
		(whenever $x\red^k y$ and $x\red^n u$  with $u$  in normal form, all maximal reductions from $y$ have length  $n-k$ and end in $u$).}}.
While only  few systems directly verify it,  RD is a powerful ARS tool;
a  {typical use}  in the  literature  is to prove that \emph{a strategy} has RD, to conclude that it is \emph{normalizing}.  A well-known property which implies RD is  a  form of diamond: $\leftarrow\cdot \rightarrow ~\subseteq~ (\rightarrow \cdot \leftarrow)  ~\cup ~=$.

 Von Oostrom~\cite{Oostrom07} has  defined  a  characterization of RD by means of a \emph{local}
property, proposing  RD as a uniform method to (locally) compare strategies for normalization and minimality (resp.\ perpetuality and maximality).   Such a  method has then been extended in~\cite{OostromT16}, where the notion of length is abstracted into a notion of measure.
In  Section~\ref{sec:RD_pars} and~\ref{sec:comparing_pars} we develop  similar methods in a \emph{probabilistic} setting.
The  probabilistic analogous of  \emph{length}, is the  \emph{expected number of steps} (Section~\ref{sec:meantime}).

\paragraph{Weak Call-by-Value   $\lambda$-calculus (and its probabilistic counter-part).}
A notable example of system which satisfies  Random Descent is
   Call-by-Value (CbV) $\lambda$-calculus  endowed with    weak evaluation.

 In Plotkin's Call-by-Value $\lam$-calculus, $\beta$-redexes are  fired  only when the argument is a \emph{value} (\ie, a variable or  a $\lambda$-abstraction).  Since the goal is to compute \emph{values}---as is natural in functional programming---evaluation is often restricted to be  \emph{weak}~\cite{Howard,Hindley},
 where weak means no reduction  in the function bodies (\ie within the scope of $\lambda$-abstractions). Weak CbV  is the basis of the  ML/CAML family of functional  languages---and of most probabilistic functional languages.
 There are  three main weak schemes: reducing from left to right, as originally defined by Plotkin~\cite{PlotkinCbV}, from right to left, as in  Leroy's ZINC abstract machine~\cite{Leroy-ZINC} (resulting in a more efficient implementation),
 or in an \emph{arbitrary order}, used for example  in~\cite{LagoM08}.
 While left and right reduction are  deterministic, 
 weak reduction  in arbitrary order is \emph{non-deterministic} and \emph{subsumes} both.

 If we consider  programs (closed terms), values are exactly the normal forms of weak reduction.
Because it satisfies Random Descent,
 CbV weak reduction $\wred$ has  \emph{striking properties} (see e.g.~\cite{LagoM08} for an account).
First,  if $M$ reduces to a value ($M\wred^* V$), then  \emph{any}
sequence of $\wred$-steps from $M$ will reach $V$; second, the number $n$ of steps such that
$M\wred^n V$ is always the same.

In  Section~\ref{sec:weak}, we study a probabilistic extension of  weak CbV, $\Lambda_\oplus^\weak$. We  show that it has  analogous properties to its classical counterpart:
all  rewrite  sequences converge to the same result, in  the same \emph{expected} number of steps.

\paragraph{Local vs global conditions.}\label{sec:local}
An important distinction in rewriting theory is between local
and global properties.
A property of a term $t$   is global if it is quantified over all rewrite sequences from $t$, it is
local if it is quantified  only over \emph{one-step reductions} from the term. Local properties are easier to test, because the analysis (usually) involves a finite number of cases.
To work  locally---that is, reducing a test problem which is  global  to local properties---dramatically  reduces the space of search when  testing. Let us exemplify this with a familiar example.
%
%

A  paradigmatic example of global property is confluence (CR): $b  \leftstar\leftarrow  a \rightarrow^* c$  $\implies$ $\exists d$ s.t.
$b\rightarrow^* d \leftstar \leftarrow c$.
Its   global nature makes it
difficult  to establish.
A standard way  to factorize the problem is: (1.)  prove termination and (2.) prove  \emph{local} confluence (WCR): $b \leftarrow  a \rightarrow c$ $\implies$ $\exists d$ s.t.
$b\rightarrow^* d \leftstar \leftarrow c$.
This is  exactly
\emph{Newman's lemma:~~Termination + WCR $\implies$ CR\@.}
The beauty  of  Newman's lemma is that  a global property (CR)  is guaranteed by a local property (WCR).

Locality is also the strength and beauty of the Random Descent method. While Newman's lemma fails in a probabilistic setting, Random Descent methods adapt well.

\subsection{Related work}

First, let us observe that there is a vast
literature on probabilistic transition systems,
however   objectives and therefore  questions and  tools  are different than those of PARS\@. A similar distinction exist  between abstract rewrite systems  and  transition systems.
Here we  discuss related work in the context of  PARS~\cite{BournezG05,BournezK02}.

We are not aware of any work which investigates    \emph{normalizing strategies} (or   \emph{normalization} in general,  rather than   termination).
Instead, \emph{confluence} in probabilistic rewriting has already drawn   interesting work.  A notion of {confluence}  for a  probabilistic rewrite system defined over
a $\lambda$-calculus is studied in~\cite{DiazArrighiGG,DalLagoMZ11}; in both cases, the probabilistic behaviour  corresponds to  measurement in a quantum system. 
The work more closely related to our goals is~\cite{DiazMartinez17}. It studies confluence of non-deterministic PARS   in the case of {finitary termination}
(being finitary is the reason why   Newman's Lemma holds), and in the  case of  \AST\@. As we observe in Section~\ref{sec:unique},
their notion of unique limit distribution
(if $\alpha, \beta$ are limits, then $\alpha =\beta$), while \emph{simple}, it is \emph{not} an analogue of \UN\ for \emph{general} PARS\@.
We extend the analysis beyond \AST, to the general case, which arises naturally when  considering \emph{untyped} probabilistic $\lambda$-calculus.
On confluence, we also mention~\cite{Kirkeby}, whose results however do not cover  \emph{non-deterministic PARS};  the probability of the  limit distribution is  concentrated in  a single element, in the spirit of Las Vegas Algorithms.~\cite{Kirkeby}
revisits results from~\cite{BournezK02}, while we are  in the non-deterministic  framework of~\cite{BournezG05}.

The way we define the \emph{evolution of a PARS}, via the one-step relation $\redd$,  follows the   approach
in~\cite{popl}, which also  contains
an embryo of the current work (a form of diamond property);
the other results and developments are novel.
A technical difference with~\cite{popl} is that for the formalism  to be general, a refinement  is  necessary
(see Section~\ref{sec:evolution});
the issue was  first pointed out in~\cite{DiazMartinez17}.
Our refinement is  a variant
of the one introduced (for the same reasons) in~\cite{Avanzini}; there, normal forms are discarded---because the authors are only interested in the probability of termination---while we are interested in a more  qualitative analysis of the result.~\cite{Avanzini} demonstrates the  equivalence with the approach in~\cite{BournezG05}.

\bigskip
\emph{Quantitative Abstract Rewrite Systems (QARS)} refine Ariola and Blom's notion of \emph{Abstract Rewrite Systems with Information content} (ARSI)~\cite{AriolaBlom02}; there, to the  ARS is associate a \emph{partial order} which expresses a
comparison between the ``information content'' of the elements. Here, we simply move  from partial orders to $\omega$-complete partial orders ($\omega$-cpo).
The difference is in the notion of limit,  hence its properties, and our novel contribution is the study of such properties.
ARSI  are tailored   to  infinite normal forms in the sense of   B{\"o}hm and Levy-Longo trees---limits (infinite normal forms) are there given by completing the partial order
 via a specific standard construction, ideal completion  (see for instance Ch.~1 in~\cite{AmadioCurien}). So, given an element $\tmt$ in an ARSI,  the   infinite normal form of $\tmt$ is the downward closure of
 the set of the information contents of all  its reducts.
Such an approach  would not suit probability distributions, but moving to $\omega$-cpo suffices.  Being simply the supremum of an $\omega$-chain, the  notion of limit which come with  QARS is more general\footnote{{Notice that the \emph{ideal completion} of a partial order is in particular an $\omega$-cpo.}} and flexible, allowing us to model a larger variety of situations.  All results we establish for limits in the setting of QARS also hold  for the infinite normal forms of ARSI, while the converse is not true.
 In Appendix~\ref{app:ARSI} we give a concrete example that shows  the difference:  a confluent ARSI has unique infinite normal forms (Theorem 5.4 there)---the analogue result is (in general) not true for QARS\@.

\section{Probabilistic Abstract Rewriting System}\label{sect:distrpars}\label{sec:pars}
We assume the reader familiar with the basic notions of rewrite theory (such as Ch.~1 of~\cite{Terese03}), and of   \emph{discrete} probability theory. 
We  review  the basic language of both.
We then   recall the     definition of  \emph{probabilistic abstract rewrite system} from~\cite{BournezK02,BournezG05}---here denoted
\pars---and  explain on  examples  how  a system described by a \pars    evolves. This will motivate the
formalism which we present  in Section~\ref{sec:formalism}.

\subsection{Basics on ARS}\label{sec:ARS}
\newcommand{\NFP}{\texttt{NFP}}
  An \emph{abstract rewrite system (ARS)} is a pair
$\C= (C,\red)$ consisting of a  set $C$ and a binary
relation $\red$ on $C$ (called reduction) whose pairs are written  $t \to s$ and called \emph{steps};  $\red^*$ (resp. $\red^=$)  denotes the transitive reflexive (resp.\ reflexive) closure of $\red$.  We  write $c\not \red $ if
  there is no $u$ such that
$c\red u$; in this case, $c$ is a \textbf{normal form}.
 $\Term{\C}$ denotes   the set of the normal forms of $\C$.
 If  $c\red^*u$ and $u\in \Term{\C}$, we say $c$ has a normal form $u$.

A relation $\red$ is \emph{deterministic} if for each $t\in \AA$ there is at most one $s\in \AA$ such that $t\red s$.

\paragraph{Unique Normal Form}
 $\C$ has the property of  \textbf{unique normal form} (\emph{with respect to reduction}) (\UN) if $\forall c\in C, \forall  u, v\in \Term{\C},  \big(c\red^*u ~\&~ c\red^*v \implies  u=v \big)$.
$\C$ has the \textbf{normal form property} (\NFP) if $\forall b, c\in C, \forall u\in \Term{\C},   \big(b\red^*c ~\&~ b\red^*u \implies c\red^*u \big)$.  Clearly,
\NFP\ implies \UN\ (and confluence implies \NFP).

\paragraph{Normalization and Termination}
The fact that an ARS has unique normal forms does not imply neither that all elements have a normal form, nor that if an element has a normal form, each  rewrite  sequence converges to it.
An element $c$ is  \textbf{terminating}\footnote{Please observe  that the   \emph{terminology is community-dependent}.
	In  logic:  Strong Normalization, Weak Normalization, Church-Rosser (hence the \emph{standard abbreviations} \SN, \WN, CR).
	In computer science:  Termination, Normalization, Confluence.} (aka  \textbf{strongly normalizing}, \SN),  if it has no infinite sequence $c\red c_1 \red c_2 \ldots$; it is   \textbf{ normalizing} (aka \textbf{weakly normalizing}, \WN), if  it has a normal form.
{These are all important  properties to establish  about an ARS, as it is important to have a rewrite strategy which  finds a normal form, if it exists.}

	\subsection{Basics on Probabilities}\label{sec:proba}
{
The intuition is that random phenomena are observed by means of experiments (running a probabilistic program is such an experiment); each experiment results in an outcome. The collection of all possible outcomes  is represented  by a set, called the \textbf{sample space} $\Omega$.
}
When the sample space $\Omega$ is \emph{countable}, the theory is   simple.
A \emph{discrete probability space} is given by a pair $(\Omega, \mmu)$,
where  $\Omega$ 
 is  a \emph{countable} set, and $\mmu$ is  a \textbf{discrete probability distribution}
on  $\Omega$, \ie\  a
function $\mmu:\Omega \to [0,1]$ such that  $ \sum_{\omega\in \Omega} \mmu(\omega) = 1$.
A probability measure is assigned  to any  subset  $A\subseteq \Omega$ as $\mmu(A)=\sum_{\omega\in A} \mmu(\omega)$.
In the language of probabilists, a {subset} of $\Omega$ is called an \emph{event}.
\begin{exa}[Die]\label{ex:odd}
	Consider tossing a die once. The  space of possible outcomes is the
		set $\Omega = \{1, 2, 3, 4, 5, 6\}$. The probability  $\mmu$ of each outcome is $1/6$. The event \emph{``result is odd''} is the subset $A= \{1, 3, 5\}$, whose probability  is $\mmu(A)= 1/2$.
\end{exa}

Each \emph{function} $F : \Omega \to \Delta$, where $\Delta$ is another countable set,
\textbf{induces a  probability distribution} $\mmu^F$ on $\Delta$
by composition:  $\mmu^F(d) := \mmu (F^{-1}(d))$ \ie\ $\mmu(\{\omega \in \Omega: F(\omega) = d\})$. Thus $(\Delta, \mmu^F)$ is also  a
probability space.
In the language of probability theory, $F$ is called a \emph{discrete random variable} on $(\Omega, \mmu)$.
The  \textbf{expected value} (also called the expectation or  mean)
of a random variable $F$ is the weighted (in proportion to probability) average of the possible values of $F$.
{Assume  $F:\Omega \to \Delta$  discrete and $g: \Delta\to \Real$ a non-negative function, then   $E(g(F))=\sum_{d\in \Delta}g(d) \mmu_F(d)$.}

\subsection{(Sub)distributions:  operations and notation}\label{sec:dist} 
We need  the notion of subdistribution to account for   partial results, and for unsuccessful computation.
Given a countable set $\Omega$,  and a  function $\mmu:\Omega\to[0,1]$, we define   $ \norm \mu :=\sum_{\omega\in \Omega} \mmu(\omega)$. The function $\mmu$
is a probability \textbf{subdistribution} if
$\norm \mmu  \leq 1$.  We write  $\DST{\Omega}$ for the set
of   subdistributions on $\Omega$.
The  \emph{support} of  $\mmu$ is the set
$\supp{\mmu}=\{a\in \Omega \mid\mmu(a)>0\}$.
$\DSTF{\Omega}$
denotes the set of $\mmu\in \DST{\Omega}$ with \emph{finite support}, and $\zero$ indicates the subdistribution of empty support.

$\DST{\Omega}$ is equipped with the pointwise  \textbf{order relation} of functions:  $\mmu \leq \rho$ if
$\mmu (a) \leq \rho (a)$ for each $a\in \Omega$.
\textbf{Multiplication} for a scalar ($p\cdot \mmu$) and \textbf{sum} ($\sigma + \rho$) are defined as usual,
$(p\cdot \mmu) (a)= p\cdot \mmu(a)$,   $(\sigma+\rho)(a)=\sigma(a)+\rho(a)$, provided $p\in[0,1]$, and $\norm \sigma+\norm \rho \leq 1$.

\begin{nota}[Representation]\label{notation:dist} We represent a (sub)distribution by explicitly indicating the support, and (as superscript) the probability assigned to each element by $\mmu$. We  write $\mmu=\{a_0^{p_0}, \dots,  a_n^{p_n}\}$ if $\mmu(a_0)=p_0,\dots, \mmu(a_n)=p_n$ and  $\mmu(a_j)=0 $ otherwise. 
\end{nota}


\subsection{Probabilistic Abstract Rewrite Systems (\PARS)}

\begin{figure}
	\centering
		\fbox{
		\begin{minipage}[b]{0.28\textwidth}
			{\scriptsize
				$r_0:c\red\{c^{1/2}, \true^{1/2}\}$

				\begin{forest}
					L/.style={
						edge label={node[left,blue,font=\tiny]{#1}}
					},
					for tree={
						grow=0,reversed, 
						parent anchor=east,child anchor=west, 
						edge={line cap=round},outer sep=+1pt, 
						l sep=8mm 
					}
					[c,
					[c,  L={1/2}, 	[c,  L={1/4},[\dots],[\true]]
					[\true,  L={1/4}]]
					[\true,  L={1/2}]
					]
				\end{forest}	}
			  \captionsetup{width=\linewidth}
			\caption{Almost Sure Termination}\label{fig:AST}
	\end{minipage}
}
	\fbox{
		\begin{minipage}[b]{0.22\textwidth}
			{\tiny
				\begin{forest}
					P/.style={
						edge label={node[left,blue,font=\tiny]{#1}}
					},
					for tree={
						grow=0,reversed, 
						parent anchor=east,child anchor=west, 
						edge={line cap=round},outer sep=+1pt, 
						l sep=8mm 
					}
					[2,
					[1,P={1/2}[0,P={1/4}],[2~~$\cdots$,P={1/4}]],
					[3,P={1/2}[2~~$\cdots$,P={1/4}],[4~~$\cdots$,P={1/4}]]
					]
				\end{forest}
			}
		                \captionsetup{width=\linewidth}%
			\caption{Deterministic \PARS}%
			\label{fig:walk}
		\end{minipage}
	}
	\fbox{
		\begin{minipage}[b]{0.31\textwidth}
			{\scriptsize
				\begin{forest}
					P/.style={
						edge label={node[left,blue,font=\tiny]{#1}}
					},
					for tree={
						grow=0,reversed, 
						parent anchor=east,child anchor=west, 
						edge={line cap=round},outer sep=+1pt, 
						l sep=8mm 
					}
					[2,
					[1,P={1/2}[0,P={1/4}],[2,P={1/4},[1~~$\cdots$,P={1/8}],[3~~$\cdots$,P={1/8}]]],
					[3,P={1/2}[2,P={1/4},[stop,P={1/4}]],[4~~$\cdots$,P={1/4}]]
					]
				\end{forest}
			}
		    \captionsetup{width=\linewidth}
		\caption{Non-deterministic \PARS}%
			\label{fig:walk_stop}
		\end{minipage}
	}
\end{figure}
A \emph{probabilistic abstract rewrite system (\PARS)} is a pair
$(A,\red)$  of a countable set $A$ and a
relation ${\red}\subseteq{A\times\DSTF{A}}$  such that for each $(a,\beta)\in{\red}$, $\norm{\beta}=1$. We  write  $a \red \beta$ for  $(a,\beta)\in{\red}$ and we call it  a \emph{rewrite step}, or a \emph{reduction}. An element $a\in A$ is
in \emph{normal form}  if there is no $\beta$ with
$a\red \beta$. 
 We denote by $\Term{\A}$   the set of the normal forms  of $\A$  (or simply $\Term{}$ when $\A$ is clear).
A \PARS is  \textit{deterministic}  if, for all $a$, there is at most one $\beta$ with $a\red \beta$.

{\begin{rem}
The intuition behind  $a\red \beta$ is that the rewrite step $a\red b$ ($b\in A$)
has probability  $\beta(b)$. The total probability given by the sum of all steps $a\red b$ is  $1$.
\end{rem}}

\paragraph{Probabilistic vs Non-deterministic.} It is important to understand the distinction between  probabilistic choice (which \emph{globally  happens with certitude})
and non-deterministic choice (which leads to   different distributions of outcomes.)
Let us discuss some examples.
\begin{exa}[A deterministic \PARS]\label{ex:walk}   Fig.~\ref{fig:walk} shows a simple random walk over $\Nat$, which describes
	 a gambler starting with $2$ points and playing a game where every time he either  gains $1$ point  with probablity $1/2$ or  looses $1$ point   with probability $1/2$. This system is   encoded  by  the following \PARS on $\Nat$:
	$n+1 \red \{n^{1/2}, (n+2)^{1/2}\}$. Such a
	\PARS is  \emph{deterministic}, because for every element, at most one choice applies.  Note that $0$ is the (only) normal form.
\end{exa}
\begin{exa}[A non-deterministic \PARS]\label{ex:walk_stop} Assume now (Fig.~\ref{fig:walk_stop})  that the gambler of  Example~\ref{ex:walk} is also given the possibility to stop at any time. The two choices are here encoded as follows:
	\begin{equation*}
			n+1 \red \{n^{1/2}, (n+2)^{1/2}\},\quad  n+1 \red \{\texttt{stop}^1\}
	\end{equation*}
The choice between two possible rules makes the system non-deterministic, and therefore the system can evolve in several different ways.
Fig.~\ref{fig:walk_stop} illustrates one possible way.
\end{exa}

{
\subsection{Evolution of a system described by a \PARS\@.}\label{sec:evolution}
}
 We now  need to explain how  a system which is described by a \PARS evolves.
 An option  is to follow the stochastic evolution of a single run, \emph{a sampling  at a time}, as we have done in Fig.~\ref{fig:AST},~\ref{fig:walk}, and~\ref{fig:walk_stop}.
 This is the approach in~\cite{BournezG05}, where non-determinism is  solved by the use of policies.
  {Here we follow a different (though equivalent) way.}
 We  describe the possible states of the  system, at a certain time $t$, \emph{globally}, essentially as a  distribution on the space of all elements.
 The evolution of the system is then a  sequence of  such states. Since all the probabilistic choices are taken together, a global step happens with probability $1$; the only source of non-determinism in the evolution of the system is choice.
    This global approach allows us to deal with non-determinism by using techniques which have been developed in Rewrite Theory.
 Before introducing the formal definitions, we informally examine some examples, and point  out why some care is needed.

 \begin{figure}\centering
 	\fbox{
 		\noindent
 		\begin{minipage}[t]{0.45\textwidth}
 			{\tiny
 				$r_0: a\red\{a^{1/2}, \true^{1/2}\},~~  r_1: a \red\{a^{1/2}, \false^{1/2}\} $

 				\begin{forest}
 					L/.style={
 						edge label={node[midway,left, font=\tiny]{#1}}
 					},
 					for tree={
 						grow=0,reversed, 
 						parent anchor=east,child anchor=west, 
 						edge={->},outer sep=+1pt, 
 						l sep=6mm 
 					}
 					[$\{a^1\}$,
 					[{$\pmb{\{a^{1/2},\true^{1/2}\}}$},  blue, L={$r_0$}	[{$\pmb{\{a^{1/4},\true^{3/4}\}}$} $\cdots$, blue, L={$r_0$}]	[{$\{a^{1/4},\true^{1/2},\false^{1/4}\} $, }$\cdots$, L={$r_1$}, ]]
 					[{$\{a^{1/2},\false^{1/2}\}$}, L={$r_1$},   	[{$\{a^{1/4},\true^{1/4},\false^{1/2}\} $} $\cdots$, L={$r_0$} ]	[{$\{a^{1/4},\false^{3/4}\}$}$\cdots$, L={$r_1$}, ]]
 					]
 				\end{forest}
 			}\captionsetup{width=\linewidth}  \caption{Ex.\ref{ex:continuum} (non-deterministic \PARS)}\label{fig:continuum}
 		\end{minipage}
 	}
 	\fbox{
 		\noindent
 		\begin{minipage}[t]{0.45\textwidth}
 			{\tiny
 				$r_0: a\red\{a^{1/2}, \true^{1/2}\},~~ r_2: a\red \{a^1\}$

 				\begin{forest}
 					L/.style={
 						edge label={node[midway,left, font=\tiny]{#1}}
 					},
 					for tree={
 						grow=0,reversed, 
 						parent anchor=east,child anchor=west, 
 						edge={->},outer sep=+1pt, 
 						l sep=6mm 
 					}
 					[$\{a^1\}$,
 					[{$\pmb{\{a^{1/2},\true^{1/2}\}}$}, blue,  L={$r_0$} 	[{$\pmb{\{a^{1/4},\true^{3/4}\}}$} $\cdots$,blue, L={$r_0$}]	[{$\{a^{1/2},\true^{1/2}\} $}$\cdots$, L={$r_2$}]]
 					[{$\{a^1\}$}, red, L={$r_2$}	[{$\{a^{1/2},\true^{1/2}\}$} $\cdots$, L={$r_0$} ]	[{$\{a^1\}$} $\cdots$, red, L={$r_2$}]]
 					]
 				\end{forest}
 			}\captionsetup{width=\linewidth}  \caption{Ex.\ref{ex:dyadic} (non-deterministic \PARS)}\label{fig:dyadic}
 		\end{minipage}
 	}
 \end{figure}

\begin{exa}[Fig.\ref{fig:AST} continued]\label{ex:AST}
	The \PARS described by the  rule $r_0:c\red\{c^{1/2},\true^{1/2}\}$  (in Fig.~\ref{fig:AST})  evolves as follows: $\{c\},\{c^{1/2},\true^{1/2}\},
	\{c^{1/4},\true^{3/4}\},\dots$.
\end{exa}
\begin{exa}[Fig.\ref{fig:continuum}]\label{ex:continuum}
Fig.~\ref{fig:continuum} illustrates the possible evolutions of a non-deterministic system  which  has two rules: $r_0:a\red\{a^{1/2},\true^{1/2}\}$ and $r_1:a\red\{a^{1/2},\false^{1/2}\}$. The arrows are annotated with the chosen rule.
\end{exa}

\begin{exa}[Fig.\ref{fig:dyadic}]\label{ex:dyadic}
	Fig.~\ref{fig:dyadic} illustrates the possible evolutions of a system with rules $r_0:a\red\{a^{1/2},\true^{1/2}\}$ and
	$r_2:a\red\{a^1\}$.
\end{exa}
If we
look at Fig.~\ref{fig:walk_stop},  we observe  that after two steps, there are \emph{two distinct occurrences} of the element \texttt{2}, which live in \emph{two different runs} of the program: the run  \texttt{2.1.2}, and the run  \texttt{2.3.2}.
There are two possible transitions from each  $2$. The next transition only depends on the fact of having   \texttt{2},
not on the  run in which \texttt{2} occurs: its history is only a way to distinguish the  occurrence.
%
For this reason, given a \PARS $(A,\red)$,
we keep track of \emph{different occurrences} of an element $a\in A$, but not necessarily of the history.
Next section formalizes these ideas.

{
\paragraph{Markov Decision Processes.} To understand our distinction  between occurrences of  $a\in A$ in different paths,  it is helpful to think how  a system  is described in the framework of Markov Decision Processes (MDP)~\cite{Puterman94}. Indeed,
 in the same  way as  ARS  correspond to transition
	systems, \PARS correspond to probabilistic transitions.
Let us regard a \PARS  step $r: a\red \beta$ as a probabilistic transition ($r$ is here a name for the rule). Let assume $a_0\in\A$ is an initial state.
In the setting of MDP, a typical element (called \emph{sample path}) of the sample space $\Omega$ is a sequence $\omega=(a_0,r_0,a_1,r_1 \ldots)$ where   $r_0:a_0\red \beta_1$ is a rule, $a_1\in \supp{\beta_1}$ an element, $r_1:a_1\red \beta_1$, and so on.
 The index  $t=0,1,2,\dots,n,\dots$  is interpreted as \emph{time}. On $\Omega$ various  random variables are defined; for example,  $X_t=a_t$, which represents the state at time $t$.  The sequence $\langle X_t\rangle$ is called  a  stochastic process.
}

\section{A Formalism for Probabilistic Rewriting}\label{sec:formalism}


This section presents  a formalism   to  describe the global evolution of a  system described by a \PARS, which is a variant of  that used in~\cite{Avanzini}.  The  equivalence with the approach   in~\cite{BournezG05} is demonstrated in~\cite{Avanzini}.

\subsection{PARS}
Let  $A$   be a countable set on which a \PARS $\A=(A,\red)$ is given. We define a rewrite system $(\MA,\redd)$, where $\MA$ is the set of objects to be rewritten, and $\redd$ a relation on $\MA$. We indicate as PARS the resulting rewriting system.

\paragraph{The objects to be rewritten.} $\MA$ is the  set of all \emph{multidistributions} on $A$, which are defined as follows.
Let    $\m$ be  a multiset\footnote{A \emph{multiset} is  a (finite) list of elements,  modulo reordering.}  of pairs of the form $p a$, where $p\in \lopen{0,1}$ is a real number, and $a\in A$ an element of $A$; the multiset
 $\m=\mset{p_i a_i\mid i\in I}$ is a multidistribution  on $A$ if
$\norm \m = \sum_{\iI} p_i \leq 1$.
We  write the multidistribution $\mdist{1a}$ simply as $\mdist{a}$.

\emph{Sum and product} are partial operations, similarly to what happens for distributions.
The  {sum}  of  multidistributions  is  denoted by $\dsum$, and it is the disjoint union of multisets (think of list concatenation). Given two  multidistributions $\m_1=\mset{p_i a_i\mid i\in I}$ and $m_2= \mset{q_{j}b_{j}\mid j\in J}$, their sum $ \mset{p_i a_i\mid i\in I} \uplus \mset{q_{j}b_{j}\mid j\in J}$  is defined only if 
$\norm {\m_1} + \norm {\m_2} \leq 1$.
The product $q\cdot \m$ of a scalar $q$ and a multidistribution $\m$ is defined pointwise, provided that $p\in[0,1]$: $q\cdot \mset{ p_{1}a_{1}, \ldots , p_{n}a_{n}}=\mset{ (qp_{1})a_{1}, \ldots , (qp_{n})a_{n}} $.

Intuitively, a multidistribution $\m\in \MDST{A} $ is a \emph{syntactical representation}  of a discrete probability space
where  each point   in the space (each outcome)  is associated to a probability and an element of $A$. More precisely, \emph{each pair in $\mm$ correspond to
a trace of computation}, or---in the language of Markov Decision Processes---to \emph{a sample path}.

\paragraph{The rewriting relation.}
\newcommand{\OneStep}{\texttt{OneStep}}

The  binary relation $\redd$ on
$\MA$ is obtained by lifting the relation $\red$ of the \PARS $\A=(A,\red)$, as follows.
\begin{defi}[Lifting]\label{def:lift}
	Given a relation $\to\subseteq A\times \DST{A}$, its lifting to a relation $\redd\subseteq\MA\times\MA$ is defined  by the   rules
			\[
			\infer[L1]{\mset{a}\redd \mset{a}}{a\not \to} \quad \quad
			\infer[L 2]{\mset{a}\Red \mset{p_k a_k \mid \kK}}{a\to\{a_k^{p_k} \mid \kK\}}  \quad  \quad
			\infer[L3]{ \mset{p_{i}a_{i}\mid i\in I} \redd  \sum_{\iI} {p_i\cdot \m_i}}
			{\big(\mset{a_i} \redd  \m_i\big)_{\iI} }
			\]

\end{defi}
 For the lifting, several natural choices are possible.    Here we  force  \emph{all} non-terminal elements to be reduced. This choice plays an important role for the development of the paper, as it corresponds to  the key notion of \emph{one step} reduction in classical ARS (see  discussion in Section~\ref{sec:discussion}). Let us discuss some more the lifting rules.
\begin{itemize}
	\item Rule $ L1 $.  Note that the relation   $\redd$
 is reflexive on normal forms.
\item  Rule $L2$. Please observe that  $ \mset{p_k a_k \mid \kK} \in \MA$ is simply a  representation of the distribution $\{a_k^{p_k} \mid \kK\}\in \DST{A}$.
\item Rule $L3$. To apply  rule $L3$, we  have to choose a reduction step from  $a_i$ for \emph{each}  $\iI$.
The (disjoint) sum of all  $\mm_i$ ($i\in I$) is  weighted with  the scalar $p_i$ associated to  each  $p_{i}a_i$.
 \end{itemize}

\begin{exa}
	Let us  derive the reduction in Fig.~\ref{fig:walk_stop}. For readability,  elements in $\Nat$ are in bold.
			\begin{center}
	{\small
		$\infer{[\textbf{2}]\redd [\two\textbf{1}, \two \textbf{3}]}
		{\textbf{2}\red\{\textbf{1}^{\two}, \textbf{3}^{\two}\}}$\quad
		$\infer{[\two\textbf{1}, \two \textbf{3}]\redd[\frac{1}{4}\textbf{0}, \frac{1}{4}\textbf{2},\frac{1}{4}\textbf{2}, \frac{1}{4}\textbf{4}]}
		{  1\red\{\textbf{0}^{\two},\textbf{2}^{\two}\}&3\red\{\textbf{2}^{\two},\textbf{4}^{\two}\} }$\quad
		$\infer{[\frac{1}{4}\textbf{0}, \frac{1}{4} \textbf{2}, \frac{1}{4} \textbf{2}, \frac{1}{4}\textbf{4}] \redd  [\ldots, \frac{1}{4} \textbf{stop},  \frac{1}{8}\textbf{1}, \frac{1}{8}\textbf{3},  \ldots]  }{\dots & \textbf{2}\red \{stop^1\} &
			\textbf{2}\red\{\two\textbf{1},\two \textbf{3}\} &\dots}$}
		\end{center}
\end{exa}

\paragraph{PARS\@.}  We indicate as PARS the rewrite system $ (mA,\Red)$ which is  induced by the \pars $(A,\red)$.

\paragraph{Rewrite sequences.}
We write $\mm_0\redd^*\mm_n$ to indicate that there is a \emph{finite sequence}  $\mm_0, \dots,\mm_n$  such that $\mm_{i} \redd \mm_{i+1}$ for all   $0 \leq i < n$ (and $\mm_0\redd^k \mm_k$ to specify its length  $k$).
We write $\seq \mm$ to indicate an \emph{infinite rewrite  sequence}.

\paragraph{Figures conventions:} we depict \emph{any} rewrite relation simply as $\rightarrow$;  as it is standard, we use $\twoheadrightarrow$ for  $\rightarrow^*$; solid arrows are universally quantified, dashed arrows are existentially quantified.

\subsection{Normal forms and observations}\label{sec:observations}
Intuitively, a multidistribution $\m\in \MDST{A} $ is a \emph{syntactical representation}  of a discrete probability space
where at each element  of the space  is associated a probability and an element of $A$.
This space may contain various  information. We analyze this space by defining random variables that \emph{observe specific properties of interest}.
Here we focus on a specific event of interest:  the set  $\NFA$
of \emph{normal forms} of $\A$.

\bigskip
\paragraph{Distribution over the elements of $A$.}
 First of all, to each multidistribution $\m = \mset{p_i a_i \mid \iI}$ we can associate a (sub)distribution $\muflat \in \DST{A}$ as follows: 
\[
\muflat(c) = \sum_{i\in I}  q_i
\qquad\qquad
q_i=\left\{
\begin{array}{ll}
	p_i & \mbox{if $a_i=c$}\\
	0   & \mbox{otherwise}
\end{array}
\right.
\]
Informally, for each $c\in A$, we sum  the probability of all occurrence of $c$ in the multidistribution (observe that, $\m$ being a multiset, there are in general \emph{more
	than one} elements $p_{i}a_i$ where $a_i=c$).

\paragraph{Distribution over the \emph{normal forms} of $A$.}
Given   $\mm\in\MA$, the \textbf{probability that the system is in normal form} 
is   described by $\muflat(\NFA)$
(recall Example~\ref{ex:odd});
the probability that the system is in  a specific normal form $u$ is
described by $\muflat(u)$.

It is convenient to spell-out  a direct definition of both, to which we will refer in the rest of the paper.

\begin{itemize}
	\item The    function
$\nf {-}: \MA \to  \DST{\NFA} \quad
\m \mapsto \nf \mm$
is the restriction of $\muflat$ to $\NFA$.

Informally, this function   extracts  from $\m=\mset{p_i a_i}_{\iI}$ the \emph{subdistribution $\nf \mm$ over normal forms}.

\item The norm  $\norm -: \DST{\NFA}\to [0,1] $  \Big(recall that $\norm \mu =\sum\limits_{u\in \NFA} \mu (u)  $\Big) induces the function
\[\nnorm - : \MA \to  [0,1] \quad
\m \mapsto \nnorm  \mm \]
which  observes the  probability that   $\m$ has reached  a normal form. Clearly, $ \norm{\nf{\mm}} =   \muflat (\Term{\A})$.

\end{itemize}

\begin{exa}
	Let  $\m=\mset{\frac{1}{4} \true, \frac{1}{8} \true, \frac{1}{4} \false, \frac{3}{8} c  }$ (where $\true, \false$ are normal forms,
	and $c$ is not).
	Then
	$\nf \mm=\{\true^{\frac{3}{8}},  \false^\frac{1}{4}  \}$, and  $ \nnorm{\m}=\frac{5}{8}$.
\end{exa}


%

The probability of reaching a normal form $u$  can only increase in a rewrite sequence (because of (L1) in Def.~\ref{def:lift}).
Therefore the following key lemma holds.
\begin{lem}\label{lem:basic}
	If $\mm_1\redd \mm_2$ then $\nf\mm_1 \leq \nf\mm_2$ and $\nnorm {\mm_1} \leq \nnorm {\mm_2}$.
\end{lem}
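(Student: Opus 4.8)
The plan is a straightforward induction on the derivation of $\mm_1 \redd \mm_2$ according to the rules of Def.~\ref{def:lift}, proving the stronger pointwise statement $\nf{\mm_1} \leq \nf{\mm_2}$; the inequality $\nnorm{\mm_1}\leq\nnorm{\mm_2}$ then follows at once by applying the norm, which is monotone on $\DST{\NFA}$ since $\mu\leq\rho$ implies $\norm\mu = \sum_{u}\mu(u) \leq \sum_u \rho(u) = \norm\rho$.

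Before the induction I would record two elementary facts, both immediate from the definition of $\nf{-}$: it is additive, $\nf{\m\dsum\m'} = \nf\m + \nf{\m'}$, and it commutes with the scalar product, $\nf{q\cdot\m} = q\cdot\nf\m$ — in each case the defining sum simply splits, resp. factors $q$ out. These make $\nf{-}$ monotone along rule $(L3)$: if $\mm_1 = \mset{p_i a_i\mid\iI}$ and $\mm_1\redd\mm_2$ is obtained by $(L3)$ from steps $\mset{a_i}\redd\m_i$, then $\mm_1 = \dsum_{\iI} p_i\cdot\mset{a_i}$, hence $\nf{\mm_1} = \sum_{\iI}p_i\cdot\nf{\mset{a_i}}$ and $\nf{\mm_2} = \sum_{\iI}p_i\cdot\nf{\m_i}$, so from $\nf{\mset{a_i}}\leq\nf{\m_i}$ for every $\iI$ — the induction hypothesis applied to each premise — we get $\nf{\mm_1}\leq\nf{\mm_2}$, since scalar product by $p_i\geq 0$ and pointwise sum both preserve $\leq$.

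It then remains to treat the two base cases. If the step is an instance of $(L1)$, then $a\in\NFA$ and $\mm_1 = \mm_2 = \mset{a}$, so $\nf{\mm_1} = \{a^1\} = \nf{\mm_2}$. If the step is an instance of $(L2)$, then $a\to\{a_k^{p_k}\mid\kK\}$, so $a$ is \emph{not} a normal form, hence $\nf{\mset{a}} = \zero \leq \nf{\mm_2}$. In both cases $\nf{\mm_1}\leq\nf{\mm_2}$, which closes the induction and, via the norm, gives $\nnorm{\mm_1} = \norm{\nf{\mm_1}} \leq \norm{\nf{\mm_2}} = \nnorm{\mm_2}$.

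I do not expect any genuine obstacle: the argument is bookkeeping about how $\nf{-}$ interacts with $\dsum$ and scalar product. The one delicate point — and the actual reason the statement holds — is the base case $(L1)$: it is precisely the decision to \emph{keep} each normal form in place, rather than discarding it or continuing to reduce it, that prevents $\nf{-}$ from decreasing along a step; with a lifting that dropped normal forms, monotonicity would fail.
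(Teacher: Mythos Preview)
Your proof is correct and follows exactly the approach the paper intends: the paper does not spell out a proof but simply remarks that the probability of reaching a normal form can only increase ``because of (L1) in Def.~\ref{def:lift}'', and your induction on the derivation of $\mm_1\redd\mm_2$ is precisely the natural way to make this remark precise. The key point you identify --- that rule $(L1)$ keeps normal forms in place rather than discarding them --- is exactly what the paper singles out as the reason the lemma holds.
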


\paragraph{Equivalences and Order.}
In this paper $\mm\in \MA$ is a multiset, for simplicity and  uniformity with~\cite{FaggianRonchi}, but  we could have used lists rather than multisets---as we do in~\cite{Faggian19}. We do not really care of equality of elements in $\MA$---what
 we are  interested are instead  equivalence and order relations w.r.t \emph{the observation of  specific  events}.
For example, the following (recall from Section~\ref{sec:dist} that the order on $\DST{A}$  is the pointwise order):

	Let $\mm,\mr\in \MA$.
	\begin{enumerate}
		\item  \emph{Flat Equivalence}:  $ \eqflat \mm \mr$,
		if $\muflat=  {\mathtt{r}}^{\mathsf{dst}} $.  Similarly,  $\mm \geq_{\flat} \mr $
		if $\muflat \geq  {\mathtt{r}}^{\mathsf{dst}} $.
		\item \emph{Equivalence in Normal Form}:    $\eqnf \mm \mr$,
		if $\nf\mm = \nf\tmr$.  Similarly, $  \relnf \mm \mr  $,
		if $\nf\mm \geq \nf\tmr$

		\item 	  \emph{Equivalence in the $\N$-norm}: $ \eqnorm \mm \mr$,
		if $\norm{\nf\mm} = \norm {\nf\tmr}$, and   $\relnorm \mm \mr$,
		if $\norm{\nf\mm }\geq \norm{\nf\tmr}$
	\end{enumerate}

\noindent
Note  that (2.) and (3.) compare $\mm$ and $\tmr$ abstracting from any element which is not in normal form.
\begin{exa} Assume  $\true$ is a normal form and $a\not=c$ are not.
	\begin{enumerate}
		\item Let $\mm= [\two\true, \two\true],~ \tmr= [1\true]  $.  $\eqflat \mm \tmr$, $\eqnf\mm \tmr$, $\eqnorm\mm \tmr$ all  hold.

		\item Let $\mm=[\two a,  \two \true] $, $\tmr=[\two c,	 \frac{1}{6}\true, \frac{2}{6}\true] $.
		$\eqnf\mm \tmr$, $\eqnorm\mm \tmr$ both \emph{hold},  $\eqflat \mm \tmr$ does \emph{not}.
	\end{enumerate}
\end{exa}
The  above example illustrates  also   the following.
\begin{fact}
	$( \eqflat \mm \mr) ~\implies~ (\eqnf \mm \mr) ~\implies~(\eqnorm \mm\mrho) $.  Similarly for the order relations.
\end{fact}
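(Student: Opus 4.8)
The plan is to unfold the three definitions and lean on the observation already recorded above, that $\nf{\mm}$ is exactly the restriction of $\muflat$ to $\NFA$. The argument then reduces to two elementary monotonicity facts: (i) restricting a function on $A$ to the subset $\NFA\subseteq A$ preserves both equality and the pointwise order; (ii) the norm $\norm{-}:\DST{\NFA}\to[0,1]$, being the sum $\norm{\nu}=\sum_{u\in\NFA}\nu(u)$ of non-negative reals, respects equality and is monotone for the pointwise order. I would treat the two implications separately, and in each case handle the equivalence and the order statement side by side (the equivalence being just the two-sided version of the order statement).

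For the first implication, assume $\eqflat\mm\mr$, \ie $\muflat=\rhoflat$ as functions on $A$. Restricting both sides to $\NFA$ and using $\nf\mm=\muflat\proj\NFA$ and $\nf\mr=\rhoflat\proj\NFA$ gives $\nf\mm=\nf\mr$, that is $\eqnf\mm\mr$. For the order version, if $\muflat(c)\geq\rhoflat(c)$ for all $c\in A$ then in particular this holds for all $c\in\NFA$, so $\nf\mm\geq\nf\mr$ pointwise, \ie $\relnf\mm\mr$.

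For the second implication, assume $\eqnf\mm\mr$, \ie $\nf\mm=\nf\mr$ in $\DST{\NFA}$; applying $\norm{-}$ to both sides yields $\norm{\nf\mm}=\norm{\nf\mr}$, \ie $\eqnorm\mm\mr$. For the order version, from $\nf\mm(u)\geq\nf\mr(u)$ for every $u\in\NFA$ we get, summing the (finitely many non-zero, all non-negative) terms over $u\in\NFA$, that $\norm{\nf\mm}=\sum_{u\in\NFA}\nf\mm(u)\geq\sum_{u\in\NFA}\nf\mr(u)=\norm{\nf\mr}$, \ie $\relnorm\mm\mr$.

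I do not expect any real obstacle: the statement is a bookkeeping consequence of the definitions. The only point deserving a word of care is that the monotonicity of $\norm{-}$ used in the last step relies on the values of a subdistribution being non-negative (which is built into the definition, as they lie in $[0,1]$), and that no summability issue arises since each $\nf\mm$ has finite support. More economically, both implications can be packaged as the single remark that the maps $\mm\mapsto\nf\mm$ and $\nu\mapsto\norm{\nu}$ are monotone (and in particular send equal inputs to equal outputs), hence so is their composite $\mm\mapsto\norm{\nf\mm}$.
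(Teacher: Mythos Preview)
Your proof is correct; the paper does not give a proof at all, treating the Fact as immediate from the observation (stated just above it) that $\nf{\mm}$ is the restriction of $\muflat$ to $\NFA$. Your unfolding of the definitions and appeal to monotonicity of restriction and of $\norm{-}$ is exactly the intended justification. One cosmetic remark: the finiteness of the support is not needed for the monotonicity of $\norm{-}$, since sums of non-negative reals over a countable index set are always well-defined and monotone; but since multidistributions are finite multisets your observation is also true.
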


\section{Asymptotic Behaviour of PARS}\label{sec:asymptotic}
We  examine   the  asymptotic behaviour of rewrite sequences \emph{with respect to normal forms}, which  are  the most common   notion  of result.

The intuition is that  a rewrite sequence describes  a computation; an element $\mm_i$ such that $\mm\redd^i \mm_i$ represents a state  (precisely, the state at time $i$) in the evolution of the system with  initial state  $\mm$.
%
The
  \emph{result} of the computation is  a distribution over  the possible normal forms of the probabilistic  program. We are interested in the result when the number of steps tends to infinity, that is \emph{at the limit}.
   This is formalized by  the (rather standard)  notion of \emph{limit distribution} (Def.~\ref{def:limit}). What is new here, is that since
     each element $\mm$  has different possible rewrite sequences (each sequence  leading to a possibly different limit distribution)  to $\mm$ is  naturally associated a \emph{set} of limit distributions.

 A fundamental property for a system such as the $\lam$-calculus is that if an element    has a normal form, it is unique. This is crucial to the computational interpretation of the calculus, because it means that the result of the computation is \emph{well defined}.
A question we need to address in the setting of PARS,  is what does it mean to have  a well-defined result.
With this in mind, we investigate   an analogue of the ARS  notions of normalization, termination, and   unique normal form.

 \subsection{Limit Distributions}\label{sec:limit}
Before introducing limit distributions,  we  revisit  some   facts on  sequences of bounded  functions.

\paragraph{Monotone Convergence.}
We  recall the following standard result.
\begin{thm}[Monotone Convergence for Sums]\label{thm:MCS}
	Let $ X$ be a countable set,  $f_n: X\to[0, \infty]$ a non-decreasing sequence of  functions, such that
	$f(x):= \lim_{n\to\infty} f_n(x)=\sup_n  f_n(x) $ exists for each $x\in X$. Then
	\[\lim_{n\to \infty} \sum_{x\in X} f_n (x) ~=~  \sum_{x\in  X} f(x)\]
\end{thm}

Recall that subdistributions over a countable set $ X$ are  equipped with the \emph{pointwise  order}:  $\alpha \leq \alpha'$ if
$\alpha (x) \leq \alpha' (x)$ for each $x\in X$.
Let $\seq \alpha$ be a \emph{non-decreasing sequence} of (sub)distributions  over $X$. 
For each $t\in X$, the sequence $\langle {\alpha_{n}}(t)\rangle_{n\in \Nat}$  of real numbers  is \emph{nondecreasing and bounded}, therefore the sequence  has a limit, which is the supremum: $\lim_{n\to \infty} {\alpha_n}(t)=\sup_n\{{\alpha_n}(t)\}$.
Observe that if $\alpha < \alpha'$ then $\norm \alpha < \norm{\alpha'}$, where we  recall that $\norm \alpha :=\sum_{x\in X} \alpha(x)$.

\begin{lem}\label{lem:MCT} Given $\seq \alpha$ as above, the following properties hold. Define
	\begin{center}
	 	$ \beta(t)= \lim_{n\to \infty}{\alpha_{n}}(t), ~ \forall t\in X$
	\end{center}
\begin{enumerate}


	\item $ \lim_{n\to \infty} \norm{\alpha_n } ~=~   \norm \beta $

	\item  $\lim_{n\to \infty} \norm{\alpha_n}=\sup_n\{\norm{\alpha_n}\}\leq1$

		\item 	$\beta $  is a \emph{subdistribution} over $X$.

\end{enumerate}
\end{lem}

\begin{proof}  (1.) follows from the fact  that $\seq { \alpha}$  is a nondecreasing sequence of  functions,  hence  (by Monotone Convergence, Thm.~\ref{thm:MCS}) we have:
\begin{center}
		$\lim_{n\to \infty} \sum_{t\in X}{\alpha_n (t) } ~=~   \sum_{t\in X} \lim_{n\to \infty}\alpha_n(t)$
\end{center}
	(2.) is immediate, because the sequence $\langle{\norm{\alpha_n}}\rangle_{n\in \Nat}$ is  nondecreasing and bounded.

    \noindent
	(3.) follows  from (1.) and (2.).
Since
   $\norm \beta = \sup_n \norm{\alpha_n }\leq 1$,  then    $\beta$ is a subdistribution.
\end{proof}

\paragraph{Limit distributions.} Let  $ \A=(mA,\Red)$ be  the rewrite system induced by a \pars $(A,\red)$.

Let $\seq{\mm}$ be a rewrite sequence.
If $t\in \Term \A$, then  $\langle\nf{\mm_{n}}(t)\rangle_{n\in \Nat}$ is nondecreasing (by Lemma~\ref{lem:basic});
so we can apply Lemma~\ref{lem:MCT}, with   $\seq \alpha$ now  being  $\seq {\nf \mm}$.

\begin{defi}[Limits]\label{def:limit} Let $\seq{\mm}$ be   a rewrite sequence from     $\mm\in\MA$. We say
	\begin{enumerate}
			\item  $\seq{\mm}$ \textbf{converges with probability}
		$p = \sup_{n} \{\nnorm{\mm_n}\}$. 

		\item   	$\seq{\mm}$ \textbf{converges to}   ${\beta}\in \DST{\Term{\A}}$ 
\begin{center}
		$\beta(t)= \sup_{n} \{\nf{\mm_{n}}(t)  \} $
\end{center}

	\end{enumerate}
	 We call $\beta$ a \textbf{limit distribution} (on normal forms)  of $\mm$, and $p$ a \textbf{limit probability} (to reach a   normal form) of $\mm$. We write  $\mm\tolim \beta $ (resp. $\mm\tolimp p $) if  $\mm$ has a  sequence which converges to $\beta$ (resp.\ converges with probability $p$). We    define $\Lim(\mm):=\{\beta ~\mid~ \mm\tolim \beta\}$ the set of limit distributions, and  $\pLim(\mm):=\{p ~\mid~ \mm\tolimp p\}$.
\end{defi}
Note that in the definition above,  $p$ (item 1.) is a scalar, while $\beta$ (item 2.) is subdistribution over   normal forms. The former
  is a quantitative  version of a boolean (yes/no) property,  to reach a normal form. The latter,
  is a quantitative (more precisely, \emph{probabilistic}) version of  ``which normal form is reached.''

Clearly
\begin{center}
	$\pLim(\mm)=\{ \norm {\beta} ~\mid~  \beta\in \Lim(\mm) \}$
\end{center}
because
$ \sup_n  \nnorm{\mm_n}    = \norm{ \sup_n   {\nf \mm}  }  $ (by Lemma~\ref{lem:MCT}, point 1.).

A computationally natural question  is if  the result of computing an element $\mm$ is well defined.
We analyze it  in \refsec{QARS}---putting  this question in a more general, but also simpler,  context.
In fact,
most  properties of the asymptotic behaviours of PARSs are not specific to probability, and are best understood when focusing only on the essentials, abstracting from the details of the formalism.
Before doing so, we build an intuition by informally investigating  the notions of normalization, termination, and  unique normal form in our  concrete setting.

\subsection{PARS vs ARS:\texorpdfstring{\@}{} Subtleties, Questions, and Issues}
\subsubsection{On Normalization and Termination}\label{sec:norm}
In the setting  of ARS, a rewrite sequence from an element $c$ may or may not reach  a normal form. The
  notion of reaching a normal form   comes in  two flavours  (see  Section~\ref{sec:ARS}):
(1.) \emph{there exists}  a rewrite sequence from $c$  which leads to a normal form (\emph{normalization},  \WN); (2.)
	 \emph{each} rewrite sequence from $c$ leads to a normal form (\emph{termination}, \SN).
If no   rewrite sequence  leads to a normal form, then $c$ \emph{diverges}.

It is interesting to  analyze  a  similar $\exists/\forall$ distinction in a  quantitative setting.
We distinguish two cases.

\paragraph{Convergence with  \emph{probability 1}.}
If  we restrict the notion of convergence to \emph{probability 1}, then it is natural to say that an element $\mm$ \textbf{weakly normalizes} if it has a rewrite sequence which
converges with probability $1$, and\textbf{ strongly normalizes} (or, it is \AST) if all  rewrite sequences
converge with probability $1$.

\paragraph{The general case.}
 	Many natural examples---in particular when  we consider untyped probabilistic $\lam$-calculus---are not limited to convergence with probability $1$, as  Example~\ref{ex:motivation2} shows.
 In the general case,   extra subtleties emerge, due to the fact that
\emph{each rewrite  sequence   converges  with some probability} $p\in [0,1]$ (possibly 0).

 A  first important observation is that the set {$\{q\mid \mm\tolimp q\}$} has a supremum (say $p$), but
 \emph{not necessarily  a greatest element}. Think of $\ropen{0,p}$, which has a sup, but not greatest element. If $\pLim$ has no greatest element, it means that  no rewrite sequence converges to the supremum $p$.

A second remark is that we naturally speak of termination/normalization with probability $0$. Not only does it appear  awkward   to separate the case  $0$ (as distinct from $0.00001$), but divergence also---dually---should  be quantitative.

We say that $\mm$ \textbf{(weakly) normalizes} (with probability $p$) if $\{q\mid \mm\tolimp q\}$ has a greatest element $p$. This means that there \emph{exists} a reduction sequence whose limit is $p$.
 Dually, we can say that
$\mm$ \textbf{{strongly normalizes} (or terminates) } (with probability $p$), if all reduction sequences converge with  the same probability $p\in [0,1]$.

Since in this case  \emph{all reduction sequences  from the same element have the same behaviour},  a better term seems   that $\mm$  \textbf{uniformly normalizes}. And indeed, ``all reduction sequences from the same element converge with  the same probability'' is the analogue  of
 the ARS notion of \emph{uniform normalization},  the property that    all reduction sequences from an element   either all terminate, or all diverge (otherwise stated: weak normalization implies strong normalization).
Summing up, we  use the following terminology:

 \begin{defi}[Normalization and Termination]\label{def:SNlim} A PARS is \WNlim, \SNlim,  or \AST, if each $\mm$ satisfies the corresponding  property, where
 	\begin{itemize}

 	 \item $\mm $ is \WNlim ($\mm$ \textbf{normalizes})  if there\emph{ exists a sequence} from $\mm$ which converges with  greatest probability (say $p$). To specify, we say that  $\mm$ is  p-\WNlim.

 		\item $\mm$  is  \SNlim  ($\mm$ \textbf{strongly---or uniformely---normalizes}) if \emph{all sequences}  from $\mm$ converge with the same probability (say $p$).  To specify, we say  that  $\mm$ is  p-\SNlim.

\item   	$\mm$ is  \textbf{Almost Sure Terminating (\AST)} if it strongly normalizes with probability $1$ (\ie, it is 1-\SNlim).

 	\end{itemize}
\end{defi}
\begin{exa} The system in Fig.~\ref{fig:dyadic} is $1$-\WNlim, but not $1$-\SNlim.  The top  rewrite sequence (in blue) converges with probability {$1=\lim_{n\to\infty} \sum_{k:1}^n \frac{1}{2^k}$.} The bottom rewrite sequence (in red) converges with probability  $0$. In between, we have all dyadic possibilities. In contrast, the system in Fig.~\ref{fig:continuum}
	is \AST\@.
\end{exa}

\subsubsection{On  Unique Normal Forms and Confluence}\label{sec:unique}
\newcommand{\ULD}{\texttt{ULD}}
We now focus on two natural questions. First: when is the notion of the result $\den \mm$  well defined? Second:
given  a probabilistic program $M$,  if
   $[M]\tolim \alpha$ and  $[M] \tolim \beta$,   how do $\beta$ and $\alpha$ relate?

 Normalization and termination are  \emph{quantitative yes/no} properties---we are only interested in the number $\norm{\beta}$,  for $\beta$
 limit distribution; for example, if $\mm\tolim \{\false^1\}$ and $\mm\tolim \{\true^{1/2},\false^{1/2}\}$, then $\mm$ converges with probability $1$,
 but we make no distinction between the two---very different---results. Similarly, consider again  Fig.~\ref{fig:continuum}. The system is \AST,
 however  the limit distributions are\emph{ not unique}: they   span  an  infinity of distributions which have shape $ \{\true^p,\false^{1-p}\}$.	  These observations motivate  attention to finer-grained properties.

In the usual  theory of rewriting,  the fact that the result is well defined is expressed by the   \emph{unique normal form} property (\UN). Let us  examine an analogue of \UN\ in a  probabilistic setting. An intuitive candidate  is the following, which was first proposed in~\cite{DiazMartinez17}:
\begin{center}
	 \ULD:\@ if  $\alpha,\beta\in \Lim(\mm)$, then $\alpha = \beta$
\end{center}~\cite{DiazMartinez17}  shows that,  in the case of \AST,   confluence implies  \ULD\@.
 However, \ULD\ is not a good analogue in  general, because  a PARS does not need to be \AST\ (or \SNlim); it may  well be that $\mm\tolim \alpha$ and $\mm\tolim \beta $, with $\norm \alpha \not= \norm \beta$. We have seen rewrite systems which are not \AST\ in  Fig.~\ref{fig:dyadic},  and  in Example~\ref{ex:motivation2}.   Similar examples are natural in an \emph{untyped} probabilistic $\lambda$-calculus (recall that the $\lambda$-calculus is not \SN!).

 We then prefer not to  limit the analysis to $\AST$. In such  a  case, \ULD\  is not implied by confluence: the system in  Fig.~\ref{fig:dyadic} is indeed confluent, but not \ULD\@. Still, we  would like to say that it satisfies a form of \UN\@.

 We  propose as  probabilistic analogue of \UN\ the following property
 \begin{center}
 	\UNlim:  $\Lim (\mm)$ has a \emph{greatest} element.
 \end{center}
which we justify in  \refsec{QARS}, where
we show that PARS satisfy  an analogue of
standard  ARS results:  ``Confluence implies \UN'' (\refthm{confluence}),
{ and ``the Normal Form Property implies \UN'' (\refprop{NFP}).}
There are however two important observations to make.

\paragraph{Important observation!}
While the statements are similar to the classical ones, the
content is not.   To understand the difference, and what is  non-trivial here,  observe that in general   there is no   reason  to believe that $\Lim (\mm)$ has maximal elements. Think again of the set $\ropen{0,1}$, which has no max, even if it has a sup. Observe  also that $\Lim (\mm)$ is---in general---uncountable.

In Section~\ref{sec:confluenceQ} we will see that   to prove the existence of maximal limits is indeed not immediate. For this reason, while in the case of  finitary termination  uniqueness of normal forms  follows immediately from confluence, it is not so when termination is asymptotic: confluence does not directly guarantee \UNlim, and more work is needed.

\paragraph{Which notion of Confluence?}
Property \UNlim is guaranteed by   a   form of confluence weaker  than one would expect.
	Assume  $\tms\leftstar\leftleftarrows\mm\redd^*\tmr$; with the standard notion of confluence in mind, we may  require that    $\exists \tmu$ such that
	$\tms\redd^* \tmu$,   $ \tmr\redd^* \tmu$ or   that  $\exists \tmu,\tmu'$ such
	that
	$\tms\redd^* \tmu$,   $ \tmr\redd^* \tmu'$ and  $\eqflat \tmu {\tmu'}$. Both are fine, but in \refsec{confluenceQ} we show that  a weaker notion of equivalence (which was already discovered in~\cite{AriolaBlom02})
	suffices---we only need to compare multidistributions w.r.t.\ their information content on normal forms.

	\begin{rem} In the case of \AST (and \SNlim),   all limits are maximal, hence \UNlim becomes \ULD\@.
\end{rem}


\subsubsection{Newman's Lemma Failure, and Proof Techniques for PARS}\label{sec:proof_techniques}\label{proof_techniques}
The statement of \refthm{confluence} ``Confluence implies \UNlim''  has the \emph{same flavour} as the analogue  one for ARSs, but the  \emph{notions are not the same}. The notion of limit  (and therefore that of  \UNlim, \SNlim, and \WNlim) does not belong to the theory of ARSs.  For this reason, the rewrite system $(mA,\redd)$ which we are studying is not simply an ARS\@.
One should not assume that standard  properties of ARSs  transfer to their asymptotic analog.
An illustration 
of this is   \textbf{Newman's  Lemma}.
Given a PARS, let us assume \AST\  and observe that  in this case,  confluence \emph{at the limit} can be identified with  \UNlim.
\emph{A wrong attempt:} 	\AST\ + \WCRlim $\Rightarrow$  \UNlim,
where  \textbf{\WCRlim:}
if  $\mm \redd \tms_1$ and $\mm \redd \tms_2$, then $\exists\tmr$, with  $\tms_1 \tolim \tmr$, $\tms_2 \tolim \tmr$.
This does not hold. A counterexample is the PARS in Fig.~\ref{fig:continuum}, which does satisfy \WCRlim.

	\begin{rem}
	\emph{	Could \emph{a different formulation} uncover properties similar to Newman Lemma?
		Another ``\emph{candidate}'' statement  we can attempt  is : \AST + WCR $\Rightarrow$ \UNlim.  Unfortunately, here we did not find an  answer. However, this property is   an interesting  case study.  It is not hard to show that such a property holds 
		when  $\Lim(\mm)$ is finite, or uniformly discrete,  meaning that---given a  definition of distance---there exist a minimal distance between two elements in $\Lim(\mm)$.  This fact  also implies that   a counterexample (if any) cannot be trivial. On the other side, if the property holds,
		the difficulty is  which proof technique to use, since well-founded induction is not available to us.}
	\end{rem}

What  is at play here is that the notion of \emph{termination} is not the same for  ARSs and for PARSs.
A fundamental fact of ARSs (on which all proofs of Newman's Lemma rely) is:  termination implies that   the rewriting relation   is  well founded.   All terminating ARSs allow well-founded induction as proof technique; this is\emph{ not the case} for probabilistic termination.
To transfer properties from ARSs to PARSs there are two   issues: we need to find the \emph{right formulation} and the \emph{right proof technique}.

Notice  that our counter-example above still leaves open the question
``Can a different formulation uncover properties similar to Newman's Lemma?''
Or, better,
``Are there  \emph{local properties} which guarantee \UNlim?''

\section{Quantitative Abstract Rewriting Systems}\label{sec:QARS}\label{sec:qars}

We observed that the notion of result as a limit does not belong to ARSs. However, in many arguments we do not need all the structure coming from PARS\@.
To be able to study   asymptotic rewriting, in this section we define Quantitative Abstract Rewriting Systems (QARS). As already noted, QARS are  a natural refinement of ARSI in~\cite{AriolaBlom02}---we simply move from partial orders to $\omega$-cpos.  The main contribution of this section is to provide a set of proof techniques, first to study properties of the limits, and then to compare reduction strategies.
Working abstractly allows us to study the  asymptotic  properties, capturing   the essence of the  arguments.

\paragraph{QARS} We can see computation as  a process that produces a result by gradually increasing the amount of  available information.
So a reduction sequence gradually computes a result by  converging (in a finite or infinite number of steps) to the maximal amount of information which it can produce.
The standard structure to express a result in terms of partial information is that of an $\omega$-cpo.

Recall that a partially ordered set $\Set=(\Set,\leq)$  is
an \textbf{$\omega$-complete partial order}  (\textbf{$\omega$-cpo}) if  every $\omega$-chain $\cpos_0\leq \cpos_1\leq \ldots$ has a supremum in $\Set$. We assume the partial order to  have  a least element $\bot$.
We denote the elements of $\Set$ with bold letters $\cpor,\cpos, \cpou$\ldots

Let $(\AA,\red)$ be an ARS\@. To each element
$\tmt\in \AA$
 it is associated  a notion of  (partial) information, which is modeled  by  a function from $\AA$  to an $\omega$-cpo.  Def.~\ref{def:qars} formalizes this intuition.
\begin{defi}[QARS]\label{def:qars}
	A Quantitative  ARS  (QARS)  is an ARS $(\AA,\red)$ together with a function  $\obs: \AA \red \Set$, where  $\Set$ is  an $\omega$-cpo and such that
	\begin{center}
		$\tm \red \tm'$ implies  $\obs (\tm)\leq \obs (\tm')$.
	\end{center}
\end{defi}
Intuitively, the   function $\obs$ observes a specific property of interest about $\tmt \in \AA$. The observation
$\obs(\tmt)$  indicates how much stable information $\tmt$ delivers: the information content is monotone increasing during computation.

\newcommand{\ch}[2]{\obs_{#1}(#2)}
\begin{exa}\label{ex:nfs}\label{ex:cpo_num} The following are  examples of QARS\@.
	\begin{enumerate}
		\item\label{p:ARSnf} ARSs: take $\Set=(\{0,1\},\leq)$ and the boolean function $\w \tmt =1$ if $\tmt$ is a normal form, $\w \tmt =0$ otherwise.

		\item  PARS:\@  take  $\Set=([0,1],\leq)$, and a function   which corresponds to a  probability measure, for example the probability to be in normal form $\w {\mm}\mapsto \nnorm \mm$ (as defined in \refsec{observations}).

	\end{enumerate}
\end{exa}

\noindent
The  observation $\obs(\tmt)$ does not need to take numerical values.    In the examples below,  $\Set$ is   an $\omega$-cpo of \emph{partial results}.
\begin{exa}\label{ex:cpo_nf}
	\begin{enumerate}
		\item ARS:\@ take for $\Set$ the flat order on normal forms, and define  the function $\w  \tmt =  \tmt$ if $\tmu$ is normal,
		$\w \tmt =  \bot $ otherwise.

		\item PARS:\@  take  for $\Set$ the $\omega$-cpo of subdistributions on normal forms $\DST{\NFA}$,
		and for $\obs$ the function $\nf -$, as defined in \refsec{observations}.

	\end{enumerate}
\end{exa}

\paragraph{Maximal rewrite sequences.}

	From now on, to indicate in a uniform way \emph{maximal} rewrite sequences, \emph{whenever finite or infinite, }
	we write  $\seq \tmt $   for an   {infinite} sequence such that either
	$\tmt_i\red \tmt_{i+1}$, or $\tmt_i=\tmt_{i+1} \in \NFA$ (hence,   $\seq \tmt$ is constant from an index $k$ on, \ie  $\tmt \red^* \tmt_k\not \red$).
Letters $\mathfrak{s,t}$ range over maximal  sequences.

We still write $\tmt\red^*\tms$ to indicate that there is a \emph{finite} sequence from $\tmt$ to $\tms$.

\subsection{Limits as   Results}\label{sec:limits}
In this section we let $\QQ=( (\AA,\red),\obs )$ be an arbitrary but fixed QARS\@.
Intuitively,
the result computed by a possibly infinite reduction  sequence $\seq \mm$ is the  \emph{limit} observation.

By definition, given a $\red$-sequence $\seq \mm$, its limit w.r.t. $\obs$
\begin{center}
	$ \sup_{n} \{\w{\mm_n} \} $.
\end{center}
always exists,  because    $\Set$ is an $\omega$-cpo.
{Intuitively,
this is the maximal amount of information produced by the  sequence,    the \emph{result} of that specific computation.}

If $\red$ is a deterministic reduction---and so from $\tm$   there is a unique maximal $\red$-sequence---it   is standard  to interpret  the   limit as the \emph{meaning} of   $\tmt$.
In a QARS, however,  $\tmt$ has \emph{several possible rewrite  sequences}, and therefore can produce  several results/have several  limits.


\begin{defi}[$\obs$-limits] For $\mm\in \AA$, we write
	%
	\[\mm \red\conv{\obs} \cpor\]
	if there exists  a sequence $\seq \mm$ from $\mm$  such that  $\sup_{n} \{\w{\mm_n} \} = \cpor$. Then
	\begin{itemize}
		\item  $\wLim (\mm) :=\{\cpor\mid \mm \red\conv{\obs} \cpor\}$ 

		\item $\den \mm$ denotes the greatest element of  $\wLim (\mm)$, if any exists.

	\end{itemize}

\end{defi}

\noindent
Informally, to  $\mm$ is associated a well-defined result, which we denote $\sem \mm$, if the maximal amount of information produced by any  reduction sequence  is well defined.
The intuition is that $\den \mm$ is well defined if different reduction sequences from $\mm$ do not produce ``essentially different'' results: if  $\cpos \not =  \cpos'$ then they are both \emph{approximants} of a  same result $\cpor$
(\ie, $\cpos, \cpos'   \leq  \cpor$).

Thinking of usual rewriting, consider  $\obs$ as defined in point 1, \refex{cpo_nf}. Then  to have a greatest limit  exactly corresponds to uniqueness of normal forms.

\begin{exa}\label{ex:convergence}Let us revisit \refex{cpo_num}.
	\begin{enumerate}
		\item ARS:\@	consider usual $\lam$-calculus with $\beta$-reduction. The term $\tm = (\lam x. z)(\Delta\Delta)$ has infinitely many  possible  $\red_{\beta}$-sequences.
	With the same definition of $\obs$ as in  \refex{cpo_num}, Point 1,	the set of  limits w.r.t. $\obs$ contains two elements: $\xLim{\obs}(\tm)= \{0, 1\}$.

		\item PARS (probabilistic  $\lam$-calculus): consider   $\m=\mset{1\, I\oplus \Delta\Delta}$, which  has   exactely one  maximal reduction sequence, starting with  $\m \Red 	\mdist{
			\two I, \two \Delta\Delta}\Red\dots$. Define $\obs(\m)=\nnorm \m$. In this case  $\m\Red\conv \obs \two$ and $\wLim(\m)=\{\two\}$.

	\end{enumerate}

\end{exa}

\noindent
Point 2.\ in \refex{convergence} shows well that the notion of result is \emph{quantitative}: $\m$ reaches a normal form with probability $\two$.
This also shows that  maximal elements of $\wLim(\m)$ do not need to be maximal elements of $\Set$; the reason for this choice is exactly that  terms like
$I\oplus \Delta\Delta$ (which converges with probability $\two$ rather than $1$) are natural in an untyped setting like $\lam$-calculus.
As a consequence, in general,  \emph{the  set of limits may or may not have maximal elements}.
	Note that, even if  $\wLim(\m)$  has maximal  elements,  a greatest limit  does not necessarily exist. The probabilistic  $\lam$-term  in \refex{motivation}  is a good example:  different reduction sequences lead to different limits.  Another clear example is \refex{continuum}: $a$ has an infinity of limits, all maximal.

	\begin{rem}[greatest limit]
\emph{We are interested  in  the case when \emph{a greatest limit} does exist. {The reason  is that if $\wLim(\m)$  has a sup $\cpos \in \Set$ which does not belong to $\wLim(\m)$, no rewrite sequence converges to $\cpos$. That is, we cannot compute $\cpos$ {internally in the calculus}}.}
	\end{rem}

	Like for PARS, the natural question is if  the result of computing an element $\mm$ is well defined. This is exactly the sense of the property \UNlim, which we can state in full generality  for QARS\@.
	\begin{center}
		A QARS satisfies property \UNlim~if~$\wLim (\mm)$ has a \emph{greatest} element.
	\end{center}
	Clearly, by definition:
	\begin{center}
		\UNlim\ \quad 	if and only if  \quad $\den \mm$ is defined.
	\end{center}

	\subsection{Confluence and \texorpdfstring{\UNlim}{UNlim}}\label{sec:GL}\label{sec:confluenceQ}

	In our   setting, maximal limits play a  role similar to  that of normal forms in ARSs. However, since termination is asymptotic, the situation is more complex than in a finitary case.
Notably,  in the case of QARS, \emph{confluence does not guarantee   \UNlim}, at least not in general.  In this section, we show that  confluence (and variants of it) imply the following:  if  a  maximal element exists in $\wLim(\m)$, it is the \emph{greatest} element. Note that such a property is stronger than uniqueness of maximal limits---however, it does not imply  \UNlim,   because  we have no guarantee that $\wLim(\m)$ contains any maximal element.

Fortunately, in the case of PARS, confluence \emph{does}  imply \UNlim. However, the proof (Section~\ref{sec:PARS_UN}) relies on
	more properties than the basic ones   which we have assumed for QARS\@.

	\begin{defi}[Confluence]\label{def:confluence}
		A QARS $((\AA,\red),\obs)$ satisfies
		\begin{itemize}
			\item 	Confluence if: for all
			$ \tms,\tmr \in \AA$ with  $\tms\leftstar\leftarrow\mm\red^*\tmr$, there exist   $ \tmu$ such
			that
			$\tms\red^* \tmu$,   $ \tmr\red^* \tmu$.

			\item  $ \obs $-Confluence if: for all
			$ \tms,\tmr \in \AA$ with  $\tms\leftstar\leftarrow\mm\red^*\tmr$, there exist   $  \tms',\tmr'$ such
			that
			$\tms\red^* {\tms'}$,   $ \tmr\red^* \tmr'$, and $\w{\tms'} = \w{\tmr'}$.

			\item Skew-Confluence\footnote{In~\cite{Faggian19}   we call this property  Semi-Confluence. The same property is studied in~\cite{AriolaBlom02}---here we adopt their terminology.} if:
			$\forall \tms,\tmr \in \AA$ with  $\tms\leftstar\leftarrow\mm\red^*\tmr$,  exists  $ \tms'$ such
			that
			$\tms\red^* \tms'$,  and $\w\tmr \leq  \w{\tms'}$.

		\end{itemize}
	\end{defi}
    \noindent
	Clearly,
	\begin{fact}\label{fact:confluence}
		Confluence $\implies$ $ \obs $-Confluence  $\implies$ Skew-Confluence.
	\end{fact}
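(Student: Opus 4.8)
The statement to prove is the chain of implications in Fact~\ref{fact:confluence}:
\[
\text{Confluence} \implies \obs\text{-Confluence} \implies \text{Skew-Confluence}.
\]

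The plan is to verify each implication directly from Definition~\ref{def:confluence}, since these are purely logical weakenings of the same diamond-closure requirement, all quantified over the same spans $\tms \leftstar\leftarrow \mm \red^* \tmr$. First I would handle Confluence $\implies$ $\obs$-Confluence: given a span $\tms \leftstar\leftarrow \mm \red^* \tmr$, Confluence supplies a common reduct $\tm'$ with $\tms \red^* \tm'$ and $\tmr \red^* \tm'$. I then take $\tms' := \tm'$ and $\tmr' := \tm'$; the required equation $\w{\tms'} = \w{\tmr'}$ holds trivially because $\tms' = \tmr'$, so reflexivity of equality closes this case with no use of the monotonicity of $\obs$ at all. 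Next I would handle $\obs$-Confluence $\implies$ Skew-Confluence: given the span, $\obs$-Confluence yields $\tms', \tmr'$ with $\tms \red^* \tms'$, $\tmr \red^* \tmr'$, and $\w{\tms'} = \w{\tmr'}$. The witness required by Skew-Confluence is a single $\tms'$ (reachable from $\tms$) with $\w{\tmr} \leq \w{\tms'}$. I take exactly the $\tms'$ just produced; then $\w{\tmr} \leq \w{\tmr'}$ by the QARS monotonicity property (Definition~\ref{def:qars}, applied along $\tmr \red^* \tmr'$), and $\w{\tmr'} = \w{\tms'}$, so $\w{\tmr} \leq \w{\tms'}$ by transitivity of $\leq$ in the $\omega$-cpo $\Set$.

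There is essentially no obstacle here — the content is bookkeeping — but the one point that genuinely uses a hypothesis beyond pure logic is the second implication's appeal to the defining inequality of a QARS: that $\tmt \red \tms$ implies $\obs(\tmt) \le \obs(\tms)$, iterated along a finite reduction to give $\tmr \red^* \tmr' \implies \w{\tmr} \le \w{\tmr'}$. I would make that the one explicit lemma-step invoked. (Strictly, Confluence already implies Skew-Confluence by composing the two implications, but stating all three arrows makes the hierarchy transparent, so I would present both implications separately as above.)

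\begin{proof}
Both implications are immediate from the definitions; we use only that $\Set$ is partially ordered and that, by Definition~\ref{def:qars}, $\tmt \red \tms$ implies $\w{\tmt} \le \w{\tms}$, hence $\tmt \red^* \tms$ implies $\w{\tmt}\le\w{\tms}$.

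\emph{Confluence $\implies$ $\obs$-Confluence.} Let $\tms \leftstar\leftarrow \mm \red^* \tmr$. By Confluence there is $\tm'$ with $\tms \red^* \tm'$ and $\tmr \red^* \tm'$. Put $\tms' := \tm'$ and $\tmr' := \tm'$. Then $\tms \red^* \tms'$, $\tmr \red^* \tmr'$, and $\w{\tms'} = \w{\tm'} = \w{\tmr'}$, which is exactly $\obs$-Confluence.

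\emph{$\obs$-Confluence $\implies$ Skew-Confluence.} Let $\tms \leftstar\leftarrow \mm \red^* \tmr$. By $\obs$-Confluence there are $\tms', \tmr'$ with $\tms \red^* \tms'$, $\tmr \red^* \tmr'$, and $\w{\tms'} = \w{\tmr'}$. Since $\tmr \red^* \tmr'$, the QARS monotonicity gives $\w{\tmr} \le \w{\tmr'}$, and therefore $\w{\tmr} \le \w{\tmr'} = \w{\tms'}$. As $\tms \red^* \tms'$, the element $\tms'$ witnesses Skew-Confluence for this span.
\end{proof}
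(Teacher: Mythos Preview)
Your proof is correct. The paper does not give an explicit proof of this fact (it is introduced with ``Clearly''), and your argument is exactly the obvious unfolding of Definition~\ref{def:confluence} together with the QARS monotonicity from Definition~\ref{def:qars} that the paper has in mind.
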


	In analogy to the normal form property of ARS (\NFP, see \refsec{ARS}), we define the following

	\begin{defi}[Limit Property (\LimP)]\label{def:LimP} A QARS $((\AA,\red),\obs)$ satisfies  the Limit Property   \LimP  if ($\forall \mm,\tms \in \AA$):


		\begin{center}
			$\cpor\in \wLim(\mm)$ and  $\mm\red^*\tms$ imply that  there exists $\cpos\in \wLim(\mm)$ such that
			$\tms \tocpo \cpos$ and $\cpor\leq \cpos$.
		\end{center}

	\end{defi}

	\begin{center}
		\begin{figure}\centering
			\fbox{		\includegraphics[width=0.5\textwidth]{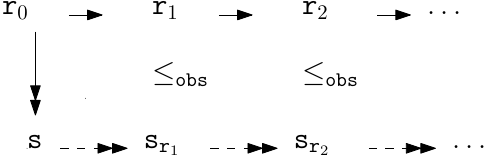}}
			\captionsetup{width=\linewidth}  \caption{Skew-Confluence implies  \LimP}\label{fig:weak_conf}
		\end{figure}
	\end{center}

	\begin{lem}[Main Lemma]\label{l:main}Given a QARS,
		Skew-Confluence implies    \LimP.
	\end{lem}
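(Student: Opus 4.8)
The plan is to take a maximal reduction sequence from $\mm$ that realizes the limit $\cpo p$ and to ``chase'' it from $\tms$ by repeatedly invoking Skew-Confluence, thereby assembling, step by step, a single reduction sequence out of $\tms$ whose observations dominate those of the original one. Throughout, $\obs$ is the fixed monotone observation of the QARS.

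So fix $\cpo p\in\wLim(\mm)$ together with $\mm\red^*\tms$; by definition of $\wLim$ there is a maximal reduction sequence $\seq\mm$ with $\sup_n\w{\mm_n}=\cpo p$. First I would construct, by induction on $n$, a chain $\tms=v_0\red^*v_1\red^*v_2\red^*\cdots$ with the invariant $\w{\mm_n}\le\w{v_n}$ for every $n$ (note that, the chain being a chain, $\tms\red^*v_n$ comes for free). The base case is immediate: $v_0=\tms$ and $\w{\mm_0}=\w\mm\le\w\tms$, since $\mm\red^*\tms$ and $\obs$ is monotone. For the inductive step, the key remark is that $\mm$ is a common ancestor of $v_n$ and $\mm_{n+1}$: indeed $\mm\red^*\tms\red^*v_n$ (the second reduction being the prefix of the chain already built), while $\mm\red^*\mm_n\red\mm_{n+1}$ (the degenerate case $\mm_n=\mm_{n+1}\in\NFA$ is handled trivially by setting $v_{n+1}=v_n$). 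Applying Skew-Confluence to the peak $v_n\leftstar\leftarrow\mm\red^*\mm_{n+1}$ produces $v_{n+1}$ with $v_n\red^*v_{n+1}$ and $\w{\mm_{n+1}}\le\w{v_{n+1}}$ — precisely the invariant for $n+1$ (compare \reffig{weak_conf}).

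To conclude, I would concatenate the finite segments $v_n\red^*v_{n+1}$ into one reduction sequence issuing from $\tms$ and, if necessary, extend it to a \emph{maximal} sequence $\seq w$ (prolonging it arbitrarily should it be finite and end in a non-normal form, padding with the constant value at a normal form, as usual); by monotonicity of $\obs$ this extension never decreases the supremum of the observations. Since $\Set$ is an $\omega$-cpo the value $\cpo q:=\sup_n\w{w_n}$ exists, so $\tms\tocpo\cpo q$, that is $\cpo q\in\wLim(\tms)$. Every $v_n$ occurs among the $w_m$, hence $\cpo q\ge\sup_n\w{v_n}\ge\sup_n\w{\mm_n}=\cpo p$, which gives $\cpo p\le\cpo q$ and establishes \LimP.

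The delicate point — and the one worth stating carefully — is how to set up the induction so that it actually composes. The naive invariant $\mm_n\red^*v_n$, under which the bound $\w{\mm_n}\le\w{v_n}$ would be automatic, does \emph{not} survive the inductive step: Skew-Confluence lets one reduce only \emph{one} side of a peak while merely bounding the observation on the other, so it returns no common reduct of $\mm_{n+1}$ and $v_n$. The remedy is to carry instead the invariant $\tms\red^*v_n$ — which is what makes the $v_n$ lie on a genuine reduction sequence from $\tms$ — and to re-derive the observation bound at each stage from the \emph{fixed} ancestor $\mm$ rather than from the moving point $\mm_n$. The remaining bookkeeping (turning a chain of $\red^*$-segments into a bona fide maximal $\red$-sequence, and checking that this cannot lower the limit) is routine and uses only monotonicity of $\obs$.
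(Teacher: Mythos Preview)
Your argument is correct and coincides with the paper's own proof: build a chain $\tms=v_0\red^*v_1\red^*\cdots$ by repeatedly applying Skew-Confluence to the peak $v_n\leftstar\leftarrow\mm\red^*\mm_{n+1}$ (always anchoring at the fixed ancestor $\mm$), obtain $\w{\mm_n}\le\w{v_n}$ for all $n$, and conclude $\cpo p\le\cpo q$ for the resulting limit $\cpo q$. One tiny bookkeeping point: \LimP\ asks for $\cpo q\in\wLim(\mm)$, whereas you only state $\cpo q\in\wLim(\tms)$; the missing inclusion $\wLim(\tms)\subseteq\wLim(\mm)$ is immediate from $\mm\red^*\tms$ (prepend the finite prefix), and the paper does note it explicitly.
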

	\begin{proof}
	Let $\tmr_0, \tms\in\AA$, ${\cpor}\in \wLim(\tmr_0)$,  $\tmr_0\red^*\tms$.
		  Let  $\seq \tmr$ be a sequence with  limit  ${\cpor}$.
		As illustrated in Fig.~\ref{fig:weak_conf},
		starting from $\tms$, we build a sequence  $\tms=\tms_{\tmr_0}\red^* \tms_{\tmr_1} \red^*\tms_{\tmr_2}\dots$, where $\tms_{\tmr_i}$, $i \geq 1$ is given by Skew Confluence: from   $\tmr_0\red^* \tmr_i$  and
		$\tmr_0\red^*\tms_{\tmr_{i-1}}$
		we  obtain
		$\tms_{\tmr_{i-1}}\red^* \tms_{\tmr_{i}}$  with
		$\w{\tmr_{i} } \leq  \w{\tms_{\tmr_{i}}}$.
		Let ${\cpos} $ be the limit of the  sequence so obtained; observe that   $ {\cpos}\in \wLim(\tmr_0)$.  By  construction,    $\forall i$,
		$\w{\tmr_i}\leq  \w{\tms_{\tmr_{i}}} \leq {\cpos} $. From ${\cpor}=\sup{\langle{\w{\tmr_n}}\rangle}$ it follows that  ${\cpor}\leq {\cpos}$.
	\end{proof}

	\LimP\ implies that  if a  maximal limit exists, it is the \emph{greatest} limit.

	\begin{prop}[Greatest  limit]\label{prop:unique}\label{prop:NFP} Given a QARS $((\AA,\red),\obs)$, and $\mm\in \AA$,
		 \LimP\ implies that if  $\wLim(\mm)$ has a maximal element, then
		it is the greatest element.
	\end{prop}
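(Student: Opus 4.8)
The plan is to show that any maximal element $\cpo p \in \Lim(\mm)$ already dominates every other limit, so it must be the unique greatest element. Suppose $\cpo p \in \Lim(\mm)$ is maximal, and let $\cpo q \in \Lim(\mm)$ be an arbitrary limit. I want to produce a single limit that lies above both, and then use maximality of $\cpo p$ to collapse things.

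First I would unfold the definitions: since $\cpo q \in \Lim(\mm)$, there is a reduction sequence $\seq \mm$ from $\mm$ with $\sup_n \w{\mm_n} = \cpo q$. The key idea is to apply \LimP\ along this sequence with respect to the other limit $\cpo p$. Concretely, for each $n$ we have $\mm \red^* \mm_n$ and $\cpo p \in \Lim(\mm)$, so \LimP\ gives some $\cpo q_n \in \Lim(\mm)$ with $\mm_n \tocpo \cpo q_n$ and $\cpo p \leq \cpo q_n$. Now $\cpo q_n$ is obtained by extending the prefix $\mm \red^* \mm_n$ with some convergent tail, so in particular $\w{\mm_n} \leq \cpo q_n$ as well (the observation is monotone and $\cpo q_n$ is a sup over a sequence passing through $\mm_n$). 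Thus each $\cpo q_n$ is an upper bound of both $\cpo p$ and $\w{\mm_n}$. Since $\cpo p$ is a \emph{maximal} element of $\Lim(\mm)$ and $\cpo p \leq \cpo q_n \in \Lim(\mm)$, maximality forces $\cpo q_n = \cpo p$. Hence $\w{\mm_n} \leq \cpo p$ for every $n$, and taking the supremum over $n$ yields $\cpo q = \sup_n \w{\mm_n} \leq \cpo p$.

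This shows $\cpo p$ is an upper bound of $\Lim(\mm)$; since $\cpo p$ itself lies in $\Lim(\mm)$, it is the greatest element, which also gives uniqueness of maximal elements as a corollary. The step I expect to need the most care is the claim $\w{\mm_n} \leq \cpo q_n$: I should make sure \LimP's conclusion $\mm_n \tocpo \cpo q_n$ really is witnessed by a sequence \emph{through} $\mm_n$ (so that monotonicity of $\obs$ along that sequence gives $\w{\mm_n} \leq \cpo q_n$), rather than merely asserting $\cpo p \leq \cpo q_n$. Re-reading \refdef{LimP}, the conclusion $\tms \tocpo \cpo q$ means exactly that $\cpo q$ is a limit of some sequence \emph{starting from} $\tms = \mm_n$, so this is fine. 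The only other subtlety is that I am invoking \LimP\ with the roles arranged so that $\cpo p$ (not $\cpo q$) is the limit being propagated along the $\cpo q$-sequence; this is legitimate since $\cpo p \in \Lim(\mm)$ and $\mm \red^* \mm_n$ are both available. No appeal to $\Set$ having any structure beyond being an $\omega$-cpo is needed.
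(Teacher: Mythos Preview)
Your proof is correct and follows essentially the same argument as the paper's: take a maximal $\cpo p$, pick any other limit $\cpo q$ witnessed by a sequence $\seq{\mm}$, apply \LimP\ at each $\mm_n$ to get $\cpo q_n \geq \cpo p$, use maximality to force $\cpo q_n = \cpo p$, and conclude $\w{\mm_n} \leq \cpo p$ for all $n$, hence $\cpo q \leq \cpo p$. Your explicit justification that $\w{\mm_n} \leq \cpo q_n$ (because $\cpo q_n$ is the limit of a sequence starting at $\mm_n$) makes a step precise that the paper leaves implicit.
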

	\begin{proof}Let ${\cpor} \in \wLim(\mm)$ be maximal. For each  ${\cpou}\in \wLim(\mm)$, there is a sequence $\seq \mm$ from $\mm$ such that ${\cpou} = \sup_n \w{\mm_n}$.
		\LimP\   implies that  $\forall n$,  $\mm_n\toinf {\cpos}_n \geq {\cpor}$. By maximality of ${\cpor}$,  $ {\cpos}_n = {\cpor}$ and therefore $\w{\mm_n}\leq {\cpor}$.
		From ${\cpou}=\sup_n \w{\mm_n}$ we conclude that  ${\cpou}\leq {\cpor}$, that is, ${\cpor} $ is \emph{the greatest element} of
		$\wLim(\mm)$.
	\end{proof}

	Given a confluent QARS, to  guarantee that \UNlim holds,  and therefore for each $\mm\in \AA$,  $\sem \mm$ is defined, it suffices to establish that $\wLim(\mm)$ has a maximal element.

In \refsec{PARS_UN}	we prove that in the case of PARS, confluence implies  the existence of a maximal element and therefore of a  greatest element.
	To do that, we use more structure, namely the fact that the $\omega$-cpo $\DSTNF$ is equipped with an order-preserving norm $\norm{}: \DSTNF \to [0,1] $.

\subsection{Observing in the unit interval}\label{sec:bQARS}
	\renewcommand{\weight}{\obs}
	Let us consider the  case  of QARS where the associated  $\omega$-cpo is the \emph{bounded} interval $[0,1]\subset \Real$,  equipped with the standard order.

	Assume fixed  a QARS $\Q=((\AA,\red), \weight)$ such that   $\weight: \AA\to [0,1]\subset \Real$.
	We show that  property \LimP\  implies that $p= \sup\, \wwLim(\mm)  $ \emph{belongs} to  $ \wwLim(\mm)  $, where
	$\wwLim(\mm) = \{ q ~\mid~ \mm \tolimww{\red} q  \} $. Therefore, \UNlim holds.

	We need a technical lemma
	\begin{lem}\label{l:Lmain} Let $\Q$,  $\Norms (\mm)$ and $p $ be as above. For each $\epsilon >0$, property \LimP\ implies the following:
		if $ q\in \Norms (\mm)$, $|p-q|\leq \epsilon$, and $\mm \red^* \tms$, then  there exists $\tmt$, such that $\ms \red^* \tmt $ and $|p-\ww{\tmt}| \leq 2\epsilon $.
	\end{lem}
	\begin{proof}The assumption $\mm \red^* \tms$ and
		\LimP imply that there exists a rewrite sequence $\seq \tms$ from $\tms$ which converges to  $q'\geq q$; clearly $|p-q'|\leq \epsilon$.

		By definition of limit of a sequence,   there is an index $k_{\epsilon}$    such that
		$|q'-\ww {\tms_{k_{\epsilon}}}|\leq \epsilon $, hence $|p-\ww {\ms_{k_{\epsilon}}}|\leq 2\epsilon$. Since $\ms \red^* \ms_{k_{\epsilon}}$, $\tmt=\ms_{k_{\epsilon}}$ satisfies the claim.
	\end{proof}

	\begin{prop}[Greatest limit]\label{prop:bool_UN} Given a  QARS $\Q=((\AA,\red), \weight)$ such that   $\weight: \AA\to [0,1]$, property \LimP\ implies that  $\wwLim(\mm) $ has a greatest element.
	\end{prop}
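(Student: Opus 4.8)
The plan is to construct a single reduction sequence from $\mm$ whose limit equals $p := \sup \wwLim(\mm)$; since $p$ is by definition an upper bound of $\wwLim(\mm)$, exhibiting $p$ as the limit of some reduction sequence shows at once that $p\in\wwLim(\mm)$ and hence that it is the greatest element. Observe first that $\wwLim(\mm)\neq\emptyset$: every maximal reduction sequence $\seq\mm$ (which exists by the conventions on rewrite sequences) witnesses some element of $\wwLim(\mm)$, and a nonempty subset of $[0,1]$ has a supremum. This is exactly where the hypothesis that the $\omega$-cpo is the linear order $[0,1]$ enters.

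Next I would build, by induction on $n$, a chain of finite reductions $\mm = \tms_0 \red^* \tms_1 \red^* \tms_2 \red^* \cdots$ with $\lvert p - \ww{\tms_n}\rvert \le 2^{-n}$ for all $n\ge 1$. Set $\tms_0 := \mm$. Assume $\tms_n$ is defined and $\mm\red^*\tms_n$. Since $p=\sup\wwLim(\mm)$, choose $q_n\in\wwLim(\mm)$ with $\lvert p - q_n\rvert \le 2^{-(n+2)}$. Now apply Lemma~\ref{l:Lmain} with $\epsilon := 2^{-(n+2)}$, $q := q_n$ and $\tms := \tms_n$: the three hypotheses $q\in\wwLim(\mm)$, $\lvert p-q\rvert\le\epsilon$ and $\mm\red^*\tms_n$ all hold, so we obtain $\tms_{n+1}$ with $\tms_n\red^*\tms_{n+1}$ and $\lvert p - \ww{\tms_{n+1}}\rvert \le 2\epsilon = 2^{-(n+1)}$. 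In particular $\mm\red^*\tms_{n+1}$, so the induction continues.

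Concatenating the finite reductions $\tms_n\red^*\tms_{n+1}$ produces a reduction sequence issued from $\mm$; complete it to a maximal sequence $\seq\mm$ as in the conventions (this only appends a constant tail if a normal form is reached). Let $q' := \sup_n\ww{\mm_n}$ be its limit, so by definition $q'\in\wwLim(\mm)$ and hence $q'\le p$. Conversely, each $\tms_n$ occurs as one of the $\mm_m$, so by monotonicity of $\obs$ along $\red$ (the defining property of a QARS) we get $q' \ge \ww{\tms_n} \ge p - 2^{-n}$ for every $n\ge 1$, hence $q'\ge p$. Therefore $q' = p$, so $p\in\wwLim(\mm)$, and being an upper bound of $\wwLim(\mm)$ it is its greatest element.

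The only genuinely delicate point is the inductive step, and there Lemma~\ref{l:Lmain} (itself the consequence of \LimP) carries the argument: a priori there is no reason why a reduction sequence attaining a limit close to $p$ should be reachable from an already-constructed $\tms_n$, but \LimP\ lets us ``steer'' any partial computation back toward a near-optimal one while losing only a controlled amount ($\epsilon\mapsto 2\epsilon$) in the observation. The linear order on $[0,1]$ is what makes ``near-optimal'' meaningful and lets the approximation errors be summed away as $n\to\infty$; this is why the statement is confined to boolean QARS and does not hold for arbitrary $\omega$-cpos.
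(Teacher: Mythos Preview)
Your proof is correct and follows essentially the same approach as the paper: build a sequence $\mm=\tms_0\red^*\tms_1\red^*\tms_2\red^*\cdots$ by repeatedly invoking Lemma~\ref{l:Lmain} so that $\ww{\tms_n}$ approaches $p=\sup\wwLim(\mm)$, and conclude that the concatenated sequence has limit exactly $p$. Your choice of tolerance $2^{-n}$ (instead of the paper's $p/2^n$) is in fact slightly cleaner, as it avoids a degenerate case when $p=0$, and your explicit two-sided argument that the resulting limit $q'$ satisfies both $q'\le p$ and $q'\ge p$ is more detailed than the paper's ``by construction''.
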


	\begin{proof}
		Let $p =\sup~\wwLim(\mm)$. We show that   $p\in \wwLim(\mm)$,
		by building  a rewrite  sequence $\seq \mm$ from $\mm$ such that  $\seq \mm \tolim  p$.

			For each $k\in \Nat$, we define   $\epsilon_k=\frac{1}{2^k}$.
			Observe that  for each $\epsilon\in \Real^+$, there exists   $q\in \Norms  $ such that   $ |p-q| \leq \epsilon$.

	Let $\tms_{0}=\mm$. From here, we build a sequence of reductions
		$\mm\red^* \tms_{1}\red^*\tms_{2}\red^*\dots$ whose limit is  $p$, as follows. 
		For each $k>0$:
		\begin{itemize}
			\item
			there exists {$q_k\in \Norms(\mm)$} such that   $ |p-q_k| \leq \frac{\epsilon_k }{2} $

			\item  From $\mm\red^* \tms_{k-1}$, we use \reflem{Lmain} to establish that there exists  $\tms_{k}$ such that $\tms_{k-1}\red^* \tms_{k}$ and
			$|p-\ww{{\tms_{k}}}| \leq \epsilon_k$   ($p, ~q_k, ~\tms_{k-1}, ~\tms_{k} $ resp.\ instantiate $p, q, ~\tms, ~\tmt$ of \reflem{Lmain}).
		\end{itemize}
        \noindent
		Let $\seq \mm$ be the concatenation of all the  finite sequences $\tms_{k-1}\red^* \tms_{k}$. By construction,  $\lim_{n\to \infty} \langle \ww{\mm_n} \rangle =p$. We conclude that   $p\in  \Norms(\mm)$.
	\end{proof}

	\subsection{PARS:\texorpdfstring{\@}{} Confluence implies \texorpdfstring{\UNlim}{UNlim}}\label{sec:PARS_UN}
	We now can show that in the case of PARSs, confluence (in all its variants) implies \UNlim (\refthm{confluence}) and therefore for each $\mm$, $ \sem \mm $ is defined.
	In this section, we fix a PARS $(\MA, \redd)$, and define  $\PP=((\MA, \redd), \nf -)$  and $\PPb=((\MA, \redd), \nnorm{-})$,
	where $\nf{-}$ and $\nnorm{-}$ are as  defined in  Section~\ref{sec:observations}.
	It is immediate to check that
	\begin{fact}$\PP=((\MA, \redd), \nf -)$  and $\PPb=((\MA, \redd), \nnorm{-})$ are QARS\@.
	\end{fact}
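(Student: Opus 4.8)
By \refdef{qars}, to check that $\PP=((\MA,\redd),\nf{-})$ and $\PPb=((\MA,\redd),\nnorm{-})$ are QARS we must verify three things: (i) $(\MA,\redd)$ is an ARS; (ii) the codomains $\DST{\NFA}$ and $[0,1]$ are $\omega$-cpos with a least element; (iii) the observation functions are monotone along reduction, i.e. $\mm_1\redd\mm_2$ implies $\nf{\mm_1}\leq\nf{\mm_2}$ (resp. $\nnorm{\mm_1}\leq\nnorm{\mm_2}$).

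For (i), $\MA$ is a set and $\redd$ a binary relation on it (Def.~\ref{def:lift}), so $(\MA,\redd)$ is an ARS by definition; nothing to prove. For (ii), the case of $[0,1]$ is standard: it is partially ordered by $\leq$, has least element $0$, and every chain (being a bounded monotone sequence of reals) has a supremum in $[0,1]$. For $\DST{\NFA}$, recall it carries the pointwise order of \refsec{dist}, with least element $\zero$ (empty support). It remains to see that every chain $\alpha_0\leq\alpha_1\leq\cdots$ of subdistributions over $\NFA$ has a supremum: define $\beta(t):=\sup_n\alpha_n(t)=\lim_{n\to\infty}\alpha_n(t)$, which exists in $[0,1]$ since each sequence $\langle\alpha_n(t)\rangle_n$ is nondecreasing and bounded. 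By Fact~\ref{lem:MCT}(3) — an instance of Monotone Convergence for Sums (Thm.~\ref{thm:MCS}) — $\beta$ is again a subdistribution over $\NFA$, hence $\beta\in\DST{\NFA}$; and $\beta$ is manifestly the least upper bound of the chain in the pointwise order. So $\DST{\NFA}$ is an $\omega$-cpo.

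For (iii), this is exactly Lemma~\ref{lem:basic}: if $\mm_1\redd\mm_2$ then $\nf{\mm_1}\leq\nf{\mm_2}$ and $\nnorm{\mm_1}\leq\nnorm{\mm_2}$ (the underlying reason being clause $(L1)$ of Def.~\ref{def:lift}, which keeps normal forms unchanged along a step, so the probability mass on any normal form can only grow). Combining (i)–(iii) gives the claim.

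The only step with any content is (ii) for $\DST{\NFA}$, namely that the pointwise supremum of a chain of subdistributions stays a subdistribution — but this is precisely what Fact~\ref{lem:MCT} provides, so even that is immediate given the material already established. Everything else is definition chasing, which is why the statement is labelled as a \textit{fact}.
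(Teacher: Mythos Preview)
Your proposal is correct and matches the paper's approach: the paper simply states ``It is immediate to check that'' without giving any proof, and your verification of (i)--(iii) via Def.~\ref{def:qars}, Fact~\ref{lem:MCT}, and Lemma~\ref{lem:basic} is exactly the definition-chasing the paper has in mind. You have just written out explicitly what the paper leaves implicit.
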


	Recall that $ \nnorm{-} $ is induced by composing  $\nf{-}$ with the norm     $\norm{~} : \DSTNF \to [0,1]$, and that letters $\alpha,\beta,\gamma$ denote elements in $\DSTNF$.

    \bigskip
	First, we observe that
	\begin{fact}
$\PPb$  satisfies the conditions of  \refprop{bool_UN}. Therefore,  if $\PPb$ satisfies confluence (and so  \LimP), then $\pLim (\mm)$ has a greatest element.
	\end{fact}

	We now  lift the result to $\PP$. Precisely, we  prove that for $\PP$,
	 property \LimP\ (\refdef{LimP})  implies
	\emph{existence  of a maximal element} $\alpha$ of $\Lim(\mm)$. Then  (by \refprop{unique}) $\alpha$ is the greatest element of $\Lim(\mm)$.
We rely on the following properties, which we already established in \refsec{limit}.
	\begin{itemize}
		\item $\alpha <\beta$ implies   $\norm \alpha < \norm \beta$,

		\item $\pLim(\mm)=\{ \norm {\beta} ~\mid~  \beta\in \Lim(\mm) \}$
	\end{itemize}

	\begin{lem}
		If $\PP$ satisfies \LimP, then  $\PPb$ also does.  Similarly for  all variants of confluence in \refdef{confluence}.
	\end{lem}

	\begin{prop}[Maximal elements]\label{prop:exist} If $\PP$ satisfies \LimP,  then
		$\Lim(\mm)$ has maximal elements.
	\end{prop}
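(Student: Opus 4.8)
The plan is to transfer the existence of a greatest element from the boolean QARS $\PPb$ to $\PP$ by exploiting the norm $\norm{-} : \DSTNF \to [0,1]$, which is order-preserving and strictly monotone on the relevant pairs. By the previous lemma, \LimP\ for $\PP$ entails \LimP\ for $\PPb$, and then \refprop{bool_UN} gives that $\pLim(\mm)$ has a greatest element; call it $p$. Using $\pLim(\mm)=\{\norm\beta \mid \beta\in\Lim(\mm)\}$, fix some $\alpha\in\Lim(\mm)$ with $\norm\alpha = p$. The claim is that this $\alpha$ is maximal in $\Lim(\mm)$.

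The argument is by contradiction on monotonicity of the norm. Suppose $\alpha$ is not maximal; then there is $\beta\in\Lim(\mm)$ with $\alpha<\beta$ (strict in the pointwise order). By the first bullet recalled just before the statement, $\alpha<\beta$ implies $\norm\alpha<\norm\beta$, so $\norm\beta > p$. But $\norm\beta\in\pLim(\mm)$ and $p$ is the greatest element of $\pLim(\mm)$, a contradiction. Hence $\alpha$ is maximal, and in particular $\Lim(\mm)$ has a maximal element. (Combined with \refprop{unique} and \LimP, this maximal element is then the greatest element of $\Lim(\mm)$, which is what \UNlim\ asks; but the proposition as stated only claims the existence of maximal elements, so the contradiction argument above suffices.)

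I would present the proof in exactly this order: first invoke the transfer lemma and \refprop{bool_UN} to get $p=\max\pLim(\mm)$; then pick a witness $\alpha\in\Lim(\mm)$ of norm $p$ via the identity $\pLim(\mm)=\{\norm\beta\mid\beta\in\Lim(\mm)\}$; then run the one-line contradiction using strict monotonicity of $\norm{-}$ on strictly ordered pairs. The only real subtlety — and the step I expect to be the crux — is that this maximal element need \emph{not} be a maximal element of the ambient $\omega$-cpo $\DSTNF$ (as the paper stresses with the $[0,1[$ analogy and the $I\oplus\Delta\Delta$ example): the norm being maximized does not force $\alpha$ itself to be $\leq$-maximal in $\DSTNF$, only $\leq$-maximal \emph{within $\Lim(\mm)$}. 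What makes the argument go through is precisely that strict increase in the pointwise order forces strict increase of the norm, so a norm-maximal element of $\Lim(\mm)$ cannot be strictly dominated inside $\Lim(\mm)$. No well-foundedness or Zorn-type argument is needed; the real line's structure (the sup of $\pLim(\mm)$ is attained, by \refprop{bool_UN}) does all the work.
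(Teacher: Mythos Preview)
Your proof is correct and follows essentially the same route as the paper: obtain the greatest element $p$ of $\pLim(\mm)$ via the boolean case, pick a witness $\alpha\in\Lim(\mm)$ with $\norm\alpha=p$, and observe that $\alpha$ is maximal because a strict increase $\alpha<\gamma$ in $\Lim(\mm)$ would force $\norm\alpha<\norm\gamma$, contradicting maximality of $p$. Your write-up is in fact slightly more explicit than the paper's (you spell out the transfer of \LimP\ to $\PPb$ and the use of $\pLim(\mm)=\{\norm\beta\mid\beta\in\Lim(\mm)\}$ to choose the witness), but the argument is the same.
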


	\begin{proof}   $\xLim{\nnorm{-}}(\mm) $ has a greatest element.
		We observe that if  $\alpha \in \Lim(\mm) $ and $\norm \alpha$ is maximal in $\xLim{\nnorm{-}}(\mm)$, then $\alpha$ is maximal in $\Lim(\mm)$ (because if  $\gamma \in \Lim(\mm)$ and $\gamma >\alpha$, then $\norm \gamma > \norm \alpha$).
	\end{proof}

	\begin{thm}[Confluence implies \UNlim]\label{thm:confluence}Given a PARS, any variant of confluence in \refdef{confluence}  implies \UNlim.
	\end{thm}

	\begin{proof}
		The claim follows  from Fact~\ref{fact:confluence}, \reflem{main}, and   Propositions~\ref{prop:unique} and~\ref{prop:exist}.
	\end{proof}


	%
	\section{Tools for the analysis of QARS }

We closed \refsec{proof_techniques} with the question:
\begin{center}
		``Are there  \emph{local properties} which guarantee \UNlim?''
\end{center}
This section  develops     criteria of this kind.

    \bigskip
		If the result $\sem \mm$ of computing $\m$ is well defined, the next  natural question is how to compute it: does there exist a strategy $\reds \ \subseteq \ \red$ whose limit is guaranteed to
	be $\sem \mm$? More generally:  does there  exist a strategy $\reds \ \subseteq \ \red$ whose limit is guaranteed to
	be a maximal element of $\wLim(\m)$, if it exists?

	We introduce some tools to help in this analysis. Our focus is  on  properties which can be expressed by   \emph{local conditions}.

	\subsection{Weighted Random Descent}\label{sec:wRD}\label{sec:RD}
	We present a method  to establish---with a \emph{local} test---that for each  element $\m$ of a QARS, $\wLim(\m)$  contains a \emph{unique element} by  generalizing the ARS property of Random Descent.
	Random Descent is not only an elegant  technique in rewriting, developed in~\cite{Oostrom07,OostromT16}, but adapts well and naturally to  the asymptotic setting.


	\paragraph{Random Descent.}
	A reduction $\red$ has \emph{random descent} (RD)~\cite{Newman} if whenever an element  $t$ has normal form, then all rewrite
	sequences from $t$ lead to it, and all have the same length.
	The best-known property which implies RD, as first observed by Newman~\cite{Newman}, is the following
	\begin{center}
		\emph{RD-diamond}: if  $  s_1 \leftarrow t \rightarrow s_2$ then  either $s_1=s_2$, or $  s_1 \rightarrow u \leftarrow  s_2$ for some $u$.
	\end{center}
	This is only a sufficient condition. Quite surprisingly, Random Descent can be characterized by a local (one-step) property~\cite{Oostrom07}.

	\paragraph{Weighted Random Descent}
	We  generalize Random Descent to  observations. The property  $\obs$-RD states that even though  an element $\mm$ may have different reduction sequences, they are all \emph{indistinguishable} if regarded through the lenses of $\obs$.  That is, if we consider all reduction sequences $\seq \mm$ starting from the same $\mm$, they all induce the same
	$\omega$-chain   $\langle \w{\mm_n}\rangle$.
	Obviously, if  all $\omega$-chains  from $\mm$ are equal, they all have the same limit $ \sup_{n} \{\w{\mm_n} \} $.

	The main  technical result of the section  is a \emph{local characterization} of the property (Thm~\ref{EB char}), similarly to~\cite{Oostrom07}.

	\begin{defi}[Weighted Random Descent]\label{def:RD} The QARS  $((\AA,\red),\obs)$   satisfies the following properties (illustrated in Fig.~\ref{fig:RD})
		if they hold for each $\mm\in \AA$.
		\begin{enumerate}
			\item {$\obs$-RD}:  for each pair of $\red$-sequences   $\seq \tmr$,   $\seq \tms$ from  $\tmt$,  and  for each $n\in \Nat$:  $\w {\tmr_n} = \w {\tms_n}$.
			\item \textbf{local} $\obs$-RD:\@   if  $\tmr \leftarrow \mm\red \tms$,   there exists a pair of sequences $\seq \mr$ from $\mr$ and
			$\seq \ms$ from $\ms$ such that,  for  each $n\in \Nat$, $ \w {\ms_n}= \w{\mr_n}$.

			%

		\end{enumerate}
	\end{defi}

	\begin{exa}Let us revisit the Random Descent property of weak  Call-by-Value $\lam$-calculus. Let $I=\lam z.z$;
		the following are two different  $\wred$-sequences from the term $(II)(Ix)$.
		\begin{enumerate}
			\item $(II)(Ix) \wred I(Ix) \wred  Ix \wred x$
			\item $(II)(Ix) \wred II(x) \wred Ix \wred  x$
		\end{enumerate}
		Let $\obs: \Lambda \to \{0,1\}$	be $1$ if the term is a value (\ie, a variable or an abstraction), $0$ otherwise. Seen through the lenses of $\obs$, both sequences appear as
		$ \langle	0,~ 0 ,~ 0,~  1 \rangle $.
	\end{exa}

	\begin{exa}
		In Fig.~\ref{fig:continuum}  $\obs$-RD holds for $\obs=\norm{\nf -}$, and  not for $\obs=\nf{-}$.
	\end{exa}

	\begin{figure}\centering
		\fbox{
			\begin{minipage}[c]{0.35\textwidth}
				\includegraphics[width=0.9\textwidth]{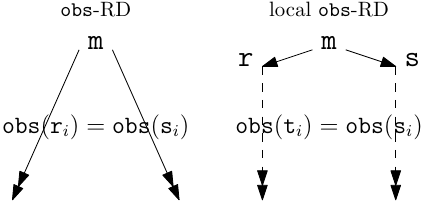}
				\captionsetup{width=\linewidth}  \caption{Random Descent}%
				\label{fig:RD}
			\end{minipage}
		}
		\fbox{
			\begin{minipage}[c]{0.2\textwidth}
				\includegraphics[width=0.8\textwidth]{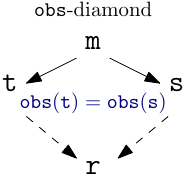}
				\captionsetup{width=\linewidth}  \caption{Diamond}%
				\label{fig:diamond}
			\end{minipage}
		}
		\fbox{
			\begin{minipage}[c]{0.3\textwidth}
				\includegraphics[width=0.8\textwidth]{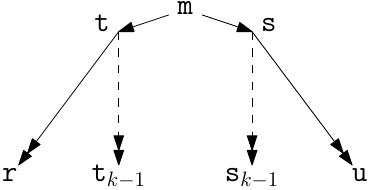}
				\captionsetup{width=\linewidth}  \caption{Proof of Thm.~\ref{RD char}}%
				\label{fig:RDproof}
			\end{minipage}
		}
	\end{figure}

	It is immediate that
	\begin{prop}\label{thm:RD}
		If a QARS $((\AA,\redd),\obs)$ satisfies $\obs$-RD, then for each $\mm\in \AA$,  $\wLim(\mm)$ has a unique element.
	\end{prop}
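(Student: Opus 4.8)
The plan is to unfold the definitions and observe that $\obs$-RD forces every maximal reduction sequence from a fixed $\mm$ to produce the very same $\omega$-chain of observations, so they all have the same supremum in $\Set$. First I would fix $\mm \in \AA$ and let $\cpo p, \cpo q \in \wLim(\mm)$ be arbitrary limits. By definition of $\wLim$, there are maximal $\red$-sequences $\seq \mm$ and $\seq{\mm'}$ from $\mm$ (writing $\mm'_0 = \mm$) such that $\cpo p = \sup_n \w{\mm_n}$ and $\cpo q = \sup_n \w{\mm'_n}$.

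Next I would apply $\obs$-RD directly: taking $\seq \tmr := \seq \mm$ and $\seq \tms := \seq{\mm'}$ (both are $\red$-sequences from $\mm = \tmt$), the property gives $\w{\mm_n} = \w{\mm'_n}$ for every $n$. Hence the two $\omega$-chains $\langle \w{\mm_n}\rangle_{n}$ and $\langle \w{\mm'_n}\rangle_{n}$ are pointwise equal, so their suprema in the $\omega$-cpo $\Set$ coincide: $\cpo p = \sup_n \w{\mm_n} = \sup_n \w{\mm'_n} = \cpo q$. Since $\cpo p$ and $\cpo q$ were arbitrary elements of $\wLim(\mm)$, the set $\wLim(\mm)$ has exactly one element. (That $\wLim(\mm)$ is nonempty is automatic, since $\mm$ itself has at least one maximal reduction sequence and every $\red$-sequence has a supremum because $\Set$ is an $\omega$-cpo.)

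There is essentially no obstacle here — the statement is labelled ``immediate'' in the excerpt, and indeed the content of $\obs$-RD (Def.~\ref{def:RD}, point 1) is precisely the hypothesis ``all reduction sequences induce the same $\omega$-chain,'' from which ``same limit'' is a one-line consequence of the supremum being determined by the chain. The only mild subtlety worth a sentence is that $\obs$-RD as stated quantifies over $\red$-sequences rather than explicitly over \emph{maximal} ones, but this is harmless: limits are computed along maximal sequences $\seq \mm$, and these are in particular $\red$-sequences in the sense used by the definition (with the convention $\mm_i = \mm_{i+1} \in \NFA$ once a normal form is reached), so the comparison applies verbatim. Thus the whole proof is: fix two limits, invoke $\obs$-RD to equate the defining chains termwise, conclude equality of suprema.
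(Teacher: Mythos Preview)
Your proposal is correct and matches the paper's approach: the paper gives no proof at all beyond the phrase ``It is immediate that,'' and your argument is exactly the obvious unfolding of Def.~\ref{def:RD} (equal $\omega$-chains have equal suprema). Your remark about maximal sequences is a fair clarification of the paper's conventions but not a genuine gap in either direction.
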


	While expressive, \EB\ is of little practical use, as it is a property which is \emph{universally quantified} on the  sequences from $\mm$.  
	Remarkably,   \emph{the  \LEB\ property
		characterizes}  \EB\@.

	\begin{thm}[Characterization]\label{RD char}\label{EB char}
		The following properties are equivalent:
		\begin{enumerate}
			\item   \LEB; 
			\item   $\forall k,  \forall \mm, \tmu, \tmr\in \AA$ if  $\mm\red^k \tmu$ and  $\mm\red^k \tmr$, then $
			\w{\tmu}=\w{\tmr}$;
			\item    \EB\@.
		\end{enumerate}
	\end{thm}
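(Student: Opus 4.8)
The three statements are best read as follows: (2) and (3) are the same condition with the quantifiers rearranged, whereas the real content is that the purely one-step property (1)\,=\,\LEB\ already entails them. So the plan is to dispatch (3)$\Rightarrow$(2), (2)$\Rightarrow$(3) and (2)$\Rightarrow$(1) — all routine — and then prove the one substantive implication (1)$\Rightarrow$(2).

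For (3)$\Rightarrow$(2): a finite reduction $\mm\red^k\tmu$ extends to a \emph{maximal} sequence $\seq\mm$ with $\mm_k=\tmu$ (continue by an arbitrary step while the current term is not normal, stutter on it once it is), and $\mm\red^k\tmr$ to a maximal sequence $\seq{\mm'}$ with $\mm'_k=\tmr$; invoking \EB\ on these two sequences from $\mm$ gives $\w\tmu=\w{\mm_k}=\w{\mm'_k}=\w\tmr$. For (2)$\Rightarrow$(3): if $\seq\tmr,\seq\tms$ are maximal sequences from a common $\tmt$, then for each $n$ both $\tmr_n$ and $\tms_n$ are $n$-step descendants of $\tmt$ (reductions being allowed to stutter on normal forms), so (2) forces $\w{\tmr_n}=\w{\tms_n}$ and the two $\obs$-chains coincide. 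For (2)$\Rightarrow$(1): given a fork $\tmr\leftarrow\mm\red\tms$, take \emph{any} maximal sequences $\seq\tmr$ from $\tmr$ and $\seq\tms$ from $\tms$; at $n=0$ the one-step reductions $\mm\red\tmr$ and $\mm\red\tms$ give $\w\tmr=\w\tms$ by (2), and for $n\geq1$ the $(n{+}1)$-step reductions $\mm\red\tmr\red^{n}\tmr_n$ and $\mm\red\tms\red^{n}\tms_n$ likewise give $\w{\tmr_n}=\w{\tms_n}$.

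The core is (1)$\Rightarrow$(2), proved by induction on the common length $k$ of the two reductions $\mm\red^k\tmu$ and $\mm\red^k\tmr$; this is the diagram tiling pictured in \reffig{RDproof}. The case $k=0$ is trivial. For the inductive step, write $\mm\red\mm'\red^{k}\tmu$ and $\mm\red\mm''\red^{k}\tmr$. If $\mm'=\mm''$ the two $k$-step reductions issue from the same term and the induction hypothesis applies directly. Otherwise $\mm'\leftarrow\mm\red\mm''$ is a one-step fork, and \LEB\ supplies maximal sequences $\seq{\mm'}$ from $\mm'$ and $\seq{\mm''}$ from $\mm''$ with $\w{\mm'_n}=\w{\mm''_n}$ for every $n$. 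The induction hypothesis applied to the pair $\mm'\red^{k}\tmu$, $\mm'\red^{k}\mm'_k$ gives $\w\tmu=\w{\mm'_k}$; symmetrically $\w\tmr=\w{\mm''_k}$; and $\w{\mm'_k}=\w{\mm''_k}$ by the choice of the two sequences, whence $\w\tmu=\w\tmr$.

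The delicate point — essentially the only non-mechanical part — is this inductive step: the \emph{first} step of divergence is resolved using only the local hypothesis \LEB, and one must then see the two remaining $k$-step tails as instances of the induction hypothesis; this forces tracking that the joining sequences produced by \LEB\ are genuinely \emph{maximal}, so that $\mm'_k$ (resp.\ $\mm''_k$) is a legitimate $k$-step descendant of $\mm'$ (resp.\ $\mm''$) even when a normal form appears before step $k$ and the sequence stutters there. Once (1)$\Leftrightarrow$(2) is settled, (2)$\Leftrightarrow$(3) closes the chain of equivalences.
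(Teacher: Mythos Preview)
Your proof is correct and follows essentially the same approach as the paper. The substantive implication $(1)\Rightarrow(2)$ is proved by exactly the same induction on $k$ with the same tiling argument (resolve the first fork via \LEB, then apply the induction hypothesis twice on the two $k$-step tails). The only cosmetic difference is organizational: the paper closes the equivalence as a cycle $1\Rightarrow 2\Rightarrow 3\Rightarrow 1$, whereas you establish $2\Leftrightarrow 3$ and $1\Leftrightarrow 2$ separately; the easy directions are routine either way, and your explicit attention to stuttering on normal forms is a point the paper leaves implicit.
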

	\begin{proof}The proof is illustrated in  Fig.~\ref{fig:RDproof}.

		$\bm{(1 \implies 2)}$.  We prove that (2) holds  by induction on $k$.
		If $k=0$, the claim  is trivial. If $k>0$, let $\mm\red\tms $ be the first step from $\mm$ to $\tmu$  and
		$\mm\red\tmt$ the first step from	 $\mm$ to $\tmr$.
		By \LEB,
		there exists  $ \tms_{k-1}$ such that  $\tms{\red^{k-1}}\tms_{k-1}$ and  $\tmt_{k-1}$  such that $\tmt\red^{k-1}\tmt_{k-1}$, with
		$\w{\tms_{k-1}}=\w{\tmt_{k-1}}$. Since $\tms \red^{k-1} \tmu$, we can apply the induction hypothesis, and conclude that
		$\w{\tms_{k-1}}=\w{\tmu}$.
		By using the induction hypothesis on $\tmt$, we have that $\w{\tmt_{k-1}}=\w{\tmr}$ and conclude that $\w{\tmr}=\w{\tmu} $.

		$\bm{(2\implies 3)}$. Immediate.

		$\bm{(3 \implies 1)}$. Assume   $\tm \leftarrow\mm \red \tms$. Take a sequence $\seq \tm$ from $\tm$ and a sequence $\seq \tms $ from $\tms$. By (3),
		$\w{\tm_k}=\w{\tms_k}$ $\forall k$.
	\end{proof}

	\paragraph{A diamond.}
	A useful case of \LEB\   is
	the \textbf{$\obs$-diamond}  property (Fig.~\ref{fig:diamond}):
	$\forall \mm,\tms,\tmt$, if  $\tmt \leftarrow\mm \red \tms$, then    $\w{\tms}=\w{\tmt} $, and either $\tms=\tmt$, or
	$\exists \tmu$ s.t.  \big($\tmt\red\tmu\leftarrow  \tms$ \big).
	It is easy to check that 	$\obs$-diamond $\implies$  \LEB\@.

	\begin{prop}\label{prop:diamond}
		($\obs$-diamond) $ \Rightarrow $ \LEB  $ \Rightarrow $  $\wLim(\tmt)$ contains a unique element.
	\end{prop}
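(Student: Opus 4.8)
The plan is to prove the chain of two implications separately, leaning heavily on results already established. For the second implication, $\obs$-LRD $\Rightarrow$ $\wLim(\tmt)$ has a unique element, there is essentially nothing new to do: the Characterization Theorem~\ref{EB char} tells us that $\obs$-LRD (\LEB) is equivalent to $\obs$-RD (\EB), and Proposition~\ref{thm:RD} already states that $\obs$-RD forces $\wLim(\mm)$ to be a singleton for every $\mm$. So the second arrow is just the composition of two known facts, and I would simply cite them.

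The first implication, $\obs$-diamond $\Rightarrow$ $\obs$-LRD, I would prove directly from the definition of \LEB. Fix an arbitrary peak $\tmr \leftarrow \mm \red \tms$. The $\obs$-diamond delivers two facts simultaneously: $\w{\tmr} = \w{\tms}$, and either $\tmr = \tms$, or there is a common one-step reduct $\tmu$ with $\tmr \red \tmu \leftarrow \tms$. In the first case I take any maximal sequence out of $\tmr = \tms$ and read it off from both sides; the required stepwise equality of $\obs$-values is then immediate. In the second case I fix any maximal sequence $\tmu = \tmu_0 \red \tmu_1 \red \cdots$ out of $\tmu$ and prepend the single steps $\tmr \red \tmu_0$ and $\tms \red \tmu_0$ respectively; the two maximal sequences $\seq\tmr$ and $\seq\tms$ so obtained agree in $\obs$ at index $0$ because $\w{\tmr} = \w{\tms}$, and at every later index $n$ because both entries equal $\w{\tmu_{n-1}}$. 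This is exactly what local $\obs$-RD demands, and since the peak was arbitrary, $\obs$-LRD holds.

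The only point needing a word of care is that the sequences produced must be \emph{maximal} in the sense fixed earlier — infinite, and eventually constant on normal forms — rather than merely finite; this is harmless, since maximal sequences always exist and prepending one reduction step preserves maximality. I do not expect a genuine obstacle here: once the diamond is split into its two conjuncts the construction is forced, and all the substantive work (the local-versus-global equivalence, and uniqueness of the limit) has already been carried out in Theorem~\ref{EB char} and Proposition~\ref{thm:RD}.
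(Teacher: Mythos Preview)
Your proposal is correct and takes essentially the same approach as the paper, which treats the proposition as immediate: the paper simply asserts ``It is easy to check that $\obs$-diamond $\implies$ \LEB'' without further detail, and leaves the second implication implicit from Theorem~\ref{EB char} and Proposition~\ref{thm:RD}. Your argument fills in exactly the expected details, including the minor care about maximality of sequences.
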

	Notice that while \LEB\ \emph{characterizes} \EB, $\obs$-diamond is only a \emph{sufficient condition}.

    \bigskip
	\begin{rem}[The beauty of local]
	\emph{Observe locality at work  in this and next section.   To show  that a  property $P$ holds globally (\ie\ \emph{for each} two rewrite sequences, $P$ holds), we  show  that $P$ holds  locally (\ie\ for each  pair of one-step reductions, \emph{there exist}
		two rewrite  sequences  such that $P$ holds). The space of search when testing  property $ P $ is then reduced, a fact that   we  exploit
		in the proofs of  Section~\ref{sec:weak}.}

	\end{rem}

		\subsection{ Strategies and Completeness}

	\emph{Strategies} are a  way to control the
	non-determinism which arises from  different possible choices of reduction.

	\paragraph{ARS  Strategies.}
	Given an ARS $(\AA,\red)$, a  \emph{strategy} for $\red$ is a relation
	${\ered}\subseteq {\red}$ with  the same normal forms as $\red$.

	\paragraph{QARS  Strategies.}
	Given a QARS $\QQ=(\AA,\red,\obs)$, we call  \emph{strategy}\footnote{Note that the ARS condition of having the same normal forms, is replaced  by the fact
		that   we (tacitly)  consider the QARS   $\big((\AA, \reds), \obs \big)$, where   the function
		$\obs$ is the \emph{same} as for $\QQ$.} for $\red$
	a relation   $\reds\subseteq \red$. We indicate strategies for $\red$ by colored arrows $\reds,\redr$.

	\paragraph{Completeness.} We formulate an asymptotic   notion of  completeness.

	\begin{defi}[Completeness]\label{def:Acompleteness}
		A  reduction $\reds\subseteq \red$ is
		$\obs$-complete (or asymptotically compelte) for $ \red $ if:
\begin{center}
			$\tmt \tolimO{\red} \cpos$ implies  $\tmt \tolimO{\reds} \cpor $ with $ \cpor\geq \cpos $.
\end{center}
	\end{defi}

	Note that a strategy $\reds$ which is asymptotically complete is not guaranteed to find the ``best'' result, as one can immediately see by noticing that   $\red$ is
	trivially a complete strategy for $\red$ itself.
	However, it  does if
	$\reds$ is deterministic, or ``essentially'' deterministic w.r.t. $\obs$.
	This is  what the method in  \refsec{RD} provides.

	\subsection{Comparing  Strategies}\label{sec:comparing}

	In this section we refine  the results given in the previous section into   a method to compare strategies.
	\emph{In which case is  strategy  $\reds$ \emph{better} than strategy $\redr$?} We  adapt to QARSs  the ARS notion of ``\emph{better}''  introduced in~\cite{Oostrom07}.
	Again,
	we obtain   a local characterization (Thm.~\ref{thm:LD}) of the  property, similarly to~\cite{Oostrom07}.

	In \refsec{comparing_pars} we  will analyze  these notions in the setting of PARS\@. What we obtain  are   sufficient criteria  to establish that a strategy is  normalizing  or perpetual  (Cor.~\ref{cor:normalizing2}),  and to compare the \emph{expected number of steps} of  rewrite sequences.

	The method also	provides another  sufficient condition to establish  \UNlim.

	\begin{figure}
		\centering
		\fbox{
			\includegraphics[width=10cm]{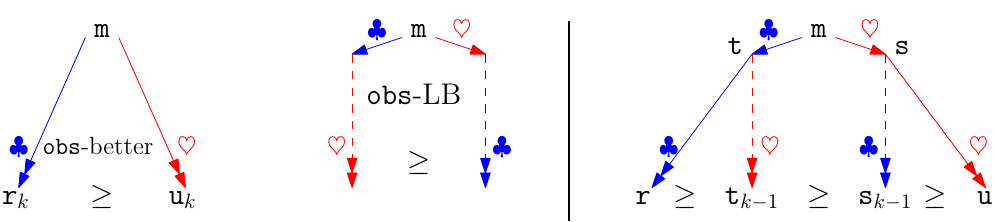}
		}
		\captionsetup{width=\linewidth}  \caption{$\obs$-better}\label{figLB}\label{fig:LB}
	\end{figure}

	\begin{defi}[$\obs$-better]
	The following properties are illustrated in Fig.~\ref{figLB}.
		\begin{itemize}
			\item $\blue{\reds}$ \textbf{is $\obs$-better} than $\todo{\redr}$ ($\B(\blue{\red_{\clubsuit}}, \todo{\redr})$): 
			for each $\mm$ and for  each pair of a $\blue{\reds}$-sequence $\seq \tmr$  and  a  $\todo{\redr}$-sequence $\seq \tmu$  from $\mm$,
			$\rel{\tmr_n}{\tmu_n}$ holds, for each $n$.

			\item  $\blue{\reds}$ \textbf{is  locally $\obs$-better} than $\todo{\redr}$ (written $\LD(\blue{\red_{\clubsuit}}, \todo{\redr})$):  if $\tmt \blue{\leftclub\leftarrow}\,{\mm}\, \RED{ \redr} \tms$, then for each $n\geq 0$, $\exists \tms_n, \tmt_n$, such that   $\blue{\tms \red_\clubsuit^n \tms_n}$,
			$\todo{\tmt\redr^n ~\tmt_n}$, and  $\rel {\tmt_n}{\tms_n}$
		\end{itemize}
	\end{defi}

	\begin{rem}Please notice that \EB\ (resp. \LEB)   is a special case of $ \B $  (resp. $\LD$).
		We  have  treated \wRD first and independently, for the sake of presentation.
	\end{rem}

	It is immediate that   $\B(\blue{\reds},\red)$ implies that  $\blue{\reds}$  is $\obs$-complete for $\red$ (\refdef{Acompleteness}).
	The notion of $ \B $ is again a  condition which is expressive, but \emph{quantified over all reduction sequences} from $\mm$.
	We now  prove  that the local property   $\LD$
	is sufficient   to establish $\obs$-better, and even necessary when comparing with $\red$.
	\begin{thm}\label{thm:LD}\label{lem:LD}
		$\LD(\blue{\reds},\todo{\redr})$ implies 
		$\B(\blue{\reds},\todo{\redr})$.
		The reverse  also holds if either $\blue{\reds}$ or $\todo{\redr}$ is $\red$.
	\end{thm}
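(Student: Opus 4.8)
We prove the two assertions in the style of Theorem~\ref{EB char}, factoring the argument through the length-indexed variant of $\obs$-better,
\[
(\star)\qquad \text{for all }k\text{ and all }\mm,\tmr,\tmu\in\AA:\quad \mm\reds^k\tmr\ \text{ and }\ \mm\redr^k\tmu\ \implies\ \rel{\tmr}{\tmu}.
\]
The implication $(\star)\Rightarrow\B(\reds,\redr)$ is immediate: for any $\reds$-sequence $\seq\tmr$ and any $\redr$-sequence $\seq\tmu$ from a common $\mm$, for each $k$ we have $\mm\reds^k\tmr_k$ and $\mm\redr^k\tmu_k$, hence $\rel{\tmr_k}{\tmu_k}$. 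So the content of the first statement is $\LD(\reds,\redr)\Rightarrow(\star)$, which I would prove by induction on $k$. For $k=0$ both reductions force $\tmr=\tmu=\mm$, so $\rel{\tmr}{\tmu}$ holds trivially.

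For the inductive step, decompose $\mm\reds\tmr_1\reds^{k-1}\tmr$ and $\mm\redr\tmu_1\redr^{k-1}\tmu$, so that $\tmr_1$ and $\tmu_1$ form a local fork. Applying $\LD$ to this fork with parameter $k-1$ produces $\tms_{k-1},\tmt_{k-1}$ with $\tmu_1\reds^{k-1}\tms_{k-1}$, $\tmr_1\redr^{k-1}\tmt_{k-1}$, and $\rel{\tmt_{k-1}}{\tms_{k-1}}$ — note the crossover, the $\reds$-reduct being continued by $\redr$ and conversely. Now invoke the induction hypothesis twice: from $\tmr_1\reds^{k-1}\tmr$ and $\tmr_1\redr^{k-1}\tmt_{k-1}$ we get $\rel{\tmr}{\tmt_{k-1}}$, and from $\tmu_1\reds^{k-1}\tms_{k-1}$ and $\tmu_1\redr^{k-1}\tmu$ we get $\rel{\tms_{k-1}}{\tmu}$. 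Chaining the three inequalities yields $\w{\tmr}\ge\w{\tmt_{k-1}}\ge\w{\tms_{k-1}}\ge\w{\tmu}$, which is $(\star)$ at $k$. The degenerate cases in which a reduction reaches a normal form before step $k$ are absorbed by reflexivity of the rewrite relation on normal forms, exactly as in Theorem~\ref{EB char}.

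For the converse, assume $\B(\reds,\redr)$, and first suppose $\redr=\red$. Given a local fork $\mm\reds\tmt$, $\mm\red\tms$, choose any maximal $\reds$-sequence $\tmt=\tmt_0\reds\tmt_1\reds\cdots$ and any maximal $\reds$-sequence $\tms=\tms_0\reds\tms_1\reds\cdots$. Prepending the fork steps gives a $\reds$-sequence $\mm,\tmt,\tmt_1,\dots$ and — since $\reds\subseteq\red$ — a $\red$-sequence $\mm,\tms,\tms_1,\dots$, both from $\mm$; comparing their entries at position $k+1$ via $\B(\reds,\red)$ gives $\w{\tmt_k}\ge\w{\tms_k}$. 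As $\tms\reds^k\tms_k$ and $\tmt\reds^k\tmt_k$ (hence $\tmt\red^k\tmt_k$), the pair $\tms_k,\tmt_k$ witnesses $\LD$ at $k$. If instead $\reds=\red$, argue symmetrically: complete $\tmt$ and $\tms$ by maximal $\redr$-sequences (which are also $\red$-sequences), prepend the fork steps, and read $\w{\tmt_k}\ge\w{\tms_k}$ off $\B(\red,\redr)$. Existence of the maximal sequences used here is routine.

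The main obstacle is the orientation of the inequalities in the inductive step: it is essential that $\LD$ feeds back the \emph{crossed} comparison $\rel{\tmt_{k-1}}{\tms_{k-1}}$ — the $\redr$-completion of the $\reds$-step dominating the $\reds$-completion of the $\redr$-step — so that the two uses of the induction hypothesis compose into a single transitive chain terminating in $\rel{\tmr}{\tmu}$; invoking the hypothesis on the ``parallel'' rather than the ``crossed'' pairs would break the chain. Everything else is bookkeeping with $\reds\subseteq\red$ and with the maximal-sequence convention.
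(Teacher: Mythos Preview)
Your proof is correct and follows essentially the same route as the paper's: the forward direction is the same induction on $k$ through the length-indexed property $(\star)$, using $\LD$ on the initial fork and two applications of the induction hypothesis on the crossed completions, then chaining by transitivity; the reverse direction is likewise the same argument of extending both branches by maximal $\blue{\reds}$-sequences (resp.\ $\todo{\redr}$-sequences) and reading off the inequality from $\B$ at position $k+1$. Your exposition is in fact cleaner---you make the crossover explicit and treat both cases of the converse, whereas the paper's variable labeling in the inductive step is somewhat garbled and only one converse case is spelled out.
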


	\begin{proof}
		$\pmb{\Rightarrow}$. The proof is illustrated in Fig.~\ref{figLB}. We prove by induction on $k$ the following:
				\begin{center}
		\emph{  $\LD(\blue{\reds},\todo{\redr})$ implies \\
			$	\big(\forall k, \forall \mm,\tmr,\tmu \in \AA$,   if
			$\blue{\mm \reds^k \tmr}$ and $\todo{\mm \redr^k \tmu}$, then $\rel{\tmr }{ \tmu} \big)$ }.
		\end{center}
		If $k=0$, the claim is trivial. If $k\geq 1$, let $\mm\reds\tms$ be the first step from $\mm$ to $\tmu$, and $\mm\redr\tmt$ the first step from $\mm$ to $\tmr$,
		as in Fig.~\ref{figLB}.  $\LD$ implies that there exist $\tms_{k-1}$ and $\tmt_{k-1}$ such that  $\blue{\tms \reds^{k-1} \tms_{k-1}}$,  $\todo{\tmt \redr^{k-1} ~\tmt_{k-1}}$, with
		$\rel{\tmt_{k-1}}{\tms_{k-1}}$.
		Since $\todo{\tms\redr^{k-1}~ \tmu}$ we can apply the induction hypothesis, and obtain that $\rel{\tms_{k-1}}{\tmu}$.
		Again by  induction hypothesis, from  $\blue{\tmt \reds^{k-1} \tmr}$ we obtain  $\rel{\tmr}{\tmt_{k-1}}$. By transitivity, it holds that $\rel {\tmr}{\tmu}$.

		$\pmb{\Leftarrow}$.  Assume $\todo{\redr}= \red$, and $\tmt \blue{\leftclub\leftarrow}\mm  \red \tms$. Let $\seq{\tmt}$ and $\seq{\tms}$ be obtained by extending $\tmt$ and $\tms$ with a maximal $\blue{\reds}$ sequence. The claim follows from the hypothesis that $\blue{\reds}$ dominates $\red$, by viewing the $\blue{\reds}$ steps in $\seq{\tms}$  as $\red$ steps.
	\end{proof}

	\paragraph{Greatest Element.} Finally,  we mention that  $\LB$
	provides another method  to establish  \UNlim, and therefore the fact that for each $\mm$, $\sem \mm$ is defined.
	\begin{prop}[Greatest Element] Given $\QQ=\{(\AA,\red),\obs\}$,
		if there is a strategy $\blue{\reds}$ such that $\LB(\blue{\reds}, \red)$,  then $\QQ$ satisfies \UNlim.
	\end{prop}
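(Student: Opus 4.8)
The plan is to show that under the hypothesis $\LB(\reds,\red)$ the set $\wLim(\mm)$ has a greatest element for every $\mm\in\AA$, which is exactly \UNlim. The strategy $\reds$ is the right candidate: because $\reds\subseteq\red$, every $\reds$-sequence is also a $\red$-sequence, so the limit of any maximal $\reds$-sequence from $\mm$ belongs to $\wLim(\mm)$. Call this limit $\cpo p$ (we must first check that a maximal $\reds$-sequence exists, which it does by the standard construction of extending $\mm$ one $\reds$-step at a time, falling back to the constant sequence once a $\reds$-normal form is reached). The claim will be that this $\cpo p$ is the greatest element of $\wLim(\mm)$.

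To prove maximality, take an arbitrary $\cpo q\in\wLim(\mm)$, witnessed by some $\red$-sequence $\seq\mm$ with $\cpo q=\sup_n\w{\mm_n}$. Now apply $\B(\reds,\red)$ with $\redr=\red$: for the $\reds$-sequence witnessing $\cpo p$ and the $\red$-sequence $\seq\mm$, both starting from $\mm$, the definition of $\obs$-better gives $\w{r_k}\geq\w{\mm_k}$ for all $k$, where $\langle r_k\rangle$ is the $\reds$-sequence. Taking suprema on both sides, $\cpo p=\sup_k\w{r_k}\geq\sup_k\w{\mm_k}=\cpo q$. Since $\cpo q$ was arbitrary, $\cpo p$ is an upper bound of $\wLim(\mm)$; and since $\cpo p\in\wLim(\mm)$, it is the greatest element. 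Hence $\den\mm$ is defined and \UNlim\ holds.

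The only subtlety — and the step I expect to require the most care — is matching the lengths of the two sequences when invoking $\B$. The definition of $\B(\reds,\red)$ compares a $\reds$-sequence and a $\red$-sequence \emph{index by index}, so I should present both as maximal sequences in the sense made uniform in the excerpt (an infinite sequence that becomes stationary on a normal form once one is reached), so that the comparison $\w{r_k}\geq\w{\mm_k}$ is available for every $k\in\Nat$. With that convention in place the argument is a direct application of $\B(\reds,\red)$ followed by passing to the supremum, using monotonicity of $\sup$ with respect to the order on the $\omega$-cpo $\Set$. No appeal to confluence, to \LimP, or to the real-number structure is needed; this is strictly simpler than \refthm{confluence}, and in fact re-proves \UNlim\ from a different, purely local hypothesis via \refthm{LD}, which reduces checking $\B(\reds,\red)$ to checking the one-step property $\LD(\reds,\red)$.
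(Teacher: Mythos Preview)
Your proposal is correct and follows essentially the same route as the paper: invoke \refthm{LD} to pass from $\LB(\reds,\red)$ to $\B(\reds,\red)$, take the limit $\cpo p$ of a $\reds$-sequence, and use the index-by-index domination to conclude that $\cpo p$ bounds every element of $\wLim(\mm)$. The paper adds one preliminary observation you leave implicit, namely that $\LB(\reds,\red)$ entails $\obs$-RD for $\reds$, so \emph{all} $\reds$-sequences from $\mm$ share the same limit $\cpo p$; your argument recovers this a posteriori, since any $\reds$-limit is both in $\wLim(\mm)$ and an upper bound of it.

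One small point deserves care. Your justification that $\cpo p\in\wLim(\mm)$ is ``every $\reds$-sequence is also a $\red$-sequence''. Under the paper's convention for maximal sequences, this is not literally true: a $\reds$-sequence that becomes stationary at a $\reds$-normal form which is \emph{not} a $\red$-normal form is not a valid $\red$-sequence. The fix is immediate and already available from what you have proved: extend such a sequence by $\red$-steps to obtain a genuine $\red$-sequence; its limit is $\geq\cpo p$ by monotonicity of $\obs$ and $\leq\cpo p$ because you have already shown $\cpo p$ bounds $\wLim(\mm)$, hence equals $\cpo p$. With this adjustment the argument is complete and matches the paper's.
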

	\begin{proof}
First, observe that the assumption implies in particular $\LB(\blue{\reds}, \reds)$, and therefore (by Thm.~\ref{RD char}) the QARS $ ((\AA,\blue{\reds}),\obs) $ satisfies \wRD. So,  given $\mm\in \AA$, all  $\reds$-sequences from $\mm$ have the same limit $\cpor$. 
From   $\LB(\blue{\reds}, \red)$, it follows that $\cpor$ is the greatest limit of $\wLim (\mm)$.
	\end{proof}

\section{PARS:\texorpdfstring{\@}{} Weighted Random Descent}\label{sec:RD_pars}\label{sec:balance}
When applied to PARS,
\wRD  is able to  guarantee some remarkable   properties: \UNlim and  \emph{$p$-\SNlim} as soon as  \emph{there exists a  sequence} which converges with probability $p$, and also the fact that  all rewrite sequences from an element have the same \emph{expected} number of steps.

Take $\obs$ to be  either $\nf -$ or $ \norm{\nf -}$,
   \wRD implies that  all   rewrite sequences from $\mm$:
\begin{itemize}
		\item      have  the same probability of reaching a normal form  after $k$ steps (for each  $k\in \Nat$);
	 	\item   converge to the same limit;
		\item     have the same expected number of steps.
\end{itemize}

\begin{prop}\label{thm:RD_pars}\hfill
	\begin{enumerate}
		\item\label{norm_term} $\norm{\nf {-}}$-RD implies    \SNlim
		\emph{	(\emph{uniform normalization})}; moreover, for each $\mm\in \MA$  all elements in $\Lim (\m)$ are maximal.
		\item   $\nf -$-RD  implies  \SNlim    and   \UNlim.
	\end{enumerate}
\end{prop}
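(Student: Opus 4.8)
The plan is to unwind the definition of $\obs$-RD (\refdef{RD}) and combine it with two facts recorded in \refsec{limit}: that $\pLim(\mm)=\{\norm\beta\mid\beta\in\Lim(\mm)\}$, and that $\alpha<\beta$ in $\DSTNF$ implies $\norm\alpha<\norm\beta$. Before starting I would fix the reading of the statement: for the QARS $\PP=((\MA,\redd),\nf{-})$ the abstract limit set $\wLim(\mm)$ of \refsec{limits} is the set $\Lim(\mm)$ of limit distributions of \refsec{limit}, and for $\PPb=((\MA,\redd),\nnorm{-})$ it is $\pLim(\mm)$.

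For Part 1, assume $\norm{\nf{-}}$-RD and fix $\mm\in\MA$. By the definition of $\obs$-RD with $\obs=\nnorm{-}$, all rewrite sequences $\seq\mm$ from $\mm$ induce one and the same $\omega$-chain $\langle\nnorm{\mm_n}\rangle_{n\in\Nat}$, hence the same limit $p:=\sup_n\nnorm{\mm_n}$; so they all converge with probability $p$, which is exactly the statement that $\mm$ is $p$-\SNlim\ (\refdef{SNlim}). As $\mm$ is arbitrary, the PARS is \SNlim. For the maximality clause I would argue by contradiction: if $\alpha,\beta\in\Lim(\mm)$ with $\alpha<\beta$, then $\norm\alpha<\norm\beta$; but $\norm\alpha$ and $\norm\beta$ both lie in $\pLim(\mm)$, which by what we have just shown equals $\{p\}$, a contradiction. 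Hence no element of $\Lim(\mm)$ lies strictly below another, \ie every element of $\wLim(\mm)=\Lim(\mm)$ is maximal.

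For Part 2, assume $\nf{-}$-RD. First I would observe that $\nf{-}$-RD implies $\norm{\nf{-}}$-RD: if all rewrite sequences from $\mm$ induce the same chain $\langle\nf{\mm_n}\rangle_{n\in\Nat}$, then applying the order-preserving norm $\norm{\cdot}\colon\DSTNF\to[0,1]$ shows they induce the same chain $\langle\nnorm{\mm_n}\rangle_{n\in\Nat}$; so \SNlim\ follows from Part 1. For \UNlim\ I would invoke Proposition~\ref{thm:RD} applied to $\PP$ (or re-run its one-line argument): since all rewrite sequences from $\mm$ induce the same $\omega$-chain $\langle\nf{\mm_n}\rangle_{n\in\Nat}$, they all have the same limit, so $\wLim(\mm)=\Lim(\mm)$ consists of a single element; as a maximal rewrite sequence from $\mm$ always exists this set is non-empty, and a singleton trivially has a greatest element, which is \UNlim.

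I do not expect a genuine obstacle: both items reduce to unfolding $\obs$-RD together with the strict monotonicity of the norm on subdistributions over normal forms. The only point that needs a little care is the bookkeeping between the abstract limit set $\wLim$ of the QARS presentation and the concrete $\Lim(\mm)$ and $\pLim(\mm)$, and the (trivial but worth stating) remark that once all limits coincide the common value is automatically the greatest one, so that $\nf{-}$-RD indeed entails \UNlim.
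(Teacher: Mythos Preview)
Your argument is correct and follows precisely the route the paper intends: the proposition is stated without proof in the paper, being an immediate instantiation of Prop.~\ref{thm:RD} together with the facts from \refsec{limit} that $\pLim(\mm)=\{\norm\beta\mid\beta\in\Lim(\mm)\}$ and that $\alpha<\beta$ implies $\norm\alpha<\norm\beta$. Your unfolding of $\obs$-RD for $\obs=\nnorm{-}$ and $\obs=\nf{-}$, the observation that $\nf{-}$-RD entails $\nnorm{-}$-RD via the order-preserving norm, and your contradiction argument for maximality of the elements of $\Lim(\mm)$ are exactly what is needed; the bookkeeping you flag between the abstract $\wLim$ and the concrete $\Lim$, $\pLim$ is the only genuine point to track, and you handle it correctly.
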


\paragraph{Point-wise formulation.} In  \refsec{weak},   we exploit the fact that not only \EB\  admits a local characterization, but also  that
the  properties \LEB\ and $\obs$-diamond 
can be expressed point-wise, making the  condition  even easier  to verify.
\begin{enumerate}
	\item  pointed \LEB:\@ $\forall a\in A$, if  $\tmt \leftleftarrows [a^1] \redd \tms$, then  $\forall k, \exists
	\tms_k,\tmt_k$  with $\tms \redd^k \tms_k$, $\tmt \redd^k \tmt_k$, and
	$\w{\tms_{k}} = \w{\tmt_{k}}$.

	\item pointed $\obs$-diamond: $\forall a\in A$, if  $\tmt \leftleftarrows [a^1] \redd \tms$, then it holds that  $\w{\tmt}=\w{\tms} $, and
	$\exists \tmr$ such that  $\tmt\redd\tmr \leftleftarrows \tms $.
\end{enumerate}

\begin{prop}[point-wise \LEB]\label{pointwise}
	The following hold
	\begin{itemize}
		\item  \LEB\ $\iff$  pointed \LEB; 

		\item $\obs$-diamond $\iff$ pointed $\obs$-diamond.

	\end{itemize}
\end{prop}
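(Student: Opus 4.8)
The statement to prove is Proposition~\ref{pointwise}: that \LEB\ is equivalent to its pointed version, and likewise $\obs$-diamond is equivalent to pointed $\obs$-diamond. In each case one direction is trivial. If \LEB\ holds for all $\mm \in \MA$, then in particular it holds for multidistributions of the shape $\mdist{a}$ with $a \in A$, which is exactly pointed \LEB; similarly $\obs$-diamond restricted to singletons $\mdist a$ gives pointed $\obs$-diamond. So the only work is the forward implication: pointed \LEB\ $\implies$ \LEB\ (and the analogous one for the diamond).

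For pointed \LEB\ $\implies$ \LEB, the plan is to reduce a local peak $\tmr \leftleftarrows \mm \redd \tms$ from an arbitrary multidistribution $\mm = \mset{p_i a_i \mid \iI}$ to the pointed peaks from the singletons $\mdist{a_i}$, using the structure of the lifting (Def.~\ref{def:lift}, rule $L3$). Concretely: by $L3$, the step $\mm \redd \tmr$ is obtained by choosing, for each $\iI$, a step $\mdist{a_i} \redd \mr_i$, with $\tmr = \sum_{\iI} p_i \cdot \mr_i$; similarly $\mm \redd \tms$ gives $\mdist{a_i} \redd \ms_i$ with $\tms = \sum_{\iI} p_i \cdot \ms_i$. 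For each $i$, apply pointed \LEB\ to the peak $\mr_i \leftleftarrows \mdist{a_i} \redd \ms_i$: for every $k$ there are $\mr_{i,k}, \ms_{i,k}$ with $\mr_i \redd^k \mr_{i,k}$, $\ms_i \redd^k \ms_{i,k}$, and $\w{\mr_{i,k}} = \w{\ms_{i,k}}$. Then, by repeated use of $L3$ (and the scalar-multiplication clause), I can assemble $\tmr \redd^k \sum_{\iI} p_i \cdot \mr_{i,k}$ and $\tms \redd^k \sum_{\iI} p_i \cdot \ms_{i,k}$. The key arithmetical fact needed is that the observation $\obs$ — whether $\nf{-}$ or $\nnorm{-}$ — is \emph{linear} with respect to scalar multiplication and disjoint sum: $\nf{(\sum_i p_i \cdot \m_i)} = \sum_i p_i \cdot \nf{\m_i}$, and similarly for the norm. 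Given that, $\w{\sum_{\iI} p_i \cdot \mr_{i,k}} = \sum_{\iI} p_i \cdot \w{\mr_{i,k}} = \sum_{\iI} p_i \cdot \w{\ms_{i,k}} = \w{\sum_{\iI} p_i \cdot \ms_{i,k}}$, which is exactly what \LEB\ requires.

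One subtlety to handle carefully: a step $\mdist{a_i} \redd \mr_i$ in the lifting can be reflexive when $a_i$ is a normal form (rule $L1$), so the ``peak'' $\mr_i \leftleftarrows \mdist{a_i} \redd \ms_i$ is then the degenerate one $\mdist{a_i} \leftleftarrows \mdist{a_i} \redd \mdist{a_i}$; this case is trivial since we may take $\mr_{i,k} = \ms_{i,k} = \mdist{a_i}$, and $\redd$ is reflexive on normal forms so $\mdist{a_i} \redd^k \mdist{a_i}$ for all $k$. Also, when $a_i = a_j$ for $i \ne j$ the two occurrences may be reduced by \emph{different} steps; this is fine, since the argument is carried out occurrence by occurrence and the pointed hypothesis is applied to each occurrence's own local peak. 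The proof of the diamond equivalence is the same reduction but simpler: from the pointed $\obs$-diamond at each $\mdist{a_i}$ we get $\w{\mr_i} = \w{\ms_i}$ and a common one-step reduct $\tmu_i$ with $\mr_i \redd \tmu_i \leftleftarrows \ms_i$ (or $\mr_i = \ms_i$); summing with weights $p_i$ via $L3$ gives $\w{\tmr} = \w{\tms}$ and a common reduct $\tmu = \sum_{\iI} p_i \cdot \tmu_i$, using again linearity of $\obs$.

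The main obstacle — and really the only non-routine point — is making precise the ``assemble via repeated $L3$'' step: one must check that iterating the one-step lifting $k$ times on $\tmr = \sum_{\iI} p_i \cdot \mr_i$ can be organized so that the $i$-th summand follows exactly the chosen sequence $\mr_i \redd^k \mr_{i,k}$, i.e.\ that $\redd^k$ on a weighted sum decomposes into $\redd^k$ on each summand (independently). This is a straightforward induction on $k$ using $L3$ and the pointwise definition of scalar product and disjoint sum, but it is the place where the combinatorial bookkeeping lives, so I would state it as a small auxiliary lemma (``$\redd$ commutes with disjoint sums and scalar multiples'', or simply cite the analogous decomposition already implicit in Def.~\ref{def:lift}) and then apply it twice — once for the $\blue{\reds}$ side, once for the $\todo{\redr}$ side — to finish.
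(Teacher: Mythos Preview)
Your proposal is correct and follows exactly the same approach as the paper, which simply says ``Immediate, by the definition of $\redd$. Given $\mm=[p_ia_i ]_{\iI}$, we establish the result for each $a_i$, and put all the resulting multidistributions together.'' You have faithfully unpacked this one-line sketch: decompose the peak via rule $L3$, apply the pointed hypothesis componentwise, and reassemble using $L3$ together with the linearity of $\nf{-}$ (and hence of $\nnorm{-}$) with respect to weighted disjoint sums; your attention to the normal-form case (rule $L1$) and to the compositionality of $\redd^k$ over weighted sums is exactly the bookkeeping the paper leaves implicit.
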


\begin{proof}
	Immediate, by the definition of $\redd$. Given $\mm=[p_{i}a_i]_{\iI}$, we establish the result for each $a_i$, and put all the resulting multidistributions together.
\end{proof}

\subsection{Expected Termination Time}\label{sec:meantime}
\newcommand{\PAST}{\texttt {PAST}\xspace}
For ARS, Random Descent   captures  the property    (\texttt{Length}) ``\emph{all maximal rewrite sequences from an element have the same length}.''

\wRD  also implies  a property similar to (\texttt{Length}) for PARS,  where we consider not    the  number of steps of  the rewrite sequences, but    its probabilistic analogue,  the \textbf{expected number of steps}.

In an ARS, if a maximal rewrite sequence terminates, the number of steps is finite; we interpret this number as \emph{time to termination}.  In the  case of PARS, a system may have infinite runs even if it is \AST;\@ the number of rewrite  steps $\red$ from an initial state  is (in general) infinite. However,  what interests us is  its \emph{expected value}, \ie\  the  weighted average w.r.t.\ probability (see Section~\ref{sec:proba}) which we write \MeanTime($\seq \mm$).  This expected value can be finite; in this case, not only the system  is  \AST, but is said \PAST \emph{(Positively \AST)}  (see~\cite{BournezG05}).

	\begin{exa}\label{ex:meantime}

	An example of probabilistic system with finite  expected  time to termination   is the one  in Fig.~\ref{fig:AST}.  The reduction   from $c$ has  \MeanTime\ $2$.
We can see this informally, recalling  Section~\ref{sec:proba}.
Let the sample space  $\Omega$ be the set of paths ending in a normal form, and let $\mu$ be the probability distribution on $\Omega$. What is the expected value  of the random variable
$\len:\Omega\to \Nat$?
We have $E(\len)=\sum_{\omega} \len(\omega) \cdot \mu(\omega)=$
$\sum_{n\in \Nat} n\cdot \mu\{\omega \mid \len(\omega)=n\}=$ $\sum n\cdot \frac{1}{2^n}=2$.
%

\end{exa}

a very simple formulation, as follows:

	\begin{equation}\label{eq:ET}
	\MeanTime(\seq \mm )= \sum_{n\in \Nat} (1-\nnorm{\mm_{n}})
	\end{equation}


  Intuitively, each tick in time (\ie\ each $\redd$ step)   is weighted with its probability to take place, which is  $\mu_i\{c \mid c\not\in\Term{\A}\} = 1-\nnorm{\mm_{i}}$ (where $\mu_i$ is the distribution over $A$ associated to $\m_i$). We refer to~\cite{Avanzini}  for the details.

\begin{exa}
	It is immediate to check that in Example~\ref{ex:AST} (Fig.~\ref{fig:AST}), the (unique) maximal {rewrite sequence $ \mathfrak{s} $}  from $[c^1]$ has  $\MeanTime({\mathfrak{s}})=\sum_{n\in \Nat} \frac{1}{2^n} =2$.
\end{exa}

Using this formulation, the following result is immediate.
\begin{cor}\label{RDmeantime1} Let  $\mm\in\MA$.
	$\norm{\nf -}$-RD implies that   all maximal  rewrite sequences from $\mm$ have the \emph{same \MeanTime}.
\end{cor}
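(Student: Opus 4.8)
The plan is to combine the two ingredients already in place: the closed form
$\MeanTime(\seq \mm) = \sum_{n\in \Nat} (1-\nnorm{\mm_{n}})$ taken from \cite{Avanzini}, and the hypothesis $\norm{\nf -}$-RD, which is exactly $\obs$-RD (\refdef{RD}) instantiated with $\obs = \norm{\nf -} = \nnorm{-}$.

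First I would fix an arbitrary $\mm\in \MA$ and take two maximal rewrite sequences $\seq \mm$ and $\langle \mm'_n\rangle_{n\in \Nat}$ from $\mm$; recall that by convention $\seq \mm$ denotes a maximal — hence infinite — sequence, which becomes constant on a normal form from some index on when it is finite, so the index $n$ ranges over $\Nat$ in both cases. Applying $\norm{\nf -}$-RD to this pair gives $\nnorm{\mm_{n}} = \nnorm{\mm'_{n}}$ for every $n\in \Nat$. Consequently the two series defining $\MeanTime$ agree term by term, $\sum_{n}(1-\nnorm{\mm_{n}}) = \sum_{n}(1-\nnorm{\mm'_{n}})$, where the equality is read in $[0,\infty]$, so the statement also covers the case where both expected times are infinite (neither sequence is \PAST). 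Hence $\MeanTime(\seq \mm) = \MeanTime(\langle \mm'_n\rangle_{n\in \Nat})$, which is the claim.

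Since this is essentially a one-line consequence of the two cited facts, I do not expect a real obstacle in the argument; the only points requiring care are bookkeeping ones. I should make sure that \refdef{RD} quantifies over \emph{maximal} reduction sequences (matching both the $\seq{\cdot}$ notation and the phrase ``maximal rewrite sequences'' in the statement), so that the Avanzini formula is applied with the same indexing to both sequences, and that the summands $1-\nnorm{\mm_{n}}$ are non-negative — which follows from Lemma~\ref{lem:basic}, guaranteeing that the series is well defined in $[0,\infty]$. With these observations, nothing further is needed.
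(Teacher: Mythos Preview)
Your proposal is correct and matches the paper's own argument, which simply states that the result is ``immediate'' from the Avanzini formula $\MeanTime(\seq \mm)=\sum_{n}(1-\nnorm{\mm_n})$ combined with the definition of $\obs$-RD. One tiny remark: the non-negativity of $1-\nnorm{\mm_n}$ follows directly from $\nnorm{\mm_n}\in[0,1]$ (since $\nf{\mm_n}$ is a subdistribution), not from Lemma~\ref{lem:basic}, which concerns monotonicity along the sequence; but this does not affect the argument.
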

{The well-known consequence is that    $\sum_{n} (1-\nnorm{\mm_{n}}) < \infty$  implies $\lim_{n\to \infty} (1- \nnorm{\mm_n})=0$,
hence $\lim_{n\to \infty} \nnorm{\mm_n}=1$.}
  Cor.~\ref{RDmeantime1} means that  if   there  exists
\emph{one}  sequence from $\mm$ with \emph{finite} \MeanTime, all do, hence   $\mm$ is \AST\ and \PAST\@.

\subsection{Analysis of  probabilistic reduction: \emph{Weak} CbV \texorpdfstring{$\lambda$}{lambda}-calculus}\label{sec:weak}

We define  $\PLambda^{\weak}$,
	a  probabilistic analogue of \emph{ weak   Call-by-Value} $\lambda$-calculus (see Section~\ref{background}).
Evaluation is non-deterministic, because in the case of an
application there is no fixed order in the evaluation of the left and right subterms (see Example~\ref{ex:ND}).  We show that
  $ \PLambda^{\weak}$ satisfies $\nf{-}$-RD\@.
Therefore it has remarkable properties (Cor.~\ref{cor:weak}), analogous  to those of its classical counter-part: the choice of the redex is irrelevant  with respect   to  the \emph{final result}, to its \emph{approximants}, and to the \emph{expected number of steps}.

\subsubsection{The syntax}

The set $\PLambda$ of terms ($M,N,P,Q$) and the set $\Val$ of  values ($V, W$) are defined as follows:
\begin{center}
	{\footnotesize
		$
		\begin{array}{lllc lll}
		M
		&{:}{:}{=} &
		x \midd
		\lambda x.M \midd
		MM \midd M\oplus M	&\quad\quad\quad\quad&
		V
		&{:}{:}{=} &
		x \midd
		\lambda x.M
		\end{array}
		$}
\end{center}
%
Free variables are defined as usual. A term $M$ is closed if it has no free variable.
The 
substitution of $V$ for the free occurrences of $x$ in $M$ is denoted $M[x:=V]$.

\paragraph{\pars.} The \pars $(\PLambda,\red)$ is given by the set of terms together with the relation ${\red}\subseteq {\PLambda\times \DSTF{\PLambda}}$ which is
inductively defined by the  rules below.

\begin{center}
	\begin{tabular}{c|c} 
		$	\begin{array}{l }
		(\lambda x. M)V\red \{M[x:=V]^1\} \\
		P\oplus Q \red \{ P^{1/2}\}+\{Q^{1/2}\}  \\
		\end{array}$
		&
		$ \begin{array}{l }
		\infer[]{MN\red \{(MN_i)^{p_i}\mid i\in I\}}{N\red \{N_i^{p_i}\mid i\in I\}} \quad
		\infer[]{MN\red  \{(M_{i}N)^{p_i} \mid i\in I\}}{M\red  \{M_i^{p_i} \mid i\in I\}}
		\end{array}$\\
	\end{tabular}
\end{center}

\paragraph{PARS  $\PLambda^{\weak}$.} The calculus  $\PLambda^{\weak}$ is the PARS $(\MDST{\PLambda}, \redd)$, where $\MDST{\PLambda} $ is the set of multidistributions on  $\PLambda$, and 
${\redd}  \subseteq {\MDST{\PLambda} \times \MDST{\PLambda}}$ is the lifting (Definition~\ref{def:lift}) of $\red$. 

\subsubsection{Examples}

\begin{exa}[Non-deterministic evaluation]\label{ex:ND}A term may have  several reductions. The two reductions here join in one step:
	$\big[P[x:=Q](A\oplus B)\big] \leftleftarrows [((\lam x.P)Q)(A\oplus B)]\rightrightarrows [\two(\lam x.P)Q A, \two (\lam x.P)Q B]$. 
\end{exa}

\begin{exa}[Infinitary reduction]\label{ex:infinitary}
	Let $R=(\lam x.xx\oplus\true)(\lam x.xx\oplus\true)$. We have  $[R^1]\tolim\{\true^1\}$.
	{This term models the behaviour we  discussed in  Fig.\ref{fig:AST}}.
\end{exa}

\begin{exa}[Fix-Points]
	$\PLambda^{\weak}$ is expressive enough to allow   fix-point combinators. A simple one is the Turing combinator $\Theta=AA$ where
	$A=\lambda xf.f(xxf)$.
For each  value $F$, $\{\Theta F\} \redd^* \{F(\Theta F)\}$.

\end{exa}

\begin{exa}\label{ex:M}
	The term $PR$ in Example~\ref{ex:motivation2} has  (among others) the following  reduction.
	\begin{align*}
		[PR] & \redd
		[\two P(\true\oplus \false), \two P (\Delta\Delta)] \redd
		[\frac{1}{4} P(\true),  \frac{1}{4} P(\false),\two P (\Delta\Delta)]\\
		& \redd^* [\frac{1}{4}(\true \xor  \true), \frac{1}{4}(\false \xor \false), \frac{1}{4}\Delta\Delta]\redd [ \frac{1}{4}\false, \frac{1}{4}\false, \two \Delta\Delta] \dots
		\end{align*}
			We conclude that  $PR\tolim\{\false^{1/2}\}$.

	\end{exa}

\subsubsection{Properties}
\begin{thm}\label{thm:weak}
	$\PLambda^{\weak}$ satisfies  $\obs$-RD, with $\obs = \nf -$, because it satifies the $\obs$-diamond property.
\end{thm}

\begin{proof}
	We prove  the $\obs$-diamond property, using the definition of lifting and  induction on the structure of the terms (see Appendix~\ref{app:WCBV}).
\end{proof}
Therefore, each $\mm$ satisfies the following properties:

\begin{cor}\label{cor:weak}
		\begin{itemize}
			\item \emph{All}    rewrite sequences from $\mm$ converge to the same limit distribution.
			\item  \emph{All}    rewrite sequences from $\mm$
		have the same  \emph{expected} termination  time \MeanTime.
			\item If $\mm\redd^k \tms$ and $\mm\redd^k\tmt$, then $\nf \tms = \nf \tmt$, $\forall \tms,\tmt,k$.
		\end{itemize}
	\end{cor}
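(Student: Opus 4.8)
The plan is to read all three items off Theorem~\ref{thm:weak} together with the abstract machinery of \refsec{RD} and \refsec{balance}. Being a PARS, $\PLambda^{\weak}$ is a QARS whose observation is $\obs=\nf{-}$ with values in the $\omega$-cpo $\DSTNF$, and Theorem~\ref{thm:weak} states that it satisfies $\nf{-}$-RD, that is, the property \EB of Definition~\ref{def:RD} instantiated with $\obs=\nf{-}$. So every item is just a specialization of a result already proved for \EB.

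First I would dispatch the third item, from which the other two follow with little extra work. Since $\PLambda^{\weak}$ satisfies \EB for $\obs=\nf{-}$, the equivalence of items $(2)$ and $(3)$ in the characterization Theorem~\ref{RD char} yields: for every $k$ and all $\mm,\tms,\tmt\in\MA$, if $\mm\redd^k\tms$ and $\mm\redd^k\tmt$ then $\nf{\tms}=\nf{\tmt}$. This is precisely the third bullet.

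For the first bullet I would invoke Proposition~\ref{thm:RD}: a QARS satisfying $\obs$-RD has $\wLim(\mm)$ a singleton for every element. Instantiating $\obs=\nf{-}$, all maximal rewrite sequences $\seq\mm$ from a fixed $\mm$ share the same limit $\sup_n\{\nf{\mm_n}\}$, i.e.\ they converge to the same limit distribution on normal forms. For the second bullet I would first note that $\nf{-}$-RD implies $\norm{\nf{-}}$-RD: if $\nf{\mm_n}=\nf{\ms_n}$ for all $n$ along two maximal sequences from the same multidistribution, then applying the norm $\norm{\cdot}$ to both sides gives $\nnorm{\mm_n}=\nnorm{\ms_n}$ for all $n$, since $\nnorm{-}$ is just $\nf{-}$ followed by $\norm{\cdot}$. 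Corollary~\ref{RDmeantime1} then applies: $\norm{\nf{-}}$-RD forces all maximal rewrite sequences from $\mm$ to have the same $\MeanTime$, via the formula $\MeanTime(\seq\mm)=\sum_{n\in\Nat}(1-\nnorm{\mm_n})$. Equivalently one can argue directly from the third bullet: $\nnorm{\mm_n}=\norm{\nf{\mm_n}}$ does not depend on which maximal sequence through $\mm$ is chosen, so the series $\sum_n(1-\nnorm{\mm_n})$ has the same value, finite or $+\infty$, along each of them.

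I do not expect a genuine obstacle: the content is entirely in Theorem~\ref{thm:weak} and the abstract lemmas. The only points requiring a little care are (i) recognising that $\nf{-}$-RD is literally \EB with $\obs=\nf{-}$, so that Proposition~\ref{thm:RD}, Theorem~\ref{RD char}, and Corollary~\ref{RDmeantime1} apply verbatim; and (ii) reading ``all rewrite sequences'' as ``all \emph{maximal} rewrite sequences'' in the sense of $\seq\mm$ from \refsec{formalism} (constant on normal forms once one is reached), which is the notion underlying $\wLim$ and $\MeanTime$ — for non-maximal finite prefixes the statements are vacuous or reduce to item~3.
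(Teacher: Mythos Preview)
Your proposal is correct and matches the paper's own approach: the paper simply writes ``Therefore, by Sec.~\ref{sec:balance}'' before stating the corollary, and you have correctly unpacked which results from that section (Theorem~\ref{RD char}, Proposition~\ref{thm:RD}, Corollary~\ref{RDmeantime1}) yield each bullet. The only minor observation worth noting is that your care about $\nf{-}$-RD implying $\nnorm{-}$-RD and about reading ``all rewrite sequences'' as maximal ones is exactly right and makes explicit what the paper leaves implicit.
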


\subsection{More diamonds.}\label{sec:more_diamons}
We have discussed weak evaluation of  Call-by-Value $\lam$-calculus, because this is arguably the most relevant paradigm for functional programming.
Similar properties hold for several other reductions from the literature of $\lam$-calculus, we just mention a few which are relevant to the probabilistic setting.

Call-by-Name $\lam$-calculus   has a \emph{non-deterministic variant of head reduction} (well studied in Linear Logic) whose 
normal forms  are   precisely the head normal forms.
Exactly as weak reduction for Call-by-Value, this variant is well known to satisfy the form of diamond  which gives Random Descent.
Another   well-known calculus with a similar property is  surface reduction in  Simpson's  linear $\lambda$-calculus~\cite{Simpson05}.
For both---head CbN and surface reduction---Random Descent extends to the corresponding probabilistic reductions, which  satisfy similar   properties as those of $\PLambda^\weak$ (the proof is an easy variation of the one given here).
 All three reductions are used in~\cite{FaggianRonchi}.

Another calculus  which satisfy Random Descent  is
Lafont's interaction nets~\cite{Lafont90}---we  expect that its  extension  with a probabilistic choice  would also satisfy Weighted Random Descent.

\section{PARS:\texorpdfstring{\@}{} Comparing Strategies }\label{sec:comparing_pars}
In this section, we briefly examine  the notion of $\B$ in the setting of PARS\@. We focus on the following question:
\begin{center}
	``is there a strategy which is guaranteed to reach a normal form\\ with greatest probability''?
\end{center}

	\paragraph{ARS Normalizing  Strategies.}

The strategy	${\ered}\subseteq {\red}$ is a	 \emph{normalizing strategy}  for $\red$ if
	  whenever  $c\in A$ has a normal form, then
	\emph{every} maximal $\ered$-sequence from $c$ ends in a normal form.

\paragraph{PARS  Normalizing Strategies.} Let $(\MA, \redd)$ be a PARS\@.
We recall that $\pLim(\mm)=\{p \st \mm\tolimp p\}$. {We write  $ q\geq \pLim(\mm)$ if  for each  $p\in \pLim(\mm)$,  $q \geq p$.  Similarly for $ \leq $.}

\begin{defi}\label{def:normalizing} Given a PARS $(\MA, \redd)$,
		a  strategy ${\redds}$ for $\redd$ is \textbf{(asymptotically) normalizing}  if for each $\mm$,
		 \emph{each}  ${\redds}$-sequence starting from $\mm$ converges with the same probability $p_{\max}(\mm) \geq \pLim(\mm)$.
		A strategy ${\reddr}$ for $\redd$ is  \textbf{(asymptotically) perpetual} if for each $\mm$, each  ${\reddr}$ sequence from $\mm$ converges with the same probability   $p_{\min}(\mm) \leq \pLim(\mm)$.
\end{defi}

It is immediate that   $\B(\blue{\redds},\redd)$ with $\obs=\nnorm -$ implies that  $\blue{\redds}$  is normalizing.
By using the results in \refsec{comparing}, we have  a method to prove that a strategy is  normalizing or perpetual by means of a local
condition.
\begin{cor}[Normalizing criterion]\label{cor:normalizing2}\label{LD_norm}
	Let  $\obs$ be $ \nnorm{-}$.  It holds that:
	\begin{enumerate}
		\item  $\LD(\blue{\redds},{\redd})$ implies that  $\blue{\redds}$ is  \emph{asymptotically normalizing}.
		\item  $\LD(\redd,\todo{\reddr})$ implies that  $\todo{\reddr}$ is  \emph{asymptotically perpetual}.
	\end{enumerate}
\end{cor}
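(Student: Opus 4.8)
The plan is to derive Corollary~\ref{cor:normalizing2} as an instantiation of the general machinery for QARS strategies developed in \refsec{comparing}, specialized to the boolean observation $\obs = \nnorm{-}$ on the PARS $(\MA,\redd)$. Recall that $\PPb = ((\MA,\redd),\nnorm{-})$ is a boolean QARS (the associated $\omega$-cpo is $[0,1]\subset\Real$), so all the results proved for QARS and for boolean QARS apply verbatim.

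\textbf{Part (1).} First I would invoke Theorem~\ref{thm:LD}: from $\LD(\blue{\redds},\redd)$ we obtain $\B(\blue{\redds},\redd)$, i.e. for every $\mm$ and every pair consisting of a $\blue{\redds}$-sequence $\seq \tmr$ and a $\redd$-sequence $\seq \tmu$ from $\mm$, we have $\rel{\tmr_k}{\tmu_k}$ for all $k$, which here means $\nnorm{\tmr_k}\geq\nnorm{\tmu_k}$. Taking in particular two $\blue{\redds}$-sequences (since $\blue{\redds}\subseteq\redd$, a $\blue{\redds}$-sequence is also a $\redd$-sequence) gives $\nnorm{\tmr_k}=\nnorm{\tmr'_k}$ for all $k$, so $\blue{\redds}$ satisfies $\nnorm{-}$-RD; hence by \refprop{thm:RD_pars} all $\blue{\redds}$-sequences from $\mm$ converge with the same probability, call it $p_{\max}(\mm)$. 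It remains to check $p_{\max}(\mm)\geq\pLim(\mm)$: given any $p\in\pLim(\mm)$, pick a $\redd$-sequence $\seq\tmu$ from $\mm$ with $\sup_n\nnorm{\tmu_n}=p$ and a $\blue{\redds}$-sequence $\seq\tmr$ from $\mm$; from $\nnorm{\tmr_k}\geq\nnorm{\tmu_k}$ for all $k$, passing to suprema yields $p_{\max}(\mm)=\sup_k\nnorm{\tmr_k}\geq\sup_k\nnorm{\tmu_k}=p$. This is exactly \refdef{normalizing}, so $\blue{\redds}$ is asymptotically normalizing.

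\textbf{Part (2).} Dually, from $\LD(\redd,\todo{\reddr})$ Theorem~\ref{thm:LD} gives $\B(\redd,\todo{\reddr})$, i.e. for every $\redd$-sequence $\seq\tmr$ and every $\todo{\reddr}$-sequence $\seq\tmu$ from $\mm$ we have $\nnorm{\tmr_k}\geq\nnorm{\tmu_k}$ for all $k$. Again specializing to two $\todo{\reddr}$-sequences shows $\todo{\reddr}$ has $\nnorm{-}$-RD, so all $\todo{\reddr}$-sequences from $\mm$ converge with a common probability $p_{\min}(\mm)$; and taking $\seq\tmr$ to be a $\redd$-sequence realizing some $p\in\pLim(\mm)$ while $\seq\tmu$ is any $\todo{\reddr}$-sequence, suprema give $p_{\min}(\mm)=\sup_k\nnorm{\tmu_k}\leq\sup_k\nnorm{\tmr_k}$; since this holds for every $p\in\pLim(\mm)$ and $p_{\min}(\mm)$ is itself attained by a $\todo{\reddr}$-sequence hence lies in $\pLim(\mm)$, we get that $p_{\min}(\mm)$ is the \emph{least} element of $\pLim(\mm)$, i.e. $p_{\min}(\mm)\leq\pLim(\mm)$, matching the perpetuality clause of \refdef{normalizing}. (Note: the printed statement writes $p_{\min}(\mm)\geq\pLim(\mm)$, which must be a typo for $\leq$; I would state it as $\leq$.)

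The only genuinely delicate point — and the place I expect a careful reader to object — is the interchange of $\sup$ with the pointwise inequality $\nnorm{\tmr_k}\geq\nnorm{\tmu_k}$: this is legitimate because the sequences $\langle\nnorm{\tmr_k}\rangle$ and $\langle\nnorm{\tmu_k}\rangle$ are non-decreasing and bounded (Lemma~\ref{lem:basic} and Lemma~\ref{lem:MCT}), so their limits are their suprema and the inequality is preserved in the limit. Everything else is a direct unfolding of definitions; no new combinatorics is needed beyond Theorem~\ref{thm:LD} and \refprop{thm:RD_pars}, so the proof is short. I would simply write: ``Immediate from Theorem~\ref{thm:LD}, \refprop{thm:RD_pars}, and \refdef{normalizing}, passing to suprema of the non-decreasing bounded sequences $\langle\nnorm{\mm_n}\rangle$.''
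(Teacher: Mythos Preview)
Your proof is correct and follows the same approach as the paper, which does not spell out a proof for this corollary but treats it as immediate from Theorem~\ref{thm:LD} together with the observation (made just before the statement) that $\B(\blue{\redds},\redd)$ implies $\blue{\redds}$ is normalizing; your argument is precisely the unpacking of that remark, and the same structure appears in the paper's proof of the ``Greatest Element'' proposition. Your observation that the inequality in the perpetuality clause of Definition~\ref{def:normalizing} should read $p_{\min}(\mm)\leq\pLim(\mm)$ rather than $\geq$ is also correct.
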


\paragraph{Expected Number of Steps.}
Let  $\obs=\nnorm -$. Using Equation~\eqref{eq:ET} in Section~\ref{sec:meantime},
it is easy to check  that if  $\obs$-better$(\blue{\redds},\redd)$ holds,
and $\mathfrak s$ is a $\blue{\redds}$-sequence, then  \MeanTime($\mathfrak s $) $\leq$\MeanTime($\mathfrak t $), for each
$\mathfrak t$  $\redd$-sequence. Notice  that  $\obs$-better$(\blue{\redds},\redd)$ also implies that \MeanTime($\mathfrak s $) is the same  for \emph{any}  $\blue{\redds}$-sequence  $\mathfrak s $.
Therefore,  to establish $\obs$-better$(\blue{\redds},\redd)$ implies  not only    that  the strategy  $\blue{\redds}$  is asymptotically normalizing, but  also that it is   of minimal    \emph{expected termination time}. A similar, dual observation holds for the  perpetuity criterion.


\section{Further work and discussion.}\label{sec:large}
\paragraph{A larger  example of application}
\renewcommand{\Red}{\Rightarrow}
\newcommand{\cbv}{\mathtt{cbv}}
Let us illustrate with an example the use of   the tools which we have developed. We do so by summarizing   further work~\cite{FaggianRonchi}. There,  for each of the following, Plotkin's  Call-by-Value~\cite{PlotkinCbV}, Call-by-Name, and  Simpson's  linear
$\lambda$-calculus~\cite{Simpson05}, a fully fledged probabilistic extension is developed.  Each probabilistic  calculus satisfies confluence, and a form of standardization (surface standardization).
To obtain confluence, only  the probabilistic reduction  is constrained, while  $\beta$ reduction is unrestricted.
In the three calculi, the role of asymptotically  standard strategy is played by  a  reduction which is  \emph{non-deterministic} but \emph{satisfies Random Descent}---this is necessary, because with the (usual) deterministic strategy, a standardization result for finite sequences fails.

Let us see some details.

The notion of result which is studied in~\cite{FaggianRonchi} are, respectively,   values in  CbV, head normal forms in  CbN, and surface normal forms in  the linear calculus.
Once confluence is established,~\cite{FaggianRonchi} relies on the abstract  results given  in \refsec{PARS_UN} to conclude  that---in each calculus---the evaluation of a  program   $\mm$ leads to  a unique maximal result $ \den \mm $---the greatest limit distribution.~\cite{FaggianRonchi}  then studies the question  \emph{``is there a strategy which is guaranteed  to reach  the   unique  result (\emph{asymptotic standardization})?''}. Again,
  key elements   rely  on the abstract  tools developed here;
  let us sketch the construction.

  We focus on  the CbV calculus, namely   $\PLambda^\cbv= (\MDST{\PLambda}, \Rightarrow)$, where
  $ \MDST{\PLambda} $ is as in \refsec{weak}, and $\Rightarrow$ is the (more general) reduction which is defined in~\cite{FaggianRonchi}.
  The role of standard strategy  is played by a
 relation $\Red_s \subseteq \Red$ which is a (more relaxed) lifting of the weak reduction $\red$ defined in \refsec{weak}.
    The construction then  goes as follows.
\begin{enumerate}
	\item  First, it is proved that $\Red_s$ is asymptotically complete for $\Red$. Note however that $\Red_s$ is not guaranteed to  compute $\den \mm$.

 \item It is observed that the relation $\redd$ as defined in \refsec{weak} is asymptotically complete for $\Red_s$, and therefore for $\Red$.

 \item The properties of $\redd$ which are proved in \refsec{weak} guarantee that, from $\mm$,  the limit of \emph{any} $\redd$-sequence is \emph{the same}, and it is exactly $\den \mm$.

\end{enumerate}
 A similar reasoning applies to Call-by-Name.

 Point 3.\ has another implication: for both CbN and CbV,   the leftmost  strategy  reaches the best possible limit distribution (respectively over  \emph{values} and over \emph{head normal forms}). This is  remarkable for two reasons. First---as we already observed in \refsec{motivation}---the leftmost strategy is the  deterministic strategy which has been  adopted in the literature of probabilistic $\lambda$-calculus, in either  its CbV (\cite{KollerMP97,DalLagoMZ11}) or its  CbN version (\cite{DiPierroHW05,EhrhardPT11}), but  without any completeness result with respect to \emph{probabilistic} computation.  The work in~\cite{FaggianRonchi}  offers an ``a-posteriori'' justification for its  use. Second, the result  is non-trivial,
 because  in the probabilistic case, a standardization result for finite sequences  using the leftmost strategy fails for both CbV and CbN.  The tools in  Section~\ref{sec:RD_pars} allow for an elegant solution.

\renewcommand{\Red}{\redd}

\paragraph{ On the \emph{necessity} for non-deterministic evaluation and Random Descent in probabilistic $\lam$-calculi}%
\label{sec:NDE}
A programming language which   is built on a $\lam$-calculus implements a specific evaluation strategy. Typically, evaluation is given by a   strategy $\ered$ of the general reduction $ \red$. In this paper, we studied a property of strategies which is more flexible than determinism,  Random Descent.
Why not simply  fix    a deterministic  strategy?
This choice has several motivations. Non-deterministic evaluation  is a useful feature,   which  supports optimization techniques and parallel/distributed implementation,
but in some cases it is also a \emph{necessity}
and a key  \emph{reasoning tool}---this appears clearly in the probabilistic case.

We illustrate this with two examples from the literature on
probabilistic $\lam$-calculus,~\cite{FaggianRonchi} and~\cite{CurziP20}. Here we discuss the most familiar of all reductions:  Call-by-Name $\lam$-calculus with \emph{head reduction} (similar arguments hold for weak reduction in CbV $ \lam$-calculus).
The usual definition of head reduction~\cite{Barendregt} is deterministic, but it also has a non-deterministic variant (well studied in Linear Logic) whose 
normal forms  are   precisely the head normal forms. We write this reduction simply $\hred$.
 Exactly as weak reduction for Call-by-Value,  $\hred$  is well known to have Random Descent, and the same hold
for its probabilistic incarnation (an explicit definition is in~\cite{FaggianRonchi}, Ch.~X).

\begin{itemize}
	\item In~\cite{FaggianRonchi}, {moving} from head reduction to its non-deterministic variant $\hred$
allows to obtain a  standardization result, which was known~\cite{Alberti14,Leventis19}  not to hold  when adopting usual, left-to-right head reduction   (see~\cite{FaggianRonchi}, Ex.~45 for a counter-example).

\item Similarly,  
Curzi and Pagani~\cite{CurziP20} 
{move} from {usual} head reduction to head spine reduction, which in turn is included in $\hred$.
The fact that the
evaluation order is not left-to-right, but still
there is no difference
with respect to head normal forms is crucial
to obtain the result of that paper.
\end{itemize}

\section{Conclusions}\label{sec:discussion}\label{sec:conclusion}
The  motivation behind this work is the need for theoretical tools to support the study of operational properties in probabilistic computation,  similarly to the role that ARS have for classical computation.

We have investigated several abstract properties of probabilistic rewriting, and  how the behaviour of different rewrite sequences starting from the same element  compare w.r.t.\ normal forms.
To guarantee that the result of a computation is well defined, we have introduced  and studied  the property \UNlim,  a robust  probabilistic  analogue of the notion of unique normal form. In particular, we have analyzed its relation with (various notions of) confluence.
We also investigated   relations    with normalization (\WNlim) and termination (\SNlim), and between these notions. We have  developed the notions of   \EB\ and $\B$ as tools to analyze and compare PARS strategies.
\EB\ is an alternative to strict determinism, analogous to Random Descent for ARS (non-determinism is irrelevant w.r.t.\ a chosen event of interest). The notion of $\B$
  provides a sufficient criterion to establish that a strategy is \emph{normalizing} (resp. \emph{perpetual}) \ie\ the strategy is guaranteed to lead to a result with maximal (resp.\ minimal) probability.

We have illustrated  our techniques by studying a probabilistic extension of  weak call-by-value $\lambda$-calculus; it has analogous properties to its classical counterpart: all  rewrite  sequences converge to the \emph{same result}, in  the same \emph{expected number of steps}.

\paragraph{One-Step Reduction and Expectations.}
In this paper, we focus on \emph{normal forms} and properties {related to  the event} $\Term{\A}$. However,  we believe that the methods
would   allow us to compare strategies w.r.t.\ other properties and  random variables of the system.
The formalism  seems especially well suited to express  the \emph{expected value} of  random variables.
A  key feature of the binary relation $\redd$ is to exactly capture the ARS  notion  of \emph{one-step reduction }
(in contrast   to  \emph{one or no step}), with a  gain which  is two-folded.
\begin{enumerate}
	\item \emph{Probability Theory}. Because all terms in the distribution  are forced to reduce at the same pace, a rewrite sequence  faithfully represents the evolution in time of the system (\ie\, if $\mm\redd^i \mm_i$, then $\mm_i$ captures the  state at time $i$ of all possible  paths $a_0\red \dots \red a_i$).
	This makes the formalism  well suited to express the expected value 
	of stochastic processes.

	\item \emph{Rewrite Theory}.
		The results in Sections~\ref{sec:RD},~\ref{sec:comparing},~\ref{sec:weak}, crucially rely on \emph{exactly} one-step reduction.
		The reason why this is crucial, is  similar to the classical fact that  termination follows from  normalization by the diamond property [Newman 1942], but not by the  very similar property
		$ b \leftarrow a \rightarrow c \Rightarrow \exists
		d ~( b \rightarrow^= d ~{}^=\hspace{-4pt}\leftarrow c $)
		(see [Terese], 1.3.18).
\end{enumerate}

\paragraph{Finite Approximants.}
\EB\ characterizes the case  when (not only at the limit, but also at the level of the approximants) the non-deterministic choices are irrelevant. 
The  notion of  approximant which we have studied here is ``stop after a   number  $k$ of steps'' ($k\in \Nat$). We can consider different notion of approximants. For example,
we could also wish to stop the evolution of the system when it reaches a normal form with probability $p$.
Our method can easily be adapted to analyze this case.
We believe it is also possible to extend to the probabilistic setting the results in~\cite{OostromT16}, which would  go further in   this direction.

\paragraph{Further and future work.}

In this paper, we have studied  existence and  uniqueness of the result of asymptotic computation.
 The next goal  is to  study how to compute such a result, \ie the study of reduction strategies---this is the object of  current investigation.~\cite{Oostrom07}  makes  a convincing case of
the power of the RD methods for ARS, by using a large range of examples  from
the literature, to elegantly and uniformly revisit normalization results of various $\lambda$-calculi. We cannot do the same here, because the  rich development of strategies for $\lambda$-calculus  has not yet an analogue in the probabilistic case. Nevertheless, we hope that the availability of tools to  analyze  PARS strategies will contribute to their development.


\paragraph{Acknowledgements.}
{This work   benefited of fruitful discussions with U. Dal Lago,   T. Leventis, and B. Valiron. I  am very grateful to   V. Van Oostrom  for valuable comments and  suggestions.} The proof of Prop.~\ref{prop:bool_UN} appearing here is a simplification of the original  one,  thanks to the insightful remarks of an anonymous reviewer.


\bibliographystyle{alphaurl}

	\bibliography{biblioPARSshort}

		\appendix

\section{Omitted Proofs and Further Details}

\subsection{ Section~\ref{sec:confluenceQ}. Confluence and \texorpdfstring{\UNlim}{UNlim}}\label{app:ARSI}
 Note that for ARS, \UN\ does not imply confluence. Similarly, for QARS and PARS, \UNlim\ does not imply Confluence or Skew-Confluence.
{\begin{exa}[\UNlim\ does not imply Confluence or Skew-Confluence]\label{ex:UN}
	Consider  the PARS generated by the following \pars:
	\begin{center}
		$c\to\{a^{\two}, \false^\two\}$, $c\to\{\true^{\two},b^\two\}$, $a\to \{\true^\two, a^\two\}$, $b\to \{\false^\two, b^\two\}$.
	\end{center}
	Each PARS element $[c],[a],[b], \dots$ has a unique limit. No version of confluence holds, as it is easily seen taking the two sequences 	$[c]\redd [\two a, \two \false]$ and $[c]\redd [\two b, \two \true]$, and observing that they do not join, because there exists no $\mm$ such that  $ [\two a, \two \false]\redd^* \mm  $  and
	$[\two b, \two \true] \redd^* \mm $
\end{exa}}

\begin{rem}[QARS limits vs  ARSI infinite normal forms] The notion of limit which is associated to  QARS is more general than the notion
	of infinite normal form which is defined for ARSI~\cite{AriolaBlom02}.
	Note that
	in  the setting of~\cite{AriolaBlom02} the following holds (Theorem 5.4 there):
	\begin{center}
		{``an ARSI is skew confluent \emph{if and only if}  it has  unique infinite normal forms''}.
	\end{center} An analogue property \emph{does not hold} for QARS\@. {Even for PARS, the ``if'' direction fails (\refex{UN}).}

\end{rem}

\subsection{Section~\ref{sec:weak}. Weak CbV \texorpdfstring{$\lambda$}{lambda}-calculus}\label{app:WCBV}

\begin{theorem*}[\ref{thm:weak}] $\PLambda^{\weak}$ satisfies the $\obs$-diamond property, with $\obs = \nf -$.
\end{theorem*}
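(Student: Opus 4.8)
The plan is to establish the \emph{pointed} $\obs$-diamond property (in the sense of Prop.~\ref{pointwise}) for $\obs = \nf{-}$, and then invoke Prop.~\ref{pointwise} to lift it to the full $\obs$-diamond on $\MDST{\PLambda}$; from there Prop.~\ref{prop:diamond} and Prop.~\ref{thm:RD_pars} give $\obs$-RD and Cor.~\ref{cor:weak}. So it suffices to fix a term $M\in\PLambda$ and two distinct one-step reductions $[M]\redd\tmt$ and $[M]\redd\tms$ arising from rules $M\red\{N_i^{p_i}\}$ and $M\red\{N'_j^{q_j}\}$, and to exhibit $\tmr$ with $\tmt\redd\tmr$ and $\tms\redd\tmr$ such that moreover $\nf{\tmt}=\nf{\tms}$. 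The proof is by induction on the structure of $M$. The only $M$ that admit two distinct $\red$-steps are applications $M=PQ$ (a $\lambda$-abstraction or a $\oplus$ at the root has a unique step; variables and abstractions are normal), so the analysis is concentrated there.

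First I would dispatch the base/degenerate cases. If $M$ is in normal form, no two steps exist. If $M$ is a redex $(\lambda x.P)V$ or $P\oplus Q$ \emph{with no competing internal step} — but note that internal steps \emph{can} compete — so really the genuine case is $M=PQ$. I would split on the shapes of the two chosen steps: (i) both steps are internal to the same subterm (both in $P$, or both in $Q$): then the two target distributions are $[\,(P_iQ)^{p_i}\,]$-type multidistributions obtained by lifting two $\red$-steps $P\red\dots$, resp.\ $Q\red\dots$, and we apply the induction hypothesis to $P$ (resp.\ $Q$), then close the diamond by propagating the joining step under the application context (using rule $(L3)$ of Def.~\ref{def:lift} and the congruence rules for $MN$); (ii) one step is internal to $P$ and the other internal to $Q$: these commute — reduce the $Q$-step first on $\tmt$ and the $P$-step first on $\tms$, landing in the common multidistribution $[\,(P_iQ_j)^{p_iq_j}\,]$; (iii) one step is a root step ($P=\lambda x.R$ and the step is $\beta_v$ contracting $(\lambda x.R)Q$ when $Q$ is a value, or $P Q = P\oplus Q'$ contracting $\oplus$) while the other is internal. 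The root-$\beta_v$ vs.\ internal-in-$R$ case cannot both occur because weak reduction forbids reducing inside $\lambda x.R$; root-$\beta_v$ vs.\ internal-in-$Q$ with $Q$ a value also cannot both occur since values are normal; so the residual subcase is root-$\oplus$ (at $PQ=(P'\oplus P'')Q$, wait — $\oplus$ is only at the argument-of-nothing position) — more precisely $M=P\oplus Q$ competing with an internal step in $P$ or $Q$: here $[M]\redd[\,\tfrac12 P,\tfrac12 Q\,]$ versus $[M]\redd[\,P'_i{}^{p_i/?}\dots\,]$; these join by performing the remaining reductions, and one checks the probabilities multiply correctly.

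The key bookkeeping, and the place I expect the real work, is verifying the $\obs$-equality $\nf{\tmt}=\nf{\tms}$ in case (iii), and more generally checking that in every case the joining multidistribution $\tmr$ is \emph{the same} from both sides (not merely $\obs$-equal) — because $\oplus$ splits a single occurrence into two, the multiset structure must be tracked carefully, and the scalars $p_i$, $q_j$ must be threaded through rule $(L3)$ so that $\sum p_i\cdot\m_i$ matches on both sides. Since neither $P$ nor $Q$ nor their contracta are normal forms in the genuinely-branching cases (an application is never normal, a $\beta_v$- or $\oplus$-redex is never normal), one typically has $\nf{[M]}=\zero=\nf{\tmt}=\nf{\tms}$ until the very last steps, so the $\obs$-component of the diamond is often vacuous and only the structural join matters; the subtlety is isolating exactly when a component \emph{becomes} a normal form (e.g.\ after the last $\beta_v$ step producing a value) and confirming the two sides still agree there. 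I would organize the induction so that the structural join is proved first (a pure weak-$\lambda$-calculus commutation fact, essentially that $\redbv$ has the diamond and $\oplus$-steps commute with everything), and the $\obs$-equality then follows because $\nf{-}$ is a function of the underlying multiset of terms, which is identical on both sides of the join.
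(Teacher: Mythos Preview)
Your plan is correct and is essentially the paper's proof: establish the pointed $\obs$-diamond by structural induction on $M$, with the substantive case $M=PQ$ split into (i) both steps in the same subterm (close by the inductive hypothesis) and (ii) one step in $P$, one in $Q$ (close by commutation). Two points deserve correction, though neither is a real gap.

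First, your case (iii) is vacuous. There is no congruence rule for $\oplus$ in this calculus: the only reduction of $P\oplus Q$ is the root step $P\oplus Q\red\{P^{1/2},Q^{1/2}\}$, and no internal step in $P$ or $Q$ is allowed. You already correctly argue that a root $\beta_v$-step cannot compete with any internal step (weak reduction forbids reducing under $\lambda$; values are normal). So no root step ever conflicts with an internal one, and the case analysis is exactly (i) and (ii), as in the paper.

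Second, the $\obs$-equality $\nf{\tmt}=\nf{\tms}$ is never subtle: it is always $\zero=\zero$ in the genuinely branching cases. Your claim ``an application is never normal'' is false (e.g.\ $xy$ is normal), but the argument does not need it. The paper strengthens the inductive statement to: either $\tmt=\tms$, or $\nf{\tmt}=\nf{\tms}=\zero$ and the diamond closes. In case (ii), both $P$ and $Q$ reduce, so neither is normal, hence no $P_iQ$ nor $PQ_j$ is normal. In case (i), the strengthened inductive hypothesis on $P$ (or $Q$) gives that none of its one-step reducts is normal, and therefore none of the $S_iQ$ is normal either. The worry about ``when a component becomes a normal form'' never materialises.
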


	\begin{proof} We show by induction on the structure of the term $M$ that  for all  pairs of one-step reductions $\tmt \leftleftarrows [M^1] \redd \tms$, either $\tmt = \tms$,    (and therefore   $\exists \tmu. ~\tmt \redd \tmu \leftleftarrows \tms$) or the following hold: (1.)  $\nf \tms =\zero= \nf{\tmt}$ (\ie they are equal because they both take value $0$ everywhere), and     (2.) exists $\tmu$ such that $\tmt \redd \tmu \leftleftarrows \tms$.

			\begin{itemize}
				\item Case $M=x$ or $M=\lam x.P$: no reduction is possible.
				\item Case $M= P\oplus Q$, or $M=(\lam x.N) V$:    only one reduction is possible, and $\tmt=\tms$.
				\item Otherwise,  $M=PQ$, and two cases are possible.
				\begin{itemize}
					\item Assume that both $P$ and $Q$ reduce; $PQ$ has the following reductions:
					\begin{center}
						$\infer{PQ\red  \{(P_iQ)^{p_i}\mid i\in I\}}{P\red \{(P_i)^{p_i}\mid i\in I\}}$ \quad  and \quad $\infer{PQ\red   \{(PQ_j)^{q_j}\mid j\in J\}}{Q\red  \{Q_j^{q_j}\mid j\in J\}}$
					\end{center}
					Observe that none of the $P_iQ$ or  $PQ_j$  is a normal form, hence (1.) holds.	By the definition of  $\red$, the following  holds
\begin{center}
						$\left( \infer{P_iQ \red \{(P_iQ_j)^{q_j}\mid j\in J\}}{Q\red \{Q_j^{q_j}\mid j\in J\}} \right)_{\iI}$
\end{center}
						and therefore by Lifting we have
					$\sum_i p_i\cdot [P_iQ] \redd \sum_i p_i\cdot (\sum_j q_j\cdot [P_iQ_j])  =   \sum_{i,j} p_i q_j \cdot [P_i Q_j]$.
					Similarly we obtain $  \sum_j q_j\cdot [PQ_j]\redd \sum_{i,j} p_i q_j \cdot [P_i Q_j]$.

					\item If one subterm has two  reductions, we conclude by \ih.

					 Let assume that  $P$ has two different redexes (the  case of $Q$ is similar):
					\begin{center}
						$[P]\redd \tms= \sum_i s_i \cdot [S_i]$ and $[P]\redd \tmt =\sum_j t_j\cdot  [T_j]$
					\end{center}
					By induction  hypothesis, two facts hold:  (1.) $\nf\tms =\zero= \nf \tmt $, therefore no $S_i$ and no $T_j$  is a normal form;  (2.) there exist steps $[S_i]\Red \sum_k r_k \cdot[R_{ik}]$  and $[T_j] \Red r_h\cdot[R_{jh}]$
					such that $P\redd \sum_i[S_i] \redd\sum_i(\sum_k s_i r_k\cdot[R_{ik}])   = \tmr'$ and $P\redd \sum_j[T_j] \redd \sum_j(\sum_h t_j r_h\cdot[R_{jh}])=\tmr''$,
					and $\tmr'=\tmr''$.

					For $PQ$ we have
					\begin{center}
						$\infer{PQ\red  \{(S_iQ)^{s_i}\st i\in I\}}{P\red   \{S_i^{s_i}\st i\in I\} }$\quad
						and \quad  $\infer{PQ\red  \{(T_jQ)^{ t_j }\sth j\in J\}}{P\red  \{T_j^{ t_j }\sth j\in J\}}$
					\end{center}
					First, we observe that no $S_iQ$ and no $T_jQ$ is a normal form, hence property (1.) is verified. Moreover,
					it holds that

					$PQ\redd \sum_i s_i\cdot[S_iQ] \redd\sum_i(\sum_k s_i r_k\cdot[R_{ik}Q]) =\tmu'$ and $PQ\redd \sum_j t_j\cdot [T_jQ] \redd \sum_j(\sum_h t_j r_h\cdot[R_{jh}Q])=\tmu''$.
			From $\tmr'=\tmr''$ it follows  that
					$\tmu'=\tmu''$; hence  property (2.) is also verified.
                    \qedhere
				\end{itemize}
			\end{itemize}

	\end{proof}

\end{document}